\renewcommand{\advantage}[2][n]{\textnormal{\textsf{Adv}}_{\mathrm{#1}}(#2)}
\renewcommand{\prob}[1]{\textnormal{\textsf{Pr}}[#1]}
\definecolor{myblue}{HTML}{0000CC}
\newtheorem{theorem}{\bf Theorem}
\newtheorem{myDef}{Definition}
\def\BibTeX{{\rm B\kern-.05em{\sc i\kern-.025em b}\kern-.08em
    T\kern-.1667em\lower.7ex\hbox{E}\kern-.125emX}}
\begin{document}
\title{SOCI\textsuperscript{+}: An Enhanced Toolkit for Secure Outsourced Computation on Integers}
\author{Bowen Zhao, \textit{Member}, \textit{IEEE}, Weiquan Deng, Xiaoguo Li, Ximeng Liu, \textit{Senior Member, IEEE}, Qingqi Pei, \textit{Senior Member, IEEE}, Robert H. Deng, \textit{Fellow, IEEE}
\thanks{Manuscript received x x, x; revised x x, x.}}

% \markboth{Journal of \LaTeX\ Class Files,~Vol.~x, No.~x, x~xx}%
% {How to Use the IEEEtran \LaTeX \ Templates}

\maketitle

\begin{abstract}
Secure outsourced computation is critical for cloud computing to safeguard data confidentiality and ensure data usability.
Recently, secure outsourced computation schemes following a twin-server architecture based on partially homomorphic cryptosystems have received increasing attention.  
The Secure Outsourced Computation on Integers (SOCI) \cite{zhao2022soci} toolkit is the state-of-the-art among these schemes which can perform secure computation on integers without requiring the costly bootstrapping operation as in fully homomorphic encryption; however, SOCI suffers from relatively large computation and communication overhead.  
In this paper, we propose SOCI\textsuperscript{+} which significantly improves the performance of SOCI. 
Specifically, SOCI\textsuperscript{+} employs a novel $(2,2)$-threshold Paillier cryptosystem with fast encryption and decryption as its cryptographic primitive, and supports a suite of efficient secure arithmetic computation on integers protocols, including a secure multiplication protocol (SMUL), a secure comparison protocol (SCMP), a secure sign bit-acquisition protocol (SSBA), and a secure division protocol (SDIV), all based on the $(2,2)$-threshold Paillier cryptosystem with fast encryption and decryption. 
In addition, SOCI\textsuperscript{+} incorporates an offline and online computation mechanism to further optimize its performance. 
We perform rigorous theoretical analysis to prove the correctness and security of SOCI\textsuperscript{+}. 
Compared with SOCI, our experimental evaluation shows that SOCI\textsuperscript{+} is up to 5.4 times more efficient in computation and 40\% less in communication overhead.
\end{abstract}

\begin{IEEEkeywords}
Secure outsourced computation; Paillier cryptosystem; threshold cryptosystem; homomorphic encryption; secure computing.
\end{IEEEkeywords}

\section{Introduction}
\IEEEPARstart{C}{loud} computing provides flexible and convenient services for data outsourced computation, but it is prone to leak outsourced data.
Users with limited computation and storage capabilities can outsource their data to the cloud server and perform efficient computations over outsourced data 
\cite{shan2018practical}.
However, the cloud server may intentionally or unintentionally steal and leak the outsourced data, leading to privacy concerns.
At present, numerous data breaches have occured over the world.
For example, Facebook exposed a large amount of user data online for a fortnight due to misconfiguration in the cloud \cite{facebook}.
In addition, according to the data breach chronology published by \cite{Data_Breach_Chronology}, from 2005 to 2022, there have been about 20,000 instances of data breaches  in the United States, affecting approximately two billion records.

To prevent data leakages, users can encrypt data before outsourcing \cite{li2018privacy}.
However, performing computations over encrypted data (also known as ciphertext) is challenging, as conventional cryptosystems usually fail to enable computations over ciphertext directly. 
Secure outsourced computation is an effective manner balancing data security and data usability \cite{shan2018practical}, which enables computations on encrypted data directly.
Secure outsourced computation that ensures data security offers a promising computing paradigm for cloud computing, and it can be used in many fields, such as privacy-preserving machine learning training \cite{wang2020outsourced} and privacy-preserving evolutionary computation \cite{zhao2022evolution}. 

Homomorphic cryptosystems enable secure outsourced computation as their features achieving addition, multiplication, or both of addition and multiplication over ciphertext.
Unfortunately, secure outsourced computation based on homomorphic cryptosystems still suffers from several challenges. 
Secure outsourced computation solely based on homomorphic cryptosystems is challenging to achieve nonlinear operations (e.g., comparison) \cite{zhao2022soci} and obtain the intermediate result. In certain scenarios, it is necessary to obtain the intermediate result, such as privacy-preserving person re-identification \cite{zhao2023identifiable}.
Homomorphic cryptosystems, such as fully homomorphic encryption that supports addition and multiplication over ciphertext simultaneously, suffer from significant storage costs \cite{zhao2022soci}. 
Partially homomorphic encryption (PHE) supports addition or multiplication over ciphertext and has a less ciphertext size.
To mitigate the limitations imposed by restricted computation types and high storage costs, a combination of PHE and a twin-server architecture has emerged as a promising and increasingly popular paradigm.
Despite these advancements, secure outsourced computation solutions \cite{zhao2022soci,liu2016privacy} based on PHE and the twin-server architecture still bring slightly high computation costs and communication costs.

To tackle the above challenges, in this paper, we propose an enhanced toolkit for secure outsourced  computation on integers, named SOCI\textsuperscript{+}, which is inspired by SOCI (a toolkit for secure outsourced computation on integers) \cite{zhao2022soci}. Building on the Paillier cryptosystem with fast encryption and decryption \cite{ma2021optimized}, we propose a novel $(2,2)$-threshold Paillier cryptosystem to mitigate computation costs. Subsequently, we redesign all secure computation protocols proposed by SOCI \cite{zhao2022soci}. 
Additionally, considering the underlying features of secure outsourced computation protocols, which allow a multitude of pre-encryption processes, we divide the computations of these protocols into two phases: offline phase and online phase. In short, the contributions of this paper are three-fold.
\begin{itemize}
    \item \textbf{A novel (2, 2)-threshold Paillier cryptosystem (FastPaiTD).}
    For the first time, we propose a novel $(2,2)$-threshold Paillier cryptosystem called FastPaiTD, which is based on the Paillier cryptosystem with fast encryption and decryption \cite{ma2021optimized}.
    FastPaiTD is specially designed to seamlessly adapt to a twin-server architecture. 
    \item \textbf{An offline and online mechanism.}
    To expedite the computations of secure computation protocols, we introduce an offline and online mechanism. Specifically, the encryption of random numbers and some constants in secure computation protocols are computed in advance at the offline phase, while the online phase only perform operations except for these operations performed offline.
    \item \textbf{A suite of secure computation protocols with superior performance.}
    To support linear operations and nonlinear operations,
    inspired by SOCI \cite{zhao2022soci},
    we adopt the proposed FastPaiTD to design a suite secure computation protocols, including 
    a secure multiplication protocol (\texttt{SMUL}), a secure comparison protocol (\texttt{SCMP}), a secure sign bit-acquisition protocol (\texttt{SSBA}), and a secure division protocol (\texttt{SDIV}).
    Compared with SOCI \cite{zhao2022soci}, our proposed protocols can improve up to 5.4 times in computation efficiency and saves up to $40\%$ in communication overhead.
\end{itemize}

The rest of this paper is organized as follows.
We briefly review related work in Section \ref{Section_2}, and show the preliminaries for constructing SOCI\textsuperscript{+} in Section \ref{Section_3}.
The system model and threat model are given in Section \ref{Section_4}.
In Section \ref{Section_5}, we firstly present the proposed $(2, 2)$-threshold Paillier cryptosystem, along with the introduction of the offline and online mechanism to speed up computations. Subsequently, we elaborate on four secure computation protocols based on the proposed threshold Paillier cryptosystem.
The analysis of correctness and security is presented in Section \ref{Section_6}, and experimental evaluations are executed in Section \ref{Section_7}.
Finally, this paper is concluded in Section \ref{Section_8}.

\section{Related Work}\label{Section_2}
Secure outsourced computation is a powerful tool that allows users with limited storage and computation capabilities to outsource their data and computations over data to cloud in a secure manner. 
To enhance the security of data stored in the cloud, numerous solutions for secure outsourced computation have been proposed. 

Rahulamathavan \textit{et al.} \cite{rahulamathavan2013privacy} proposed a privacy-preserving approach for outsourcing support vector machine data classification to the cloud, which is based on a single-server architecture.
In their work \cite{rahulamathavan2013privacy}, the operations over encrypted data are rely on Paillier cryptosystem \cite{paillier} and secure two-party computation.
In the work \cite{zhu2013privacy}, a secure outsourced  approach for logistic regression in cloud is proposed, which is also based on Paillier cryptosystem and single-server architecture.
Despite the utilization of a powerful cloud server in the aforementioned work, the burden on the client is not truly alleviated due to  the execution of some interactions between client and server.

Twin-server architecture emerges as a more practical solution for secure outsourced computation, which significantly reduces the burden of client (or data user).
The schemes proposed in \cite{erkin2012generating}, \cite{elmehdwi2014secure}, \cite{chun2014outsourceable}, \cite{samanthula2014privacy} exploited twin-server architecture and Paillier cryptosystem to implement secure outsourced computation.
Wang \textit{et al.} \cite{wang2014tale} implemented a secure addition using the twin-server architecture and the ElGamal-based proxy re-encryption with multiplicatively homomorphism.
Feng \textit{et al.} \cite{feng2018privacy} leveraged Paillier cryptosystem and twin-server architecture, proposing a secure integer division protocol (SD) and a secure integer square root protocol (SSR).
Cui \textit{et al.} \cite{cui2020svknn} proposed a secure division computation protocol (SDC) that leverages random numbers to conceal the real value of dividend and divisor. 
However, in all of the aforementioned work, a server with a private key is introduced, leading to a single point of security failure.
Furthermore, the above work fails to access to intermediate result, making it unsuitable in some settings such as privacy-preserving person re-identification \cite{zhao2023identifiable}.

To mitigate the risk of a single point of security failure and enable access to intermediate result, extensive research has been conducted.
The work in \cite{liu2016privacy} and \cite{liu2016efficient_rational_numbers} proposed secure outsourced computation solutions by exploiting threshold Paillier cryptosystem.
Specifically, in the work \cite{liu2016privacy}, the private key of Paillier cryptosystem is split into two parts and distributed to two servers.
Consequently, the ciphertext can be decrypted collaboratively by the two servers.
In the work \cite{liu2016efficient_rational_numbers}, the private key is split into multiple partially private keys held by multiple servers, and a ciphertext can be decrypted by a threshold number of servers holding different partially private keys.
However, both \cite{liu2016privacy} and \cite{liu2016efficient_rational_numbers} introduce a trusted third party to distribute and manage the private keys.

To overcome all the aforementioned weaknesses, a toolkit for secure outsourced computation on integer named SOCI is proposed in \cite{zhao2022soci}, which is based on twin-server architecture and threshold Paillier cryptosystem.
In addition to supporting additive homomorphism and scalar multiplication homomorphism, SOCI enables secure outsourced computation for four types, including secure multiplication, secure comparison, secure sign bit-acquisition, and secure division.
Compared to the protocols in the integer calculation toolkit proposed by \cite{liu2016privacy}, the protocols of SOCI are more efficient. 
However, the computation costs and communication costs of SOCI are still relatively high.

\section{Preliminaries} \label{Section_3}
Ma \textit{et al.} \cite{ma2021optimized} proposed a Paillier cryptosystem with fast encryption and decryption.
In the rest of this paper, we refer to it as  FastPai for its fast encryption and decryption.
FastPai is comprised of the following components. 
$n(\kappa)$ and $l(\kappa)$ refer to the bit length of $N$ and private key, respectively.

\subsubsection{\textbf{N Generation} \texttt{(NGen)}}
FastPai calls \texttt{NGen} to generate the modulus $N$ for the Paillier cryptosystem. Specifically, \texttt{NGen} takes a security parameter $\kappa$ as input and outputs $(N,P,Q,p,q)$. 

The execution of \texttt{NGen} proceeds as follows. 
\begin{enumerate}[  (i)]
    \item
    Randomly select $\frac{l(\kappa)}{2}$-bit odd primes $p,q$.
    \item
    Randomly select ($\frac{n(\kappa) - l(\kappa)}{2}-1)$-bit odd integers $p',q'$. 
    \item
    Compute $P=2pp' + 1$ and $Q=2qq'+1$.
    \item
    If $p,q,p',q'$ are not co-prime, or if P or Q is not a prime, then go back to step (i).
    \item
    Compute $N = PQ$, and output $(N,P,Q,p,q)$.
\end{enumerate}

\subsubsection{\textbf{Key generation} (\texttt{KeyGen})}\label{KeyGen of optimized Paillier}
\texttt{KeyGen} generates a private key $sk$ and a public key $pk$ based on a given parameter $\kappa$.
It firstly calls \texttt{NGen} to obtain $(N,P,Q,p,q)$.
Subsequently, it computes $\alpha = pq$ and $\beta = (P-1)(Q-1)/(4pq)$.
Next, it computes $h = -y^{2\beta} (\!\!\! \mod N)$, where $y$ is a number chosen from $\mathbb{Z}_{N}^{*}$ uniformly and randomly.
Finally, it outputs $pk = (N,h)$ and $sk = \alpha$.
\subsubsection{\textbf{Encryption} \texttt{(Enc)}}\label{Enc of optimized Paillier}
\texttt{Enc} takes a message $m \in {\mathbb{Z}}_{N}$ and a public key $pk = (N,h)$ as input and outputs a ciphertext $c \in {\mathbb{Z}}_{{N}^{2}}$, which is defined as follows.
\begin{align}\label{encryption____________________________}
&c \leftarrow \texttt{Enc}(pk,m)={(1+N)}^m\cdot {(h^r \!\!\!\! \mod N)}^N  \!\!\!\! \mod N^2. %\nonumber%\\
\end{align}

In Eq.$(\ref{encryption____________________________})$, $r$ is a random number and satisfying $r \leftarrow {\{0,1\}}^{l(\kappa)}$.
In the rest of this paper, we use $\llbracket x \rrbracket $ to represent an encrypted $x$.

\subsubsection{\textbf{Decryption} \texttt{(Dec)}}\label{Dec of optimized Paillier}
\texttt{Dec} takes a ciphertext $c \in {\mathbb{Z}}_{{N}^{2}}$ and a private key $sk = \alpha$ as input and outputs a plaintext message $m \in {\mathbb{Z}}_{N}$,
which is defined as follows.
\begin{align}%\label{decrytion}
&m \leftarrow \texttt{Dec}(sk,c)\nonumber \\
&=(\frac{(c^{2\alpha}\ \!\!\!\! \mod\ N^2)-1}{N}\ \!\!\!\! \mod\ N)\cdot{(2\alpha)}^{-1}\ \!\!\!\! \mod\ N.%\nonumber%\\
\end{align}

\section{System model and threat model}\label{Section_4}

\begin{figure}[!ht]
    \centering
    \resizebox{0.35\textwidth}{!}{    \includegraphics{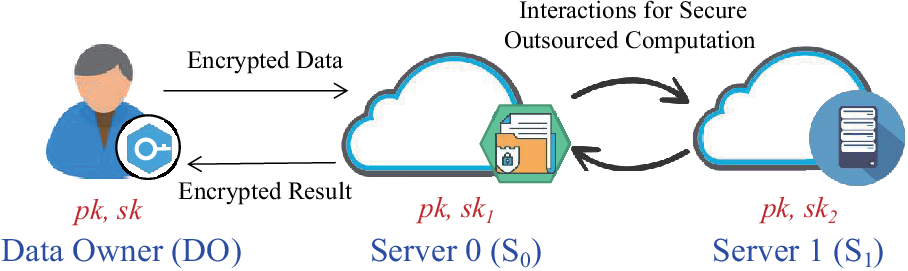}
    }
    \caption{SOCI\textsuperscript{+} system architecture}
    \label{fig:system architecture}
\end{figure}

\subsection{System Model}
As shown in Fig. \ref{fig:system architecture}, SOCI\textsuperscript{+} consists of a Data Owner (DO) and two non-colluding servers ($S_0$ and $S_1$).

\begin{itemize}
    \item \textbf{Data Owner (DO)}.
    DO is responsible for generating the private key and public key of FastPaiTD, and distributing the public key and the partially private keys $sk_1$ and $sk_2$ to $S_0$ and $S_1$, respectively.
    To ensure data security, DO encrypts data with $pk$ and outsources the encrypted data to $S_0$. Subsequently, DO outsources the computations on ciphertext to $S_0$ and $S_1$.
    \item \textbf{Servers}.
    $S_0$ is responsible for the storage and the management of the encrypted data uploaded by DO. 
    Additionally, $S_0$ interacts with $S_1$ to perform the proposed secure outsourced computation protocols.
    $S_1$ only provides computation services and collaborates with $S_0$ to perform the proposed secure outsourced computation protocols.
\end{itemize}

\subsection{Threat Model}
Following the previous work falling in twin-server architecture \cite{zhao2022soci}, \cite{nikolaenko2013privacy}, \cite{mohassel2017secureml}, \cite{chen2021secrec}, SOCI\textsuperscript{+} comprises three entities, DO, $S_0$ and $S_1$. 
DO is regarded as fully trusted. In SOCI\textsuperscript{+}, there is only one type of adversary, which involves $S_0$ and $S_1$, and the adversary attempts to obtain DO's data during execution of secure outsourced computations. 
Similar to the previous the solutions \cite{xie2021achieving,hu2023achieving}, we assume that $S_0$ and $S_1$ are non-colluding.
Moreover, we assume $S_0$ and $S_1$ both are \textit{curious-but-honest}, i.e., both $S_0$ and $S_1$ strictly adhere to the principle of not revealing additional information to each other, except for the necessary information required for performing secure outsourced computations.   

It is practical that assuming $S_0$ and $S_1$ are non-colluding, when $S_0$ and $S_1$ are two different and competitive cloud service providers.
The collusion between $S_0$ and $S_1$ means that they share the private information (e.g., the partially private keys and the random numbers) to each other.
Once $S_0$ leaks information to $S_1$, $S_1$ can leverage the law to punish $S_0$ and further occupies the market share of $S_0$, and vise versa.
For the interest of business, both $S_0$ and $S_1$ will not reveal its private information to each other.

\section{SOCI\textsuperscript{+} Design} \label{Section_5}
\subsection{$(2, 2)$-threshold Paillier cryptosystem (FastPaiTD)}\label{FastPaiTD}
Inspired by the work \cite{zhao2022soci} and \cite{liu2016privacy}, we propose FastPaiTD, a novel $(2,2)$-threshold Paillier cryptosystem, which is based on FastPai \cite{ma2021optimized}.
FastPaiTD encompasses the operations of \texttt{NGen}, \texttt{KeyGen}, \texttt{Enc}, and \texttt{Dec} from FastPai.

Previous works such as the PaillierTD \cite{lysyanskaya2001adaptive} adopted by SOCI \cite{zhao2022soci} and the PCPD in POCF \cite{liu2016privacy} split the private key (e.g., $sk = \lambda$) into two partially private keys $sk_1$ and $sk_2$. In contrast to these methods, we split FastPai's double private key (e.g., $2sk = 2\alpha$) into two partially private keys $sk_1$ and $sk_2$, s.t., $sk_1 + sk_2=0\!\! \mod 2\alpha$ and $sk_1 + sk_2 = 1 \!\! \mod N$. 
%It should be noted that we have modified the \textbf{keygen} in \textbf{FastPai}. 
In the output of \texttt{keygen} in FastPai, $N$ is an odd number that satisfies $\gcd(2\alpha, N)=1$. 
To hold $sk_1 + sk_2 = 0 \!\! \mod 2\alpha$ and $sk_1 + sk_2 = 1 \!\! \mod N$ at the same time, we can apply the Chinese remainder theorem \cite{pei1996chinese} to calculate $\delta = sk_1 + sk_2 = (2\alpha)\cdot ((2\alpha)^{-1} \!\! \mod N)\ \!\! \mod (2\alpha \cdot N$). We can randomly set $sk_1$ as a
$\sigma$-bit (e.g., $\sigma = 128$) number, and set $sk_2 = ( (2\alpha)^{-1}\!\! \mod N) \cdot (2\alpha) - sk_1 + \eta \cdot 2\alpha \cdot N$, where $\eta \ge 0$.  
The splitting operation of the private key should be performed in the \texttt{keygen} phase.

In addition to the fundamental components of FastPai, FastPaiTD incorporates \texttt{PDec} and \texttt{TDec} operations. These supplementary operations significantly enhance the flexibility of FastPaiTD, making it a practical tool for secure outsourced computation.

\textbf{Partial Decryption} \texttt{(PDec)}: This operation enables a party to partially decrypt the ciphertext without revealing the original message. 
\texttt{PDec} takes a ciphertext $c \in {\mathbb{Z}}_{{N}^{2}}$ and a partially private key $sk_i$ ($i \in \{1,2\}$) as input, and outputs a ciphertext $M_{i} \in {\mathbb{Z}}_{{N}^{2}}$. The partial decryption process is defined as follows.
\begin{align}
    M_{i}\leftarrow \texttt{PDec}(sk_{i},c)=c^{sk_{i} } \!\!\! \mod N^{2}.
\end{align}

\begin{figure*}[!ht]
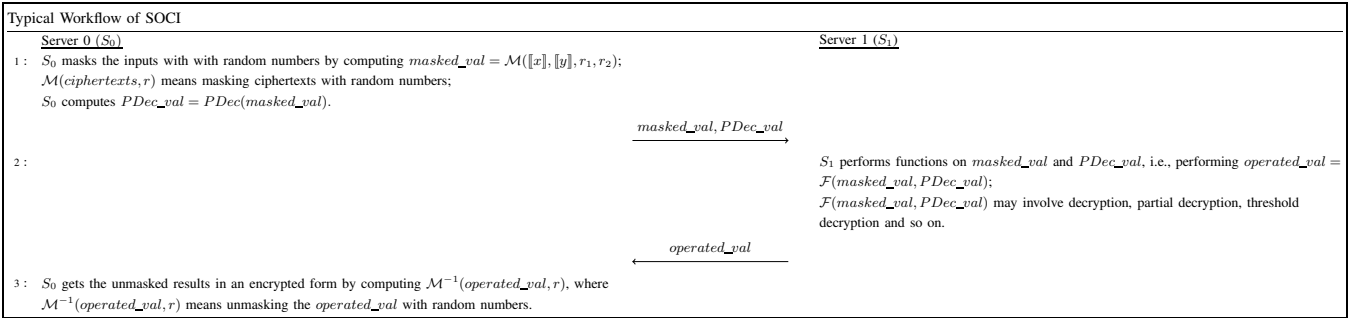

    % \centering
    \resizebox{0.60\textwidth}{!}{
    \begin{minipage}{1.0\textwidth}
    \procb[linenumbering, skipfirstln,mode=text]{Typical Workflow of SOCI}{
        \uline{Server 0 $(S_0)$}  \< \< \uline{Server 1 ($S_1$)} \\ 
        $S_0$ masks the inputs with with random numbers by computing $masked\_val = \mathcal{M}(\llbracket x \rrbracket, \llbracket y \rrbracket, r_1, r_2)$; \pcskipln \\
         % \pcskipln \\ 
        $\mathcal{M}(ciphertexts, r)$ means masking ciphertexts with random numbers;\pcskipln \\ 
         % \pcskipln \\
        $S_0$ computes $PDec\_val=PDec(masked\_val).$ \pcskipln \\ 
        \< \sendmessageright*{masked\_val,PDec\_val} \<  \\
        \< \< $S_1$ performs functions on $masked\_val$ and $PDec\_val$, i.e., performing $operated\_val = $\pcskipln \\ 
        % \< \<  \pcskipln \\
        \< \< $\mathcal{F}(masked\_val, PDec\_val)$; \pcskipln \\
        \< \< $\mathcal{F}(masked\_val, PDec\_val)$ may involve decryption, partial decryption, threshold\pcskipln \\
        \< \< decryption and so on. \pcskipln \\
        % \< \<  \pcskipln \\
        \< \sendmessageleft*{operated\_val}\\
        $S_0$ gets the unmasked results in an encrypted form by computing $\mathcal{M}^{-1}(operated\_val, r)$, where\pcskipln\\
         % \pcskipln\\
        $\mathcal{M}^{-1}(operated\_val, r)$ means unmasking the $operated\_val$ with random numbers.%\pcskipln\\
    }
    \end{minipage}
    }
    \caption{Typical Workflow of SOCI}
    \label{alg:Top Workflow of SOCI}
\end{figure*}

\textbf{Threshold Decryption} \texttt{(TDec)}:
This operation enables two authorized parties to collaboratively decrypt the ciphertext and obtain the original message without knowing the private key $sk$.
\texttt{TDec} takes the results of partial decryption $M_{1}$ and $M_{2}$ as input, and outputs a plaintext $m \in {\mathbb{Z}}_{N}$.
The threshold decryption process is defined  as follows.
\begin{align}
    m\leftarrow \texttt{TDec}(M_{1},M_{2})=\frac{(M_{1} \cdot M_{2}  \!\!\! \mod N^{2})-1}{N} \!\!\!\! \mod N.
\end{align}

\textbf{Remark}. Similar to the PaillierTD \cite{lysyanskaya2001adaptive} adopted by SOCI \cite{zhao2022soci}, 
our $(2,2)$-threshold Paillier cryptosystem (FastPaiTD) supports additive homomorphism and scalar-multiplication homomorphism:
\begin{itemize}
    \item 
    $\texttt{Enc}(pk,m_1)\cdot \texttt{Enc}(pk,m_2) = \texttt{Enc}(pk,m_1+m_2)$.
    \item
    ${\texttt{Enc}(pk,m)}^r = \texttt{Enc}(pk,r\cdot m)$, where $r$ is a constant. When $r = N-1$, it holds ${\texttt{Enc}(pk,m)}^r = \texttt{Enc}(pk,-m)$.
\end{itemize}

Besides, to enable FastPaiTD supporting the operations on negative integers, we perform a conversion on the negative number $m$ as $m = N - \left | m \right | $. 
Specifically, we take the message spaces $[0,\frac{N}{2}]$ and $[\frac{N}{2}+1, N-1]$ for non-negative numbers and negative numbers, respectively.

\subsection{Offline and online mechanism}
As shown in Fig. \ref{alg:Top Workflow of SOCI}, to hide the real values of inputs, SOCI masks the inputs with random numbers. 
To securely and correctly obtain the results, the protocols in SOCI involve a large amount of encryption for random numbers.
To avoid the expensive encryption overhead during executing the secure outsourced computation protocols, we propose an offline and online mechanism for SOCI\textsuperscript{+} as detailed below.
\subsubsection{Offline Phase}
In contrast to SOCI, we pre-encrypt the random numbers and some constants in the offline phase, such as $r_1$, $r_2$, $-r_1 \cdot r_2$, $0$ and $1$, thereby avoiding to encrypt them in the online phase, which alleviates the computation costs for the secure outsourced computation protocols.
Specifically, in the offline phase, we separately establish a tuple for $S_0$ and $S_1$, and denote them as \textit{$tuple_{S_0}$} and \textit{$tuple_{S_1}$}, respectively.
\textit{$tuple_{S_0}$} is consist of $r_1$, $r_2$, $\llbracket r_1 
\rrbracket$, $\llbracket r_2 \rrbracket$, $\llbracket -r_1 \cdot r_2 \rrbracket $, $r_3$, $r_4$, $\llbracket r_3+r_4 \rrbracket $, $\llbracket r_4 \rrbracket $, $\llbracket 0 \rrbracket $ and $\llbracket 1 \rrbracket $. The elements in \textit{$tuple_{S_0}$} satisfy the following properties.
\begin{itemize}
    \label{random_number_scope}
    \item
    $r_1$, $r_2$ $\leftarrow$ $\{0,1\}^\sigma$ (e.g., $\sigma = 128$).
    \item
    $r_3 \leftarrow {\{0,1\}}^{\sigma}\backslash \{0\}$ (e.g., $\sigma=128$). 
    \item
    $r_4$ is a random number, s.t., $r_4 \le \frac{N}{2} $ and $r_3+r_4>\frac{N}{2}$.
\end{itemize}

\textit{$tuple_{S_1}$} is consist of $\llbracket 0 \rrbracket $ and $\llbracket 1 \rrbracket $.
Compared to SOCI, $S_0$ and $S_1$ in SOCI\textsuperscript{+} have a simplified process where they only need to extract a ciphertext from their tuples and refresh it atfer usage when it comes to encryption of random numbers and some constants.

However, there is still a number needed to be encrypted in the online phase when we adopt the above mechanism in our SOCI\textsuperscript{+}.
To speed up the encryption, we can construct a pre-computation table in the offline phase.
Moreover, the \texttt{Enc} in FastPai has another equivalent form, as shown below.
\begin{align}
    c\leftarrow \texttt{Enc}(pk,m) = (1+m\cdot N)\cdot {(h^N \!\!\!\! \mod N^2)}^r \!\!\!\! \mod N^2.    
\end{align}

The \texttt{Enc} involves a constant $h^N \!\!\!\mod N^2$, hence we can pre-compute this constant to speed up the \texttt{Enc}. Besides, the \texttt{Enc} in FastPai involves a fixed-base modular exponentiation as below.
\begin{align}
    {(h^N\!\!\! \mod\ N^2)}^r \!\!\! \mod N^2.
\end{align}

Therefore, constructing a pre-computation table can optimize the efficiency of the \texttt{Enc}. Ma \textit{et al.} \cite{ma2021optimized} presented the method of constructing a pre-computation table, which is detailed as follows.

\begin{figure*}[!ht]
    % \centering
    \scalebox{0.60}{
        \begin{minipage}{1.0\textwidth}
        \procb[linenumbering, skipfirstln, mode=text]{Secure Multiplication (\texttt{SMUL}): \texttt{SMUL}($\llbracket x \rrbracket, \llbracket y \rrbracket) \rightarrow \llbracket x\cdot y \rrbracket$}{
            \uline{Server 0 $(S_0)$}  \< \< \uline{Server 1 ($S_1$)} \pcskipln \\
            %[][\hline] \pcskipln 
           \pclinecomment{\texttt{Offline Phase}} \pcskipln \\
            $S_0$ constructs a $tuple_{S_0}$, which is consist of $r_1$, $r_2$, $\llbracket r_1 \rrbracket$, $\llbracket r_2 \rrbracket$, $\llbracket -r_1 \cdot r_2 \rrbracket$, $r_3$, $r_4$, $\llbracket r_3+r_4 \rrbracket$,\pcskipln \\ 
             % \pcskipln \\
            $\llbracket r_4 \rrbracket$, $\llbracket 0 \rrbracket$ and $\llbracket 1 \rrbracket$. \pcskipln\\
            \< \< $S_1$ constructs a pre-computation table for speeding up the encryption that cannot pre-compute.\pcskipln \\
            % \< \<  \pcskipln \\ 
            % [][\hline] 
            \pclinecomment{\texttt{Online Phase}}  \\
            $S_0$ inputs $\llbracket x \rrbracket$ and $\llbracket y \rrbracket$, where $x,y \in [-2^l,2^l]$; \pcskipln \\
            $S_0$ extracts $r_1$, $r_2$, $\llbracket r_1 \rrbracket$, $\llbracket r_2 \rrbracket$, and $\llbracket -r_1\cdot r_2 \rrbracket$ from $tuple_{S_0}$;\pcskipln \\
             % \pcskipln \\ 
            $S_0$ refreshes $\llbracket r_1 \rrbracket $, $\llbracket r_2 \rrbracket $ and $\llbracket -r_1\cdot r_2 \rrbracket $ in $tuple_{S_0}$, using the method mentioned in \ref{refresh}; \pcskipln \\
            $S_0$ computes $X = \llbracket x \rrbracket \cdot \llbracket r_1 \rrbracket $ and $Y = \llbracket y \rrbracket \cdot \llbracket r_2 \rrbracket $; \pcskipln \\
            $S_0$ computes $C = X^{L}\cdot Y$ and $C_1 \leftarrow $\texttt{PDec}$(sk_1,C)$, where $L$ is a constant satisfying\pcskipln \\
            $L\ge 2^{\sigma +2}$. \pcskipln \\
            \< \sendmessageright*{C,C_1} \<  \\
            \< \< $S_1$ computes $C_2 \leftarrow$ \texttt{PDec} $(sk_2,C)$ and $(L\cdot (x+r_1)+y+r_2)\leftarrow $ \texttt{TDec}$(C_1,C_2)$; \pcskipln \\
            % \< \<  \pcskipln \\
            \< \< $S_1$ sets $(x+r_1) = \left \lfloor (L\cdot (x+r_1)+y+r_2)/L \right \rfloor$; \pcskipln \\
            % \< \< \pcskipln \\
            \< \< $S_1$ computes $(y+r_2) = (L\cdot (x+r_1)+y+r_2)\!\!\! \mod L$ and $\llbracket (x+r_1)\cdot (y+r_2) \rrbracket  \leftarrow \texttt{Enc}$ \pcskipln \\
            \< \< $(pk,(x+r_1)\cdot (y+r_2))$. \pcskipln \\
            \< \sendmessageleft*{\llbracket (x+r_1)\cdot (y+r_2) \rrbracket} \< \\
            $S_0$ computes $\llbracket -r_2\cdot x \rrbracket  = {\llbracket x \rrbracket }^{-r_2}$ and  $\llbracket -r_1\cdot y \rrbracket  = {\llbracket y \rrbracket }^{-r_1}$; \pcskipln \\
             % \pcskipln \\
            $S_0$ gets $\llbracket x\cdot y \rrbracket  = \llbracket (x+r_1)\cdot (y+r_2) \rrbracket  \cdot \llbracket -r_2x \rrbracket \cdot \llbracket -r_1y \rrbracket \cdot \llbracket -r_1\cdot r_2 \rrbracket $. 
            % \pcskipln \\
            % \key_A \gets Y^x 
        }
        \end{minipage}
    }
    \caption{Secure Multiplication (\texttt{SMUL})}
    \label{alg:algorithm-label-SMUL-FREED}
\end{figure*}

To compute $y = a^x$, where $a$ is a fixed base, we can pre-compute the powers of $a$ so that turn the modular exponentiation into modular multiplication since modular multiplication is more efficient than modular exponentiation. Specifically, we let $x =  {\textstyle \sum_{i=0}^{\left \lceil len/b \right \rceil - 1 }}x_i \cdot 2^{ib} $, where $len$ is the bit length of $x$ and $x_i$ is the $i$-th $b$-bit block. Note that the last block $x_{\left \lceil len/b \right \rceil - 1 }$ may be less than $b$ bit. We can calculate $y = a^x$ by the following equation.
\begin{align}
    y = a^x =a^{{\textstyle \sum_{i=0}^{\left \lceil len/b \right \rceil - 1 }}x_i \cdot 2^{ib}}= {\textstyle \prod_{i=0}^{\left \lceil len/b \right \rceil - 1 }} {(a^{2^{ib}})}^{x_i}.
\end{align}

Therefore, we can build a two-dimensional pre-computation table with $\left \lceil len/b \right \rceil$ rows and $2^b$ columns, and the index of rows and columns start from 0. The element in row i and column j is ${(a^{2^{ib}})}^{j}$, where $i \in [0,\left \lceil len/b \right \rceil-1]$ and $j \in [0,2^b-1]$. The table has $\left \lceil len/b \right \rceil \cdot 2^b$ elements and every element belongs to ${\mathbb{Z}}_{{N}^{2}}$, hence the table size is $\left \lceil len/b \right \rceil \cdot 2^b \cdot (2n)$ bits.

\subsubsection{Online Phase}
In SOCI\textsuperscript{+}, $S_0$ and $S_1$ construct the \texttt{$tuple_{S_0}$}, \textit{$tuple_{S_1}$} and pre-computation table in the offline phase, and utilize the \texttt{$tuple_{S_0}$}, \textit{$tuple_{S_1}$} and pre-computation table when perform secure outsourced computation protocols in the online phase.

During the execution of secure outsourced computation protocols, 
SOCI performs encryption operations on random numbers and some constants, whereas SOCI\textsuperscript{+} 
extracts the pre-computed encryption values from tuples and hence reduces a large amount of encryption operations.
In the proposed FastPaiTD, multiplying a ciphertext \texttt{Enc}$(pk,m)$ by a ciphertext \texttt{Enc}$(pk,0)$ produces a new ciphertext \texttt{Enc}$(pk,m+0)$. Although \texttt{Enc}$(pk,m)$ and \texttt{Enc}$(pk,m+0)$ are not identical, their decrypted results are identical. Consequently, after utilizing the ciphertexts in tuples, $S_0$ and $S_1$ refresh the ciphertexts in their tuples by multiplying $\llbracket 0 \rrbracket $. It should be noted that the $\llbracket 0 \rrbracket $ is also included in their tuples. By refreshing the ciphertext, even if the plaintext remains unchanged, the corresponding ciphertext changes.
This refresh process creates the illusion that all random numbers and constants are re-encrypted, providing security while reducing computation cost. \label{refresh}
Moreover, $S_0$ and $S_1$ can utilize the pre-computation table to expedite the encryption process when encrypting messages other than the aforementioned random numbers and constants.

\begin{figure*}[!ht]
    % \centering
    \scalebox{0.60}{
    \begin{minipage}{1.0\textwidth}
    \procb[linenumbering, skipfirstln, mode=text]{Secure Comparison (\texttt{SCMP}): \texttt{SCMP}($\llbracket x \rrbracket, \llbracket y \rrbracket) \rightarrow \llbracket \mu \rrbracket$}{
        \uline{Server 0 $(S_0)$}  \< \< \t\t\t\t \uline{Server 1 ($S_1$)}   \pcskipln \\
        % [][\hline] \pcskipln 
       \pclinecomment{\texttt{Offline Phase}} \pcskipln \\
        $S_0$ constructs a $tuple_{S_0}$, which is consist of $r_1$, $r_2$, $\llbracket r_1 \rrbracket$, $\llbracket r_2 \rrbracket$, $\llbracket -r_1 \cdot r_2 \rrbracket$, $r_3$, $r_4$,\pcskipln \\ 
         % \pcskipln \\
        $\llbracket r_3+r_4 \rrbracket$, $\llbracket r_4 \rrbracket$, $\llbracket 0 \rrbracket$ and $\llbracket 1 \rrbracket$. \pcskipln\\
        \< \< \t\t\t\t $S_1$ constructs a $tuple_{S_1}$, which is consist of $\llbracket 0 \rrbracket $ and $\llbracket 1 \rrbracket $. \pcskipln\\
        % \< \< . \pcskipln\\
        % [][\hline] 
        \pclinecomment{\texttt{Online Phase}}  \\
        $S_0$ inputs $\llbracket x \rrbracket$ and $\llbracket y \rrbracket$, where $x,y \in [-2^l,2^l]$; \pcskipln \\
        $S_0$ extracts $r_3$, $r_4$, $\llbracket r_3+r_4\rrbracket $ and $\llbracket r_4 \rrbracket $ from $tuple_{S_0}$, then labels them as $r_1$, $r_2$, $\llbracket r_1+r_2\rrbracket $\pcskipln\\
          % \pcskipln\\
        and $\llbracket r_2\rrbracket $, respectively; \pcskipln\\
        $S_0$ refreshes $\llbracket r_3+r_4\rrbracket $ and $\llbracket r_4\rrbracket $ in $tuple_{S_0}$, using the method mentioned in \ref{refresh};\pcskipln\\
         % \pcskipln\\
        $S_0$ randomly chooses a number $\pi$, where $\pi \leftarrow \{0,1\}$;\pcskipln\\
         % \pcskipln\\ 
        If $\pi = 0$, $S_0$ calculates $D={(\llbracket x \rrbracket \cdot {\llbracket y \rrbracket }^{N-1})}^{r_1}\cdot \llbracket r_1+r_2 \rrbracket $,
        otherwise, $S_0$ calculates $D=$\pcskipln\\
         % \pcskipln\\
        $ {(\llbracket y \rrbracket \cdot {\llbracket x\rrbracket }^{N-1})}^{r_1}\cdot \llbracket r_2 \rrbracket $; \pcskipln\\
        $S_0$ computes $D_1 \leftarrow \texttt{PDec}(sk_1,D)$. \pcskipln\\
        \< \sendmessageright*{D,D_1} \<  \\
        \< \< \t\t\t\t $S_1$ computes $D_2 \! \leftarrow \! \texttt{PDec}(sk_2,D)$ and $d \! \leftarrow \! \texttt{TDec}(D_1,D_2)$;\pcskipln\\
        % \<\<  \pcskipln\\
        \< \< \t\t\t\t $S_1$ extracts $\llbracket 0 \rrbracket $ and $\llbracket 1 \rrbracket $ from $tuple_{S_1}$, then refreshes $\llbracket 0 \rrbracket $ and $\llbracket 1 \rrbracket $ in $tuple_{S_1}$, using the  \pcskipln\\
        \<\< \t\t\t\t method mentioned in \ref{refresh}; \pcskipln\\
        % \< \<  \pcskipln\\
        \< \< \t\t\t\t If $d >\frac{N}{2}$, $S_1$ sets $\llbracket {\mu}_0 \rrbracket$ = $ \llbracket 0 \rrbracket $, otherwise, $S_1$ sets $\llbracket {\mu}_0 \rrbracket = \llbracket 1 \rrbracket $.  \pcskipln\\
        % \< \<   \pcskipln\\
        \< \sendmessageleft*{\llbracket {\mu}_0 \rrbracket} \< \\
        If $\pi = 0$, $S_0$ sets $\llbracket \mu \rrbracket  = \llbracket {\mu}_0 \rrbracket $; \pcskipln\\
        Conversely, if $\pi = 1$, $S_0$ extracts $\llbracket 1 \rrbracket $ from $tuple_{S_0}$, and then refreshes $\llbracket 1 \rrbracket $ in $tuple_{S_0}$, \pcskipln\\ 
         % \pcskipln\\
        using the method mentioned in  \ref{refresh}, finally $S_0$ sets $\llbracket \mu \rrbracket  = \llbracket 1 \rrbracket \cdot {\llbracket {\mu}_0\rrbracket }^{N-1}$.
        % \pcskipln\\
    }
    \end{minipage}
    }
    \caption{Secure Comparison (\texttt{SCMP})}
    \label{alg:algorithm-label-SCMP}
\end{figure*}

\subsection{Secure Multiplication Protocol (\texttt{SMUL})}

% In this subsection, we re-design the protocols (\texttt{SMUL}, \texttt{SCMP}, \texttt{SSBA}, and \texttt{SDIV}) in SOCI, and propose an enhanced toolkit for secure outsourced computation on integers named SOCI\textsuperscript{+}.

FREED \cite{zhao2022freed} proposed a \texttt{SMUL} protocol which is more efficient than the one in SOCI and with the same input and output as SOCI. 
Same as SOCI, FREED splits the private key of Paillier cryptosystem into two parts and achieves \texttt{SMUL} through the interaction between the two servers.
In this paper, we re-design the \texttt{SMUL} in FREED by incorporating the proposed FastPaiTD and the offline and online computation mechanism.

% To construct SOCI\textsuperscript{+}, we adopt the \texttt{SMUL} in FREED, the \texttt{SCMP}, \texttt{SSBA}, \texttt{SDIV} in SOCI, and re-design all the protocols by leveraging the proposed FastPaiTD and the offline and online computation mechanism. 
% Now, we introduce the protocols in our SOCI\textsuperscript{+}.
% The protocols in SOCI\textsuperscript{+} have $x,y \in [-2^l,2^l]$ for the \texttt{SMUL}, \texttt{SCMP} and \texttt{SSBA}, where $l$ denotes the bit length of a message (e.g. $l=32$).
% For \texttt{SDIV}, SOCI\textsuperscript{+} has $x \in [0,2^l]$ and $y \in (0,2^l]$.

Given $\llbracket x \rrbracket $ and $\llbracket y \rrbracket $ as input, where  $x,y \in [-2^l,2^l]$, $S_0$ and $S_1$ collaboratively compute $\llbracket x \cdot y \rrbracket  \leftarrow \texttt{SMUL}(\llbracket x \rrbracket, \llbracket y \rrbracket )$ as output.
It should be noted that the input is held by $S_0$ and only $S_0$ has the access to the output.
When describing \texttt{SMUL} in Fig. \ref{alg:algorithm-label-SMUL-FREED}, we omit the input and output for conciseness.
As shown in Fig. \ref{alg:algorithm-label-SMUL-FREED}, \texttt{SMUL} has two phase, i.e., offline phase and online phase.
In the offline phase, $S_0$ constructs a \textit{$tuple_{S_0}$} which is consist of $r_1$, $r_2$, $\llbracket r_1 
\rrbracket$, $\llbracket r_2 \rrbracket$, $\llbracket -r_1 \cdot r_2 \rrbracket $, $r_3$, $r_4$, $\llbracket r_3+r_4 \rrbracket $, $\llbracket r_4 \rrbracket $, $\llbracket 0 \rrbracket $ and $\llbracket 1 \rrbracket $ (how to choose these random numbers is elaborated in \ref{random_number_scope}).
Meanwhile, $S_1$ constructs a pre-computation table for speeding up the encryption that cannot pre-compute.
The online phase of \texttt{SMUL} comprises three steps as detailed below.

\begin{figure*}[!ht]
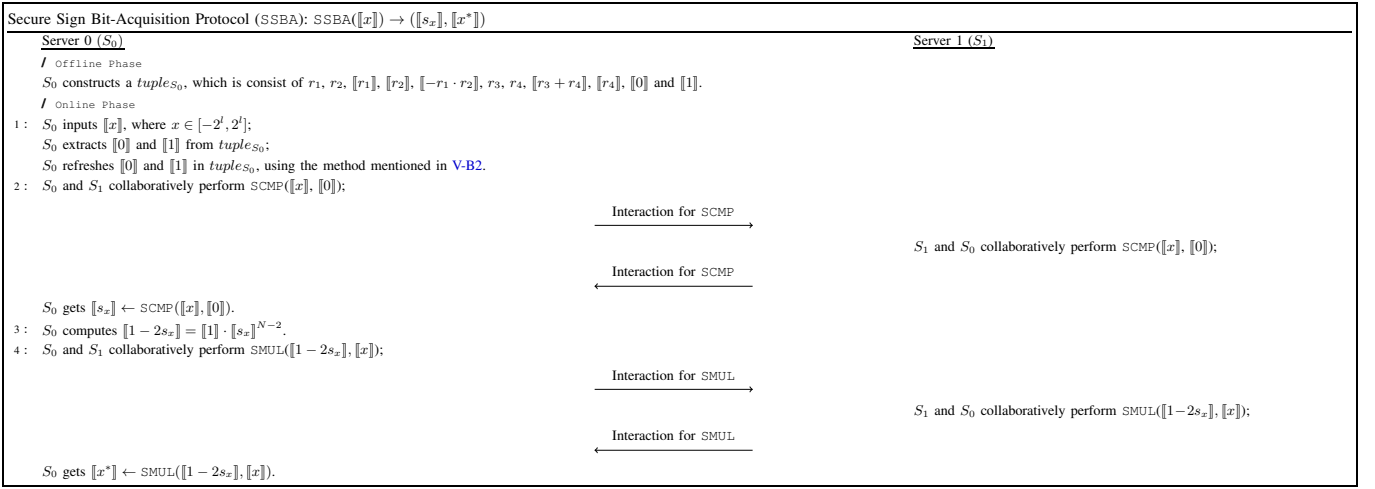

    % \centering
    \scalebox{0.60}{
    \begin{minipage}{1.0\textwidth}
    \procb[linenumbering, skipfirstln, mode=text]{Secure Sign Bit-Acquisition Protocol (\texttt{SSBA}): \texttt{SSBA}($\llbracket x \rrbracket) \rightarrow (\llbracket s_x \rrbracket,\llbracket x^* \rrbracket)$}{
        \uline{Server 0 $(S_0)$}  \< \< \t\t\t\t\t\t\t\t\t \uline{Server 1 ($S_1$)} \pcskipln \\
        % [][\hline \hline] \pcskipln \\
        % [][\hline] \pcskipln  
        % \pcintertext[dotted]{Some D i v i s i o n } \\
       \pclinecomment{\texttt{Offline Phase}} \pcskipln \\
        $S_0$ constructs a $tuple_{S_0}$, which is consist of $r_1$, $r_2$, $\llbracket r_1 \rrbracket$, $\llbracket r_2 \rrbracket$, $\llbracket -r_1 \cdot r_2 \rrbracket$, $r_3$, $r_4$, $\llbracket r_3+r_4 \rrbracket$, $\llbracket r_4 \rrbracket$, $\llbracket 0 \rrbracket$ and $\llbracket 1 \rrbracket$.\pcskipln \\ 
         % \pcskipln \\
         % \pcskipln\\
        % [][\hline] 
        \pclinecomment{\texttt{Online Phase}}  \\
        $S_0$ inputs $\llbracket x \rrbracket$, where $x \in [-2^l,2^l]$; \pcskipln \\
        $S_0$ extracts $\llbracket 0 \rrbracket $ and $\llbracket 1 \rrbracket $ from $tuple_{S_0}$; \pcskipln\\
        $S_0$ refreshes $\llbracket 0 \rrbracket $ and 
        $\llbracket 1 \rrbracket $ in $tuple_{S_0}$, using the method mentioned in \ref{refresh}.\\
        $S_0$ and $S_1$ collaboratively perform \texttt{SCMP}($\llbracket x\rrbracket$, $\llbracket 0\rrbracket$); \pcskipln \\
        \t\t\t\t\t\t\t\t\t\t\t\t\t\t\t\t\t\t\t\t\t\t\t\t\t\t\t\t\t\t\t\t\t\t\t\t\t\t \sendmessageright*{\text{Interaction for \texttt{SCMP}}} \<  \pcskipln \\
        \<\<\t\t\t\t\t\t\t\t\t $S_1$ and $S_0$ collaboratively perform \texttt{SCMP}($\llbracket x\rrbracket$, $\llbracket 0\rrbracket$); \t\t\t\t\t\t\t\t\t \pcskipln \\ 
        \t\t\t\t\t\t\t\t\t\t\t\t\t\t\t\t\t\t\t\t\t\t\t\t\t\t\t\t\t\t\t\t\t\t\t\t\t\t \sendmessageleft*{\text{Interaction for \texttt{SCMP}}} \< \pcskipln \\
        $S_0$ gets $\llbracket s_{x}\rrbracket  \leftarrow \texttt{SCMP}(\llbracket x\rrbracket ,\llbracket 0\rrbracket )$. \\
        $S_0$ computes $\llbracket 1-2s_{x}\rrbracket  = \llbracket 1\rrbracket \cdot {\llbracket s_{x}\rrbracket }^{N-2}$. \\
        $S_0$ and $S_1$ collaboratively perform \texttt{SMUL}($\llbracket 1-2s_{x}\rrbracket,\llbracket x\rrbracket$);\pcskipln \\
         % \pcskipln \\
        \t\t\t\t\t\t\t\t\t\t\t\t\t\t\t\t\t\t\t\t\t\t\t\t\t\t\t\t\t\t\t\t\t\t\t\t\t\t \sendmessageright*{\text{Interaction for \texttt{SMUL}}} \<  \pcskipln \\
        \<\<\t\t\t\t\t\t\t\t\t $S_1$ and $S_0$ collaboratively perform \texttt{SMUL}($\llbracket 1 \! - \! 2s_{x}\rrbracket, \llbracket x\rrbracket$); \pcskipln \\
        % \< \< \pcskipln \\ 
        \t\t\t\t\t\t\t\t\t\t\t\t\t\t\t\t\t\t\t\t\t\t\t\t\t\t\t\t\t\t\t\t\t\t\t\t\t\t \sendmessageleft*{\text{Interaction for \texttt{SMUL}}} \< \pcskipln \\
        $S_0$ gets $\llbracket x^{*}\rrbracket  \leftarrow  \texttt{SMUL}(\llbracket 1-2s_{x}\rrbracket ,\llbracket x \rrbracket )$.
    }
    \end{minipage}
    }
    \caption{Secure Sign Bit-Acquisition Protocol (\texttt{SSBA})}
    \label{alg:algorithm-label-SSBA}
\end{figure*}

\begin{enumerate}[(1)]
    \item 
    $S_0$ extracts $r_1$, $r_2$, $\llbracket r_1 \rrbracket $, $\llbracket 
    r_2 \rrbracket $, and $\llbracket -r_1\cdot r_2 \rrbracket $ from {$tuple_{S_0}$}.
    Subsequently, $S_0$ refreshes these ciphertexts in {$tuple_{S_0}$},
    and masks x and y through additive homomorphism.
    This is accomplished by computing $X = \llbracket x \rrbracket \cdot \llbracket r_1 \rrbracket $ and $Y = \llbracket y \rrbracket \cdot \llbracket r_2 \rrbracket $. 
    $S_0$ then computes $C = X^{L}\cdot Y$, where $L$ is a constant satisfying $L\ge 2^{\sigma +2}$. 
    After partially decrypting $C$ to obtain $C_1$  by calling \texttt{PDec}, $S_0$ sends $C$ and $C_1$ to $S_1$.
    \item 
    Upon receiving $C$ and $C_1$, $S_1$ calls \texttt{PDec} to partially decrypt $C$, resulting in $C_2$.
    Additionally, $S_1$ obtains $L\cdot (x+r_1)+y+r_2$ by calling \texttt{TDec} with $C_1$ and $C_2$.
    Subsequently, $S_1$ computes $\left \lfloor (L\cdot (x+r_1)+y+r_2)/L \right \rfloor $ and $(L\cdot (x+r_1)+y+r_2) \!\!\!  \mod  L$ to derive the values of  $(x+r_1)$ and $(y+r_2)$, respectively.  Finally, $S_1$ calls \texttt{Enc} to encrypt $(x+r_1)\cdot (y+r_2)$ and sends $\llbracket (x+r_1)\cdot (y+r_2) \rrbracket $ to $S_0$.
    \item 
    As having the knowledge of $\llbracket x \rrbracket $, $\llbracket y \rrbracket $, $r_1$, $r_2$ and $\llbracket -r_1\cdot r_2 \rrbracket $, $S_0$ can computes ${\llbracket x \rrbracket }^{-r_2}$ and ${\llbracket y \rrbracket }^{-r_1}$ to get $\llbracket -r_2\cdot x \rrbracket$ and $\llbracket -r_1\cdot y \rrbracket$,  respectively. Subsequently, $S_0$ computes $\llbracket (x+r_1)\cdot (y+r_2) \rrbracket  \cdot \llbracket -r_2x \rrbracket \cdot \llbracket -r_1y \rrbracket \cdot \llbracket -r_1\cdot r_2 \rrbracket $ to get $\llbracket x\cdot y \rrbracket$. 
\end{enumerate}

\subsection{Secure Comparison Protocol (\texttt{SCMP})}

In this subsection, we re-design the \texttt{SCMP} in SOCI by leveraging the proposed FastPaiTD and the offline and online computation mechanism.

Given $\llbracket x \rrbracket $ and $\llbracket y \rrbracket $ as input, where $x,y \in [-2^l,2^l]$, $S_0$ and $S_1$ collaboratively compute $\llbracket \mu \rrbracket  \leftarrow \texttt{SCMP}(\llbracket x \rrbracket, \llbracket y \rrbracket )$ as output. If $\mu=0$, $x\ge y$, otherwise, $x<y$.
It should be noted that the input is held by $S_0$ and only $S_0$ has the access to the output.
When describing \texttt{SCMP} in Fig. \ref{alg:algorithm-label-SCMP}, we omit the input and output for conciseness.
As shown in Fig. \ref{alg:algorithm-label-SCMP}, the proposed \texttt{SCMP} has offline phase and online phase.
In the offline phase, $S_0$ constructs a \textit{$tuple_{S_0}$}, which is consist of $r_1$, $r_2$, $\llbracket r_1 
\rrbracket$, $\llbracket r_2 \rrbracket$, $\llbracket -r_1 \cdot r_2 \rrbracket $, $r_3$, $r_4$, $\llbracket r_3+r_4 \rrbracket $, $\llbracket r_4 \rrbracket $, $\llbracket 0 \rrbracket $ and $\llbracket 1 \rrbracket $.
Meanwhile, $S_1$ constructs a \textit{$tuple_{S_1}$}, which is consist of $\llbracket 0 \rrbracket $ and $\llbracket 1 \rrbracket $.
The online phase of \texttt{SCMP} consists of three steps as detailed bellow.

\begin{figure*}[!ht]
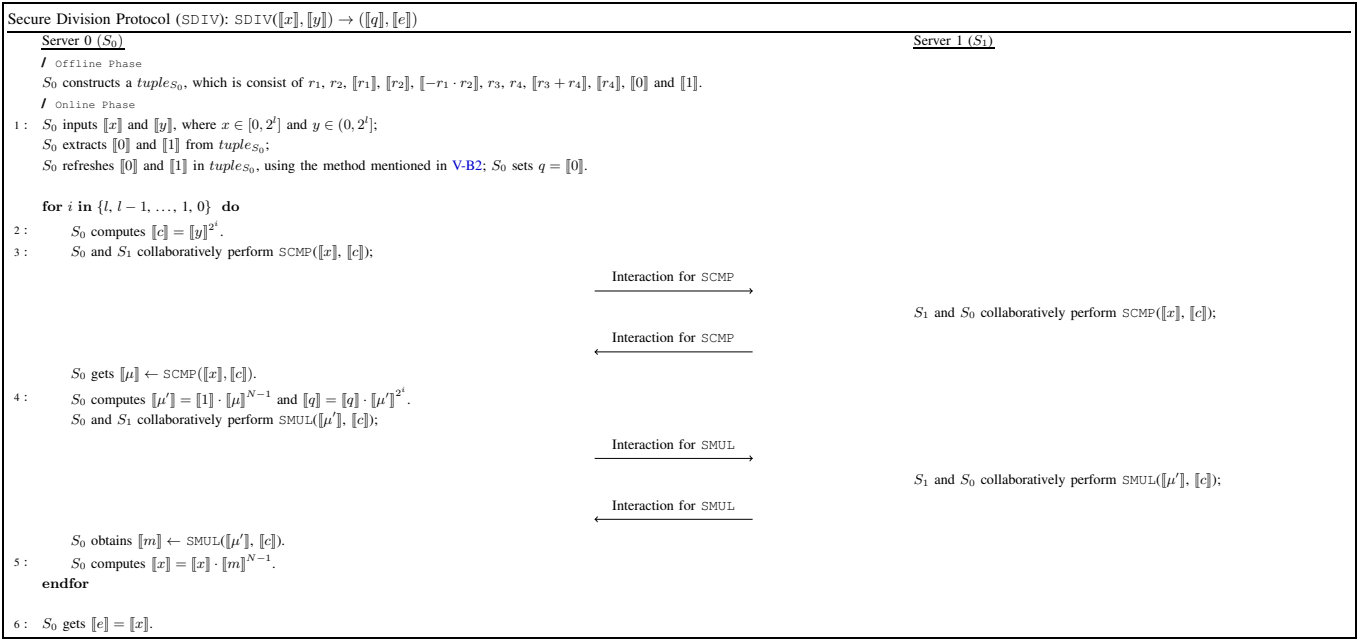

    % \centering
    \scalebox{0.60}{
    \begin{minipage}{1.0\textwidth}
    \procb[linenumbering, skipfirstln, mode=text]{Secure Division Protocol (\texttt{SDIV}): \texttt{SDIV}($\llbracket x \rrbracket, \llbracket y \rrbracket) \rightarrow (\llbracket q \rrbracket,\llbracket e \rrbracket)$}{
        \uline{Server 0 $(S_0)$}  \< \< \t\t\t\t\t\t\t\t\t \uline{Server 1 ($S_1$)} \pcskipln \\
        % [][\hline] \pcskipln
       \pclinecomment{\texttt{Offline Phase}} \pcskipln \\
        $S_0$ constructs a $tuple_{S_0}$, which is consist of $r_1$, $r_2$, $\llbracket r_1 \rrbracket$, $\llbracket r_2 \rrbracket$, $\llbracket -r_1 \cdot r_2 \rrbracket$, $r_3$, $r_4$, $\llbracket r_3+r_4 \rrbracket$, $\llbracket r_4 \rrbracket$, $\llbracket 0 \rrbracket$ and $\llbracket 1 \rrbracket$. \pcskipln \\ 
         % \pcskipln \\
         % \pcskipln\\
        % [][\hline] 
        \pclinecomment{\texttt{Online Phase}}  \\
        $S_0$ inputs $\llbracket x \rrbracket $ and $\llbracket y \rrbracket $, where $x \in [0,2^l]$ and $y \in (0,2^l]$; \pcskipln \\
         % \pcskipln \\
        $S_0$ extracts $\llbracket 0 \rrbracket $ and $\llbracket 1 \rrbracket $ from $tuple_{S_0}$; \pcskipln\\
        $S_0$ refreshes $\llbracket 0 \rrbracket $ and $\llbracket 1 \rrbracket $ in $tuple_{S_0}$, using the method mentioned in \ref{refresh};
        % \pcskipln\\
         % \pcskipln\\
        $S_0$ sets $q=\llbracket 0 \rrbracket$.\pcskipln\\
        \pcskipln\\
        \pcfor $i$ \pcin [] \{$l$, $l-1$, \ldots, 1, 0\} \pcdo  \\
            \t \t $S_0$ computes $\llbracket c \rrbracket  = {\llbracket y \rrbracket }^{2^i}$. \\
            \t \t $S_0$ and $S_1$ collaboratively perform \texttt{SCMP}($\llbracket x\rrbracket$, $\llbracket c\rrbracket$);\pcskipln \\
            % \t \t  \pcskipln \\
            \t\t\t\t\t\t\t\t\t\t\t\t\t\t\t\t\t\t\t\t\t\t\t\t\t\t\t\t\t\t\t\t\t\t\t\t\t\t \sendmessageright*{\text{Interaction for \texttt{SCMP}}} \<  \pcskipln \\
            \<\< \t\t\t\t\t\t\t\t\t $S_1$ and $S_0$ collaboratively perform \texttt{SCMP}($\llbracket x\rrbracket$, $\llbracket c\rrbracket$);\t\t\t\t\t\t\t\t\t \pcskipln \\ 
            \t\t\t\t\t\t\t\t\t\t\t\t\t\t\t\t\t\t\t\t\t\t\t\t\t\t\t\t\t\t\t\t\t\t\t\t\t\t \sendmessageleft*{\text{Interaction for \texttt{SCMP}}} \< \pcskipln \\
            \t\t $S_0$ gets $\llbracket \mu \rrbracket  \leftarrow \texttt{SCMP}(\llbracket x\rrbracket ,\llbracket c\rrbracket )$. \\
            \t \t $S_0$ computes $\llbracket {\mu}' \rrbracket  = \llbracket 1 \rrbracket \cdot {\llbracket \mu \rrbracket }^{N-1}$ and $\llbracket q \rrbracket  = \llbracket q \rrbracket \cdot {\llbracket {\mu}'\rrbracket }^{2^i} $.\pcskipln \\ 
            % \t \t  \\ 
            \t \t $S_0$ and $S_1$ collaboratively perform \texttt{SMUL}($\llbracket {\mu}' \rrbracket$, $\llbracket c \rrbracket$);\pcskipln \\
            % \t \t  \pcskipln \\
            \t\t\t\t\t\t\t\t\t\t\t\t\t\t\t\t\t\t\t\t\t\t\t\t\t\t\t\t\t\t\t\t\t\t\t\t\t\t \sendmessageright*{\text{Interaction for \texttt{SMUL}}} \<  \pcskipln \\
            \<\< \t\t\t\t\t\t\t\t\t $S_1$ and $S_0$ collaboratively perform \texttt{SMUL}($\llbracket {\mu}' \rrbracket$, $\llbracket c \rrbracket$);\t\t\t\t\t\t\t\t\t\pcskipln \\
            \t\t\t\t\t\t\t\t\t\t\t\t\t\t\t\t\t\t\t\t\t\t\t\t\t\t\t\t\t\t\t\t\t\t\t\t\t\t \sendmessageleft*{\text{Interaction for \texttt{SMUL}}} \< \pcskipln \\
            \t \t $S_0$ obtains $\llbracket m \rrbracket  \leftarrow$  \texttt{SMUL}($\llbracket {\mu}' \rrbracket$, $\llbracket c \rrbracket$).\\
            \t \t $S_0$ computes $\llbracket x \rrbracket  = \llbracket x \rrbracket \cdot{\llbracket m \rrbracket }^{N-1}$. \pcskipln \\
        \pcendfor \pcskipln\\ \\
        $S_0$ gets $\llbracket e \rrbracket  = \llbracket x \rrbracket $.
    }
    \end{minipage}
    }
    \caption{Secure Division Protocol (\texttt{SDIV})}
    \label{alg:algorithm-label-SDIV}
\end{figure*}

\begin{enumerate}[(1)]
    \item 
    $S_0$ extracts $r_3$, $r_4$, $\llbracket r_3+r_4\rrbracket $ and $\llbracket r_4 \rrbracket $ from $tuple_{S_0}$, then labels them as
    $r_1$, $r_2$, $\llbracket r_1+r_2\rrbracket $ and $\llbracket r_2\rrbracket $, respectively.
    $S_0$ then refreshes $\llbracket r_3+r_4\rrbracket $ and $\llbracket r_4\rrbracket $ in $tuple_{S_0}$. Next, $S_0$ randomly selects a number $\pi$ from the set $\{0,1\}$.
    If $\pi = 0$, $S_0$ calculates $D= {(\llbracket x \rrbracket \cdot {\llbracket y \rrbracket }^{N-1})}^{r_1}\cdot \llbracket r_1+r_2 \rrbracket $.
    If $\pi = 1$, $S_0$ calculates $D= {(\llbracket y \rrbracket \cdot {\llbracket x\rrbracket }^{N-1})}^{r_1}\cdot \llbracket r_2 \rrbracket $. 
    Subsequently, $S_0$ performs a partial decryption of $D$ using \texttt{PDec} to obtain $D_1$, and sends $D$ and $D_1$ to $S_1$.
    \item 
    Upon receiving $D$ and $D_1$, $S_1$ performs a partial decryption of $D$ using \texttt{PDec} to obtain $D_2$.
    Subsequently, $S_1$ obtains $d$ by calling \texttt{TDec} with $D_1$ and $D_2$.
    Next, $S_1$ extracts $\llbracket 0 \rrbracket $ and $\llbracket 1 \rrbracket $ from $tuple_{S_1}$ and refreshes them in $tuple_{S_1}$.
    If $\pi = 0$, $d= r_1\cdot (x-y)+(r_1+r_2)$, otherwise, $d=r_1\cdot (y-x)+r_2$.
    If $d >\frac{N}{2}$, $S_1$ sets $\llbracket {\mu}_0 \rrbracket= \llbracket 0 \rrbracket $.
    Conversely, if $d \le \frac{N}{2}$, $S_1$ sets ${\llbracket \mu}_0 \rrbracket=\llbracket 1 \rrbracket $.
    Finally, $S_1$ sends ${\llbracket \mu}_0 \rrbracket$ to $S_0$.
    \item 
    If $\pi = 0$, $S_0$ sets $\llbracket \mu \rrbracket  = \llbracket {\mu}_0 \rrbracket $.
    Conversely, if $\pi = 1$, $S_0$ extracts $\llbracket 1 \rrbracket $ from $tuple_{S_0}$, and refreshes  it in $tuple_{S_0}$, then sets $\llbracket \mu \rrbracket  = \llbracket 1 \rrbracket \cdot {\llbracket {\mu}_0\rrbracket }^{N-1}$.
\end{enumerate}

\subsection{Secure Sign Bit-Acquisition Protocol (\texttt{SSBA})}

In this subsection, we re-design the \texttt{SSBA} in SOCI by leveraging the proposed FastPaiTD and the offline and online computation mechanism.

Given $\llbracket x \rrbracket $ as input, where $x \in [-2^l,2^l]$, $S_0$ and $S_1$ collaboratively compute ($\llbracket s_{x} \rrbracket ,\llbracket x^{*}\rrbracket ) \leftarrow \texttt{SSBA}(\llbracket x \rrbracket )$ as output. 
$s_{x}$ is the sign bit of $x$, and $x^{*}$ represents the magnitude of $x$.
If $x\ge 0$, $s_{x}=0$ and $x^{*} = x$, otherwise, $s_{x}=1$ and $x^{*} = -x$.
It should be noted that the input is held by $S_0$ and only $S_0$ has the access to the output.
When describing \texttt{SSBA} in Fig. \ref{alg:algorithm-label-SSBA}, we omit the input and output for conciseness.
As shown in Fig. \ref{alg:algorithm-label-SSBA}, \texttt{SSBA} consists of offline phase and online phase.
In the offline phase, $S_0$ constructs a \textit{$tuple_{S_0}$}, which is consist of $r_1$, $r_2$, $\llbracket r_1 
\rrbracket$, $\llbracket r_2 \rrbracket$, $\llbracket -r_1 \cdot r_2 \rrbracket $, $r_3$, $r_4$, $\llbracket r_3+r_4 \rrbracket $, $\llbracket r_4 \rrbracket $, $\llbracket 0 \rrbracket $ and $\llbracket 1 \rrbracket $.
The online phase of \texttt{SSBA} consists of four steps as detailed below.
\begin{enumerate}[(1)]
    \item 
    $S_0$ extracts $\llbracket 0 \rrbracket $ and $\llbracket 1 \rrbracket $ from $tuple_{S_0}$ and refreshes them in $tuple_{S_0}$.
    \item 
    $S_0$ and $S_1$ collaboratively perform $\llbracket s_{x}\rrbracket \leftarrow \texttt{SCMP}(\llbracket x\rrbracket ,\llbracket 0\rrbracket )$. If $x\ge 0$,  $s_{x}=0$, otherwise, $s_{x}=1$.
    \item 
    $S_0$ computes $\llbracket 1-2s_{x}\rrbracket  = \llbracket 1\rrbracket \cdot {\llbracket s_{x}\rrbracket }^{N-2}$.
    \item 
    Finally, $S_0$ and $S_1$ collaboratively perform $\llbracket x^{*}\rrbracket \leftarrow \texttt{SMUL}(\llbracket 1-2s_{x}\rrbracket ,\llbracket x \rrbracket )$. Obviously, $\llbracket x^{*}\rrbracket = (1-2s_{x})\cdot x$. Furthermore, if $x\ge 0$, $x^{*}=x$, otherwise, $x^{*}=-x$.
\end{enumerate}

\subsection{Secure Division Protocol (\texttt{SDIV})}

In this subsection, we re-design the \texttt{SDIV} in SOCI by leveraging the proposed FastPaiTD and the offline and online computation mechanism.

Given $\llbracket x \rrbracket $ and $\llbracket y \rrbracket $ as input, where $x \in [0,2^l]$ and $y \in (0,2^l]$, $S_0$ and $S_1$ collaboratively compute ($\llbracket q \rrbracket ,\llbracket e\rrbracket ) \leftarrow \texttt{SDIV}(\llbracket x \rrbracket ,\llbracket y\rrbracket )$ as output. In the output of \texttt{SDIV}, $q$ represents the quotient of division and $e$ represents the remainder of division, such that $x=q\cdot y + e$. 
It should be noted that the input is held by $S_0$ and only $S_0$ has the access to the output.
When describing \texttt{SDIV} in Fig. \ref{alg:algorithm-label-SDIV}, we omit the input and output for conciseness.
As shown in Fig. \ref{alg:algorithm-label-SDIV}, \texttt{SDIV} consists of offline phase and online phase.
In the offline phase, $S_0$ constructs a \textit{$tuple_{S_0}$}, which is consist of $r_1$, $r_2$, $\llbracket r_1 
\rrbracket$, $\llbracket r_2 \rrbracket$, $\llbracket -r_1 \cdot r_2 \rrbracket $, $r_3$, $r_4$, $\llbracket r_3+r_4 \rrbracket $, $\llbracket r_4 \rrbracket $, $\llbracket 0 \rrbracket $ and $\llbracket 1 \rrbracket $.
The online phase of \texttt{SDIV} consists of seven steps as detailed below.
\begin{enumerate}[(1)]
    \item 
    $S_0$ extracts $\llbracket 0 \rrbracket $ and $\llbracket 1 \rrbracket $ from $tuple_{S_0}$ and
    refreshes them in $tuple_{S_0}$. Subsequently, $S_0$ sets $\llbracket q\rrbracket  = \llbracket 0 \rrbracket $ and $i=l$.
    \item 
    $S_0$ obtains $\llbracket 2^i\cdot y \rrbracket$ by computing $\llbracket c \rrbracket  = {\llbracket y \rrbracket }^{2^i}$, where $i\in \{l, l-1, ..., 1, 0\}$.
    \item 
    $S_0$ and $S_1$ collaboratively perform $\llbracket \mu \rrbracket \leftarrow \texttt{SCMP}(\llbracket x \rrbracket ,\llbracket c \rrbracket )$. If $x \ge 2^i\cdot y$, $\mu = 0$, otherwise, $\mu = 1$.
    \item 
    $S_0$ computes $\llbracket {\mu}' \rrbracket  = \llbracket 1 \rrbracket \cdot {\llbracket \mu \rrbracket }^{N-1}$ and $\llbracket q \rrbracket  = \llbracket q \rrbracket \cdot {\llbracket {\mu}'\rrbracket }^{2^i} $. It is important to note that ${\mu}' = 1-\mu$. Besides, $q=q+ 2^i$ if ${\mu}'=1$, otherwise, $q$ remains unchanged.
    \item 
    $S_0$ and $S_1$ collaboratively perform $\llbracket m \rrbracket \leftarrow \texttt{SMUL}(\llbracket {\mu}' \rrbracket , \llbracket c \rrbracket )$ , where $m = {\mu}'\cdot 2^i\cdot y$. If ${\mu}'=1$ (i.e., $x \ge 2^i\cdot y$), m = $2^i\cdot y$, otherwise, $m=0$.
    \item 
    $S_0$ obtains $\llbracket x \rrbracket= \llbracket x - m \rrbracket$ by computing $\llbracket x \rrbracket  = \llbracket x \rrbracket \cdot{\llbracket m \rrbracket }^{N-1}$.
    If $m=2^i\cdot y$ (i.e., $x \ge 2^i\cdot y$), $x = x-2^i\cdot y$, otherwise, $x$ remains unchanged. 
    Next, sets $i=i-1$.
    Steps (2)-(6) should be repeated until $i<0$.
    \item 
    $S_0$ obtain the remainder $\llbracket e \rrbracket$ by setting $\llbracket e \rrbracket  = \llbracket x \rrbracket $.
\end{enumerate}

\section{Correctness and Security Analysis}\label{Section_6}
\subsection{Correctness Analysis}
In this subsection, we provide the rigorous correctness proofs for the proposed FastPaiTD and the secure outsourced computation protocols in SOCI\textsuperscript{+}.
\begin{theorem}\label{thm_FastPaiTD}
    In FastPaiTD, given two ciphertexts $M_1 \leftarrow \texttt{PDec}(sk_1,c)$ and $M_2 \leftarrow \texttt{PDec}(sk_2,c)$, 
    $\texttt{TDec}(M_1,M_2)$ can correctly recover the plaintext $m$.
\end{theorem}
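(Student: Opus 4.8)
The plan is to reduce threshold decryption to a single computation modulo $N^2$ driven entirely by the two defining congruences of the key split, $sk_1 + sk_2 \equiv 0 \pmod{2\alpha}$ and $sk_1 + sk_2 \equiv 1 \pmod{N}$. First I would use the definition of \texttt{PDec} to write $M_1 \cdot M_2 = c^{sk_1}\cdot c^{sk_2} = c^{\delta} \bmod N^2$ with $\delta := sk_1 + sk_2$, so the whole statement hinges on evaluating $c^{\delta} \bmod N^2$. Setting $u := h^r \bmod N$, the ciphertext factors as $c = (1+N)^m \cdot u^N \bmod N^2$, hence $c^{\delta} = (1+N)^{m\delta}\cdot u^{N\delta} \bmod N^2$, and I would analyze the two factors separately.

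For the $(1+N)$-factor I would apply the binomial identity $(1+N)^k \equiv 1 + kN \pmod{N^2}$ to get $(1+N)^{m\delta} \equiv 1 + m\delta N \pmod{N^2}$. Since $\delta \equiv 1 \pmod N$, writing $\delta = 1 + tN$ gives $m\delta N = mN + tmN^2 \equiv mN \pmod{N^2}$, so this factor collapses to $1 + mN$.

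The harder factor is $u^{N\delta}$, and this is the one step that is not purely formal: it must invoke the construction of $h$ in \texttt{KeyGen}. I would show that $h$ has order dividing $2\alpha$ in $\mathbb{Z}_N^{*}$. From $P-1 = 2pp'$, $Q-1 = 2qq'$, $\alpha = pq$ and $\beta = p'q'$ one computes $(y^{2\beta})^{2\alpha} = y^{4pp'qq'} = y^{\phi(N)} \equiv 1 \pmod N$ by Euler's theorem, so $h = -y^{2\beta}$ satisfies $h^{2\alpha} = (-1)^{2\alpha}(y^{2\beta})^{2\alpha} \equiv 1 \pmod N$ because $2\alpha$ is even. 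Consequently $u^{2\alpha} \equiv 1 \pmod N$, i.e. $u^{2\alpha} = 1 + kN$ for some integer $k$, and then $u^{2\alpha N} = (1+kN)^{N} \equiv 1 \pmod{N^2}$ by the binomial expansion. Using $\delta \equiv 0 \pmod{2\alpha}$, I write $\delta = 2\alpha s$ and conclude $u^{N\delta} = (u^{2\alpha N})^{s} \equiv 1 \pmod{N^2}$. This is the main obstacle, since it is the only place tying the threshold split back to the FastPai parameter choices; alternatively one may simply borrow the already-established decryption correctness of FastPai, which guarantees $u^{2\alpha N} \equiv 1 \pmod{N^2}$ directly.

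Combining the two factors yields $M_1 \cdot M_2 \equiv 1 + mN \pmod{N^2}$. Because $m \in \mathbb{Z}_N$, the integer $1 + mN$ lies in $[0, N^2)$, so it equals $M_1 \cdot M_2 \bmod N^2$ exactly, and the \texttt{TDec} formula returns $\frac{(1+mN) - 1}{N} \bmod N = m$, as claimed. I would close by remarking that, unlike single-key \texttt{Dec}, no factor $(2\alpha)^{-1} \bmod N$ is required, precisely because the split forces $\delta \equiv 1 \pmod N$ rather than $\delta \equiv 0$, which is what makes the exponent of $(1+N)$ equal to $m$ rather than $2\alpha m$.
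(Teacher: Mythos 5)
Your proof is correct, and its skeleton is the same as the paper's: reduce $\texttt{TDec}(M_1,M_2)$ to evaluating $c^{\delta} \bmod N^2$ for $\delta = sk_1+sk_2$ and exploit the two congruences $\delta \equiv 0 \pmod{2\alpha}$ and $\delta \equiv 1 \pmod{N}$. The one genuine difference is how the random component of the ciphertext is eliminated. The paper treats this step as a black box: it invokes the identity $c^{2\alpha} \bmod N^2 = (1+N)^{2\alpha m}$, cited as already proven for FastPai, writes $\delta = 2\alpha\cdot((2\alpha)^{-1} \bmod N) + \eta\cdot 2\alpha N$ explicitly, and then manipulates exponents of $(1+N)$. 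You instead factor $c = (1+N)^m u^N$ with $u = h^r \bmod N$ and prove from the \texttt{KeyGen} structure ($P-1=2pp'$, $Q-1=2qq'$, $\alpha=pq$, $\beta=p'q'$, so $2\alpha\cdot 2\beta = \phi(N)$) that $h$, hence $u$, has order dividing $2\alpha$ in $\mathbb{Z}_N^{*}$, which kills $u^{N\delta}$ once $2\alpha \mid \delta$. Your route is self-contained and makes visible exactly which parameter choices of FastPai the threshold split relies on; the paper's is shorter but leans on the cited correctness of the underlying scheme (an option you also note as an acceptable shortcut). Your closing observation---that $\delta \equiv 1 \pmod{N}$ is precisely what removes the $(2\alpha)^{-1} \bmod N$ factor present in single-key \texttt{Dec}---is exactly the point of the CRT-based split and is consistent with the paper's \texttt{TDec} formula.
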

\begin{proof}
    Before proceeding with the proof, we introduce two important equations.
    The work \cite{ma2021optimized} has proven that their FastPai satisfies the following equation.
    \begin{align} \label{eq1}
        c^{2\alpha}\!\!\!\! \mod N^2 = {(1+N)}^{2\alpha m}.
    \end{align}

    In Eq. (\ref{eq1}), $c$ is a ciphertext, $m$ is the corresponding plaintext, and $\alpha$ is the private key.
    Besides, it is widely recognized that the following equation holds for any integer $m$.
    \begin{align} \label{eq2}
        {(1+N)}^m\!\!\!\! \mod N^2=(1+m\cdot N) \!\!\!\! \mod N^2.
    \end{align}
    
    Now we demonstrate the correctness of the proposed FastPaiTD.
    To simplify the proof process, we adopt the notations $L(x)$ for $\frac{x-1}{N}$ and ${(2\alpha)}^{-1}$ for ${(2\alpha)}^{-1} \!\! \mod N$. 
    The proof process is as follow.

    By substituting $M_1$ and $M_2$ with $c^{sk_{1} } \!\! \mod N^{2}$ and $c^{sk_{2} } \!\! \mod N^{2}$, respectively, we can easily calculate the following equation.
    \begin{align}
        \texttt{TDec}(M_1,M_2) = L(c^{(2\alpha)^{-1} \cdot (2\alpha)}  \cdot c^{\eta \cdot 2\alpha \cdot N}\!\!\! \mod N^2)\!\!\! \mod N.
    \end{align}

    Next, we can obtain the following equation by utilizing Eqs. (\ref{eq1}) and (\ref{eq2}).
    \begin{align}
        \texttt{TDec}(M_1,M_2)=L({(1+N)}^{2\alpha \cdot (2\alpha)^{-1} m }\!\!\! \mod N^2)\!\!\! \mod N.
    \end{align}

    Afterward, we can adopt Eq.(\ref{eq2}) and expand $L(x)$ to get the following equation.
    \begin{align}
        \texttt{TDec}(M_1,M_2)=\frac{(1+mN)-1}{N}\!\!\! \mod N.
    \end{align}

    Finally, we can conclude that $\texttt{TDec}(M_1,M_2) = m \!\! \mod N$.
\end{proof}

\begin{theorem}\label{thm_SMUL}
    Given $\llbracket x \rrbracket $ and $\llbracket y \rrbracket $ as input, where $x,y \in [-2^l,2^l]$, the proposed \texttt{SMUL} protocol correctly outputs $\llbracket x\cdot y\rrbracket $.
\end{theorem}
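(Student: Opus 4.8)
The plan is to trace the protocol line by line, using only the additive and scalar-multiplication homomorphism of FastPaiTD (stated in the Remark) together with the correctness of threshold decryption (Theorem~\ref{thm_FastPaiTD}), and to reduce the whole claim to one algebraic identity plus a range argument that guarantees the two masked operands can be separated cleanly.

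First I would pin down what the ciphertext $C$ encrypts. Since $X = \llbracket x \rrbracket \cdot \llbracket r_1 \rrbracket = \llbracket x + r_1 \rrbracket$ and $Y = \llbracket y \rrbracket \cdot \llbracket r_2 \rrbracket = \llbracket y + r_2 \rrbracket$ by additive homomorphism, scalar-multiplication homomorphism gives $X^{L} = \llbracket L(x+r_1)\rrbracket$, hence $C = X^{L}\cdot Y = \llbracket L(x+r_1) + (y+r_2)\rrbracket$. Applying Theorem~\ref{thm_FastPaiTD} to $C_1 = \texttt{PDec}(sk_1,C)$ and $C_2 = \texttt{PDec}(sk_2,C)$, the value $\texttt{TDec}(C_1,C_2)$ equals $L(x+r_1) + (y+r_2) \bmod N$.

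Next I would show that $S_1$ recovers $(x+r_1)$ and $(y+r_2)$ exactly, which is the step carrying the real content. I must verify (i) that $L(x+r_1)+(y+r_2)$ does not wrap modulo $N$, so the decrypted integer equals the true value, and (ii) that $0 \le y + r_2 < L$, so that $\lfloor (\cdot)/L\rfloor$ returns $x+r_1$ and $(\cdot)\bmod L$ returns $y+r_2$. Both are meant to follow from the size constraints $x,y \in [-2^l,2^l]$, $r_1,r_2 \leftarrow \{0,1\}^{\sigma}$ with $\sigma \gg l$, and $L \ge 2^{\sigma+2}$. Here I would also have to reconcile this with the signed-number encoding $m \mapsto N - |m|$, checking that once masked the operands land in the non-negative block $[0,\tfrac{N}{2}]$ so that the floor-division and modulo extraction is meaningful. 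I expect this bounds-and-encoding bookkeeping to be the main obstacle, since it is the only place where the specific choice of $L$ and the ranges of the masks are essential, and it is also where the signed representation could, in principle, cause a value to fall outside the intended range.

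Finally, with $\llbracket (x+r_1)(y+r_2)\rrbracket$ correctly formed and returned by $S_1$, I would close the argument by the homomorphic combination $\llbracket (x+r_1)(y+r_2)\rrbracket \cdot \llbracket -r_2 x\rrbracket \cdot \llbracket -r_1 y\rrbracket \cdot \llbracket -r_1 r_2\rrbracket = \llbracket (x+r_1)(y+r_2) - r_2 x - r_1 y - r_1 r_2\rrbracket$, where $\llbracket -r_2 x\rrbracket = \llbracket x\rrbracket^{-r_2}$ and $\llbracket -r_1 y\rrbracket = \llbracket y\rrbracket^{-r_1}$ by scalar-multiplication homomorphism. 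The identity $(x+r_1)(y+r_2) - r_2 x - r_1 y - r_1 r_2 = xy$ then collapses the plaintext to $xy$, yielding $\llbracket x\cdot y\rrbracket$ and completing the proof.
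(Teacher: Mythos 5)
Your proposal follows essentially the same route as the paper's proof: unfold the homomorphisms to see that $C$ encrypts $L\cdot(x+r_1)+(y+r_2)$, invoke threshold-decryption correctness, separate the two operands via division and remainder by $L$, and finish with the identity $(x+r_1)(y+r_2)-r_2x-r_1y-r_1r_2=xy$. The one point where you go beyond the paper is the bounds-and-encoding bookkeeping (no wraparound modulo $N$, $0\le y+r_2<L$, and the signed encoding $m\mapsto N-\left|m\right|$): the paper's proof simply asserts that \texttt{TDec} ``yields $x+r_1$ and $y+r_2$'' without checking any of this, so your instinct that this is where the real content lies is sound --- note, though, that for negative $y$ and an unluckily small $r_2$ the condition $y+r_2\ge 0$ can fail, so the separation step is correct only with overwhelming probability over the choice of $r_2$, a caveat neither you nor the paper makes explicit.
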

\begin{proof}
    We assume that the operations in offline phase have been executed correctly. Therefore, our focus now lies on the correctness of online phase.
    
    In step 1, $S_0$ computes $X = \llbracket x \rrbracket \cdot \llbracket r_1 \rrbracket  = \llbracket x+r_1 \rrbracket $ and $Y = \llbracket y\rrbracket \cdot \llbracket r_2 \rrbracket  = \llbracket y+r_2\rrbracket $. Subsequently, $S_0$ computes $C = X^L\cdot Y = \llbracket L\cdot (x+r_1)+y+r_2 \rrbracket $ and $C_1 \leftarrow \texttt{PDec}(sk_1,C)$.

    In step 2, upon receiving $C$ and $C_1$, $S_1$ performs $C_2 \leftarrow \texttt{PDec}(sk_2,C)$. Subsequently, $S_1$ performs $L\cdot (x+r_1)+y+r_2 \leftarrow \texttt{TDec}(C_1,C_2)$, which yields $x+r_1$ and $y+r_2$.

    In step 3, upon receiving $\llbracket (x+r_1) \cdot (y+r_2)\rrbracket $ from $S_1$, $S_0$ computes $\llbracket -r_2x\rrbracket = {\llbracket x\rrbracket}^{-r_2}$ and $\llbracket -r_1y\rrbracket = {\llbracket y\rrbracket}^{-r_1} $. Subsequently, $S_0$ computes $\llbracket (x+r_1)\cdot (y+r_2)\rrbracket  \cdot \llbracket -r_2x\rrbracket  \cdot \llbracket -r_1y\rrbracket  \cdot \llbracket -r_1r_2\rrbracket  = \llbracket (x+r_1)\cdot (y+r_2) - r_2x - r_1y -r_1r_2\rrbracket  = \llbracket x\cdot y\rrbracket $.
\end{proof}

\begin{theorem}\label{thm_SCMP}
    Given $\llbracket x \rrbracket $ and $\llbracket y \rrbracket $ as input, where $x,y \in [-2^l,2^l]$, the proposed \texttt{SCMP} protocol correctly outputs $\llbracket \mu \rrbracket $.
    If $\mu=0$, $x\ge y$, otherwise, $x<y$.
\end{theorem}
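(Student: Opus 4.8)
The plan is to reduce the claim to an analysis of the single plaintext value $d$ that $S_1$ recovers, assuming (as in the proof of Theorem~\ref{thm_SMUL}) that the offline phase has executed correctly and concentrating on the online phase. First I would unfold the homomorphic computation of $D$ in each branch of $\pi$. Using the additive and scalar-multiplication homomorphism from the Remark together with ${\llbracket y \rrbracket}^{N-1} = \llbracket -y \rrbracket$, the branch $\pi = 0$ gives $\llbracket x \rrbracket \cdot {\llbracket y \rrbracket}^{N-1} = \llbracket x-y \rrbracket$ and hence $D = \llbracket r_1(x-y) + (r_1+r_2) \rrbracket$, while the branch $\pi = 1$ gives $D = \llbracket r_1(y-x) + r_2 \rrbracket$. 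By Theorem~\ref{thm_FastPaiTD}, the pair $(D_1,D_2)$ produced by the two \texttt{PDec} calls threshold-decrypts to exactly this plaintext, so that $d = r_1(x-y+1) + r_2 \bmod N$ when $\pi = 0$ and $d = r_1(y-x) + r_2 \bmod N$ when $\pi = 1$.

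The heart of the argument is then to show that the test $d > \frac{N}{2}$ reads off the sign of $x-y$. I would use the signed interpretation induced by the message encoding: a residue in $[0,\frac{N}{2}]$ denotes a non-negative value and a residue in $(\frac{N}{2}, N-1]$ denotes a negative one. Recall the offline constraints $1 \le r_1 < 2^\sigma$, $r_2 \le \frac{N}{2}$, and $r_1+r_2 > \frac{N}{2}$. For $\pi=0$: if $x\ge y$ then $x-y+1 \ge 1$, so the arithmetic value $r_1(x-y+1)+r_2 \ge r_1+r_2 > \frac{N}{2}$, which I would verify stays below $N$ so that $d > \frac{N}{2}$; if $x<y$ then $x-y+1 \le 0$, so the value equals $r_2 - r_1|x-y+1| \in [0,\frac{N}{2}]$, giving $d \le \frac{N}{2}$. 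Hence $S_1$ sets $\llbracket \mu_0 \rrbracket = \llbracket 0 \rrbracket$ exactly when $x\ge y$, which is the desired $\mu$ since $S_0$ keeps $\mu = \mu_0$ in this branch. For $\pi=1$ the roles of $x$ and $y$ are swapped and the additive offset is only $r_2$, so the same bounding shows $d > \frac{N}{2}$ iff $x<y$; here $S_0$ recombines via $\llbracket \mu \rrbracket = \llbracket 1 \rrbracket \cdot {\llbracket \mu_0 \rrbracket}^{N-1} = \llbracket 1-\mu_0 \rrbracket$, which again yields $\mu = 0 \iff x\ge y$. Checking the boundary case $x=y$ in both branches then completes the case analysis.

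The main obstacle is the magnitude bookkeeping needed to guarantee that $d$ lies in the intended single-wrap regime, i.e. that the value $r_1(x-y\pm 1)+r_2$ never overflows past $N$ in the positive case nor underflows below $0$ in the negative case; otherwise the threshold test would no longer reflect the true sign. This is precisely where the ranges of the random numbers matter: since $|x-y| \le 2^{l+1}$ and $r_1 < 2^\sigma$, the blinding term $r_1|x-y+1|$ is at most $2^{\sigma+l+1}$, and I would confirm that $N$ is large enough relative to $\sigma$ and $l$ so that $\frac{N}{2}$ dominates this term, keeping the non-negative branch inside $[0,\frac{N}{2}]$ and the negative branch inside $(\frac{N}{2}, N)$. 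Once these inequalities are pinned down, the sign test is exact and the correctness of the output $\llbracket \mu \rrbracket$ follows directly from the recombination step.
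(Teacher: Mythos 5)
Your proposal is correct and follows essentially the same route as the paper's proof: unfold $D$ into the two plaintexts $r_1(x-y+1)+r_2$ (for $\pi=0$) and $r_1(y-x)+r_2$ (for $\pi=1$), then use the constraints $r_1\in\{0,1\}^{\sigma}\setminus\{0\}$, $r_2\le \frac{N}{2}$, $r_1+r_2>\frac{N}{2}$ to show the threshold test $d>\frac{N}{2}$ determines the sign of $x-y$, with the $\pi=1$ branch corrected by the final $\llbracket 1-\mu_0\rrbracket$ recombination. Your explicit attention to the no-wraparound bounds ($0<d<N$) is a point the paper dismisses with ``we can easily obtain,'' but it is the same argument, not a different one.
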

\begin{proof}
    According to Fig. \ref{alg:algorithm-label-SCMP}, there are two possible values for D that can be obtained, i.e., $r_1\cdot(x-y+1)+r_2$ and $r_1\cdot(y-x)+r_2$.
    In Fig. \ref{alg:algorithm-label-SCMP}, $r_1$ and $r_2$ are derived from $r_3$ and $r_4$ in $tuple_{S_0}$, hence we have $r_1 \leftarrow {\{0,1\}}^{\sigma}\backslash \{0\}$, $r_2 \le \frac{N}{2} $ and $r_1+r_2>\frac{N}{2}$. 
    Since $r_1 \leftarrow {\{0,1\}}^{\sigma}\backslash \{0\}$, $r_2 \le \frac{N}{2} $, $r_1+r_2>\frac{N}{2}$ and $x,y\in [-2^l,2^l]$, we can easily obtain $0 < r_1\cdot(x-y+1)+r_2 < N$ and $0 < r_1\cdot(y-x)+r_2 < N$.
    
    When $0 < r_1\cdot(x-y+1)+r_2 \le \frac{N}{2}$, it implies that $x-y+1\le0$. Consequently, we have $x<y$, and \texttt{SCMP} outputs $\llbracket 1 \rrbracket $.
    When $\frac{N}{2} < r_1\cdot(x-y+1)+r_2 \le N$, it implies that $x-y+1\ \ge 1$. In this case, we have $x \ge y$, and \texttt{SCMP} outputs $\llbracket 0 \rrbracket $.

    When $0< r_1\cdot(y-x)+r_2 \le \frac{N}{2}$, it implies that $y-x\le 0$. In this case, we have $x\ge y$, and \texttt{SCMP} outputs $\llbracket 0 \rrbracket $.
    When $\frac{N}{2}< r_1\cdot(y-x)+r_2 < N$, it implies that $y-x \ge 1$. In this case, we have $x < y$, and \texttt{SCMP} outputs $\llbracket 1 \rrbracket $.

    Therefore, the proposed \texttt{SCMP} correctly compares $x$ and $y$.
\end{proof}

\begin{theorem}\label{thm_SSBA}
    Given $\llbracket x \rrbracket $ as input, where $x \in [-2^l,2^l]$, the proposed \texttt{SSBA} protocol correctly outputs $\llbracket s_{x}\rrbracket$ and $\llbracket x^{*}\rrbracket$.
    $s_{x}$ is the sign bit of $x$, and $x^{*}$ represents the magnitude of $x$.
    If $x\ge 0$, $s_{x}=0$ and $x^{*} = x$, otherwise, $s_{x}=1$ and $x^{*} = -x$.
\end{theorem}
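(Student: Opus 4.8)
The plan is to reduce the correctness of \texttt{SSBA} entirely to the already-established correctness of \texttt{SCMP} (Theorem \ref{thm_SCMP}) and \texttt{SMUL} (Theorem \ref{thm_SMUL}), combined with the additive and scalar-multiplication homomorphisms of FastPaiTD. As in the proof of Theorem \ref{thm_SMUL}, I would first assume the offline phase has executed correctly and then concentrate only on the online-phase computations, treating each of the four steps in sequence.

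First I would handle the sign bit. Step (2) invokes $\llbracket s_{x} \rrbracket \leftarrow \texttt{SCMP}(\llbracket x \rrbracket, \llbracket 0 \rrbracket)$, so applying Theorem \ref{thm_SCMP} with the second operand equal to $0$ yields $s_{x} = 0$ exactly when $x \ge 0$ and $s_{x} = 1$ exactly when $x < 0$. This is precisely the claimed sign bit, which disposes of the $\llbracket s_{x} \rrbracket$ half of the output.

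Next I would verify the auxiliary ciphertext built in step (3). Using the scalar-multiplication homomorphism with the exponent $N-2 \equiv -2 \pmod{N}$ gives $\llbracket s_{x} \rrbracket^{N-2} = \llbracket -2 s_{x} \rrbracket$, and the additive homomorphism then yields $\llbracket 1 \rrbracket \cdot \llbracket -2 s_{x} \rrbracket = \llbracket 1 - 2 s_{x} \rrbracket$, confirming the encrypted multiplier is correct. Step (4) calls $\texttt{SMUL}(\llbracket 1 - 2 s_{x} \rrbracket, \llbracket x \rrbracket)$, and Theorem \ref{thm_SMUL} guarantees its output is $\llbracket (1 - 2 s_{x}) \cdot x \rrbracket$. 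A short two-case analysis then closes the argument: when $x \ge 0$ we have $s_{x} = 0$, so $1 - 2 s_{x} = 1$ and $x^{*} = x$; when $x < 0$ we have $s_{x} = 1$, so $1 - 2 s_{x} = -1$ and $x^{*} = -x$.

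The only point that requires genuine care, and the closest thing to an obstacle, is checking that this arithmetic remains consistent with the convention that a negative integer $m$ is encoded as $m = N - \lvert m \rvert$ in the message space. I would verify that when $x < 0$ the quantity $(-1) \cdot x \bmod N$ equals $\lvert x \rvert$, so that $x^{*}$ is indeed the magnitude and lands back in the non-negative range $[0, \tfrac{N}{2}]$. I would also note that the input restriction $x \in [-2^l, 2^l]$, together with the choice of the constant $L \ge 2^{\sigma+2}$ inside \texttt{SMUL}, keeps every intermediate value strictly within the valid message range, so no modular wraparound can corrupt the result.
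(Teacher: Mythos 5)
Your proposal is correct and follows essentially the same route as the paper's proof: invoke Theorem \ref{thm_SCMP} with second operand $\llbracket 0 \rrbracket$ to get the sign bit, invoke Theorem \ref{thm_SMUL} to get $\llbracket (1-2s_{x})\cdot x \rrbracket$, and finish with the two-case analysis on the sign of $x$. Your additional checks (that $\llbracket 1 \rrbracket \cdot \llbracket s_{x} \rrbracket^{N-2} = \llbracket 1-2s_{x} \rrbracket$ via the homomorphisms, and that the $m = N - |m|$ encoding keeps $x^{*}$ in the non-negative range) are sound details the paper leaves implicit, not a different argument.
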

\begin{proof}
    Given $\llbracket x \rrbracket $ and $\llbracket 0 \rrbracket$ as input, where $x \in [-2^l,2^l]$, according to Theorem \ref{thm_SCMP}, we can easily observe that the \texttt{SCMP}$(\llbracket x \rrbracket, \llbracket 0 \rrbracket)$  outputs $\llbracket 1 \rrbracket$ when $x \in [-2^l,0)$ and outputs $\llbracket 0 \rrbracket$ when $x \in [0,2^l]$.
    Therefore, if $x \in [0,2^l]$, $s_{x} = 0$, otherwise, $s_{x} = 1$.

    According to Theorem \ref{thm_SMUL}, we can correctly obtain $\llbracket (1-2\cdot s_{x})\cdot x\rrbracket \leftarrow \texttt{SMUL}(\llbracket 1-2\cdot s_{x} \rrbracket, \llbracket x \rrbracket)$.
    When $x\ge 0$, $(1-2\cdot s_{x})\cdot x = x$ as $(1-2\cdot s_{x})=1$. When $x< 0$, $(1-2\cdot s_{x})\cdot x = -x$ as $(1-2\cdot s_{x})=-1$.

    Therefore, the proposed \texttt{SSBA} correctly outputs $\llbracket s_{x}\rrbracket$ and $\llbracket x^{*}\rrbracket$.
\end{proof}

\begin{theorem}\label{thm_SDIV}
     Given $\llbracket x \rrbracket $ and $\llbracket y \rrbracket $ as input, where $x \in [0,2^l]$ and $y \in (0,2^l]$, the proposed \texttt{SDIV} protocol correctly outputs $\llbracket q \rrbracket$ and $\llbracket e \rrbracket$.
     $q$ is the quotient of division, and $e$ is the remainder of division, i.e., $x=q\cdot y + e$.
\end{theorem}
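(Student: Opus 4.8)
The plan is to recognize \texttt{SDIV} as a homomorphic realization of the classical binary restoring long-division algorithm, and to prove its correctness by a loop invariant established by induction on the index $i$ running from $l$ down to $0$. As in the previous theorems, I would first assume the offline phase is executed correctly and reduce the claim to the correctness of the online loop. The argument rests on two already-proven facts: Theorem \ref{thm_SCMP}, which guarantees that each \texttt{SCMP}$(\llbracket x\rrbracket,\llbracket c\rrbracket)$ returns $\llbracket 0\rrbracket$ exactly when $x\ge c$ and $\llbracket 1\rrbracket$ otherwise, and Theorem \ref{thm_SMUL}, which guarantees that each \texttt{SMUL} call returns the encrypted product.

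First I would verify that, within one iteration, the plaintext underlying each ciphertext is what the comments claim, using only additive and scalar-multiplication homomorphism together with Theorems \ref{thm_SMUL} and \ref{thm_SCMP}. Concretely: $\llbracket c\rrbracket=\llbracket y\rrbracket^{2^i}=\llbracket 2^i y\rrbracket$; the comparison yields $\mu\in\{0,1\}$ with $\mu=0$ iff $x\ge 2^i y$; then $\llbracket\mu'\rrbracket=\llbracket 1\rrbracket\cdot\llbracket\mu\rrbracket^{N-1}=\llbracket 1-\mu\rrbracket$, so $\mu'=1-\mu$; the update $\llbracket q\rrbracket\cdot\llbracket\mu'\rrbracket^{2^i}$ encrypts $q+2^i\mu'$; the \texttt{SMUL} call gives $\llbracket m\rrbracket=\llbracket\mu'\cdot 2^i y\rrbracket$; and finally $\llbracket x\rrbracket\cdot\llbracket m\rrbracket^{N-1}$ encrypts $x-m$. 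Thus the iteration adds the bit $2^i$ to $q$ and subtracts $2^i y$ from the working register exactly when $x\ge 2^i y$, and leaves both unchanged otherwise.

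Next I would introduce the loop invariant. Writing $R_i$ and $Q_i$ for the plaintexts held in the $x$- and $q$-registers at the start of the iteration for index $i$, I claim $R_i = x - Q_i\cdot y$ and $0\le R_i < 2^{i+1}y$. The base case $i=l$ holds since $R_l=x$, $Q_l=0$, and $x\le 2^l < 2^{l+1}\le 2^{l+1}y$ because $y\ge 1$. For the inductive step, if $R_i\ge 2^i y$ the register becomes $R_{i-1}=R_i-2^i y$ and $Q_{i-1}=Q_i+2^i$, so $R_{i-1}=x-Q_{i-1}y$ and $0\le R_{i-1}<2^{i+1}y-2^i y=2^i y$; if $R_i<2^i y$ both registers are unchanged and $0\le R_{i-1}=R_i<2^i y$. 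Either way the invariant passes to index $i-1$ with the bound halved. After the last iteration ($i=0$) the invariant gives $e=R_{-1}=x-q\cdot y$ with $0\le e<y$, which is exactly $x=q\cdot y+e$ with $q=\lfloor x/y\rfloor$ and $e=x\bmod y$.

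The main obstacle is twofold. The first delicate point is the shrinking upper bound $0\le R_i<2^{i+1}y$: one must check that it is correctly seeded from the input ranges $x\in[0,2^l]$, $y\in(0,2^l]$ and that it tightens by a factor of two per iteration, so that the terminal remainder is forced below $y$; this is what pins down both the quotient and the remainder. The second, and more subtle, point is that \texttt{SCMP} is invoked on $\llbracket c\rrbracket=\llbracket 2^i y\rrbracket$, whose plaintext $2^i y$ can reach roughly $2^{2l}$ and hence lies outside the interval $[-2^l,2^l]$ assumed in Theorem \ref{thm_SCMP}. I would therefore re-examine the masking bound underlying that theorem and confirm that the condition $0<r_1(\cdot)+r_2<N$ still holds for the operand $2^i y$, i.e.\ that $N$ is chosen large enough relative to $l$ and $\sigma$ for the comparison to remain correct at every step of the loop.
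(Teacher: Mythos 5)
Your proposal is correct and follows the same underlying idea as the paper --- the loop implements binary restoring long division, choosing the bit $2^i$ of the quotient exactly when the current remainder is at least $2^i y$ --- but your execution is noticeably tighter than the paper's. The paper's proof merely asserts that any admissible quotient has a binary expansion $\sum_{i} q_i 2^i$ and that the loop's $\mu'$ values reproduce the $q_i$; it never states or verifies the invariant that pins the final register value below $y$. Your explicit invariant $R_i = x - Q_i y$ with $0 \le R_i < 2^{i+1}y$, seeded from $x \le 2^l$ and $y \ge 1$ and halved each iteration, is exactly the missing bookkeeping that forces $0 \le e < y$ at termination and hence uniquely identifies $q = \lfloor x/y \rfloor$; this is a genuine improvement in rigor over what is printed. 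More importantly, your second ``obstacle'' identifies a real gap that the paper's proof silently steps over: \texttt{SCMP} is invoked on $\llbracket c \rrbracket = \llbracket 2^i y \rrbracket$ with $2^i y$ as large as $2^{2l}$, outside the range $[-2^l, 2^l]$ hypothesized in Theorem~\ref{thm_SCMP}, so the non-wraparound condition $0 < r_1\cdot(\cdot) + r_2 < N$ must be re-verified for operands of magnitude up to $2^{2l}$. With the paper's parameters ($\sigma = 128$, $l \le 32$, $|N| = 2048$) the inequality $2^{\sigma}\cdot 2^{2l+1} \ll N/2$ holds comfortably, so correctness survives, but you are right that this needs to be said; the paper does not say it.
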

\begin{proof}
    It is widely recognized that any quotient $q$ satisfying $q \in [0,2^l]$ and $0 \le x - q\cdot y < y$ can be represented as $ {\textstyle \sum_{0}^{i=l}q_i\cdot 2^i} $, where $q_i \in \{0,1\}$.
    As shown in the loop of Fig. \ref{alg:algorithm-label-SDIV}, for any $i \in \{l,l-1,...,1,0\}$, if $x \ge 2^i \cdot y$, then we have $\mu ' =1$, $q_i = 1\cdot 2^i$ and $x = x - 1\cdot 2^i \cdot y$, otherwise, $\mu ' =0$, $q_i = 0\cdot 2^i$ and $x = x - 0\cdot 2^i \cdot y$.
    We observe that ${\textstyle \sum_{0}^{i=l}q_i\cdot 2^i} = {\textstyle \sum_{0}^{i=l} \mu ' \cdot 2^i}$, where $\mu ' \in \{0,1\}$. Therefore, any $q\in [0,2^l]$ can be represented by $\textstyle \sum_{0}^{i=l} \mu ' \cdot 2^i$.
    Since ${\textstyle \sum_{0}^{i=l}q_i\cdot 2^i} = q$, $e = x - y \cdot  \sum_{0}^{i=l} q_i\cdot 2^i $.
    Therefore, the proposed \texttt{SDIV} protocol correctly outputs $\llbracket q \rrbracket$ and $\llbracket e \rrbracket$.
\end{proof}
\subsection{Security Analysis}
Liu \textit{et al.} \cite{liu2016privacy} has proven the semantic security of their Paillier cryptosystem with partial decryption (PCPD) in POCF. 
In this paper, we adopt the same method used in POCF \cite{liu2016privacy}
to prove the security of FastPaiTD.
In SOCI\textsuperscript{+}, we assume that $S_0$ is not colluding with $S_1$. 
Following the approach of POCF \cite{liu2016privacy}, we define the semantic security model for FastPaiTD. 

\begin{myDef}
Let $\zeta =$ (\texttt{NGen}, \texttt{keygen}, \texttt{Enc}, \texttt{Dec}, \texttt{PDec}, \texttt{TDec}) be a public key cryptosystem that supports partial decryption (\texttt{PDec}) and threshold decryption (\texttt{TDec}). Assuming a polynomial adversary $\mathcal{A}$, if $\mathcal{A}$ has negligible advantage in the challenger-adversary game, then $\zeta$ is semantically secure. The challenger-adversary game is defined as follows.
\begin{itemize}
    \item 
    The challenger obtains the public key $pk$ and private key $sk$ of  $\zeta$ by calling \texttt{keygen}. 
    Subsequently, the challenger splits $sk$ into $sk_1$ and $sk_2$, and sends pk and one of $sk_1$ and $sk_2$ to $\mathcal{A}$. 
    \item 
    $\mathcal{A}$ randomly selects two plaintexts $m_0$ and $m_1$ with equal bit-length, and sends them to the challenger through a secure communication channel.
    \item
    The challenger flips a coin to randomly choose a bit $b \in \{0,1\}$, then adopts \texttt{Enc} to encrypt $m_b$ into ciphertext $c$ and sends $c$ to $\mathcal{A}$.
    \item
    
    The $\mathcal{A}$ outputs a bit $b'$. If $b' = b$, $\mathcal{A}$ succeeds, otherwise, $\mathcal{A}$ fails.
\end{itemize}

The advantage of $\mathcal{A}$ in this game is defined as $ \advantage[\xi]{\kappa} =  \left | \prob{b=b'}-\frac{1}{2}  \right | $, where $\kappa$ is a secure parameter.
\end{myDef}

We now formally adopt the method presented in \cite{liu2016privacy} to prove the semantic security of our novel $(2,2)$-threshold Paillier cryptosystem (FastPaiTD).

\begin{theorem}
    Assuming \text{FastPai} is semantically secure, the \text{FastPaiTD} in \ref{FastPaiTD} is also semantically secure.
\end{theorem}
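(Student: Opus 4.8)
The plan is to prove the statement by a standard reduction run in the contrapositive: I assume a PPT adversary $\mathcal{A}$ that wins the FastPaiTD semantic-security game with non-negligible advantage, and I build a PPT adversary $\mathcal{B}$ that breaks the ordinary IND-CPA security of FastPai with essentially the same advantage, contradicting the hypothesis. Here $\mathcal{B}$ simultaneously acts as the \emph{challenger} toward $\mathcal{A}$ in the FastPaiTD game and as an \emph{adversary} toward its own FastPai challenger, relaying messages between the two.

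Concretely, $\mathcal{B}$ receives a FastPai public key $pk=(N,h)$ from its challenger (but never the secret $\alpha$). To play challenger for $\mathcal{A}$, $\mathcal{B}$ must hand over a pair $(pk, sk_i)$ consisting of the public key and one partial key. The crucial observation I would exploit is that, following Section~\ref{FastPaiTD}, the share $sk_1$ is generated as a uniform $\sigma$-bit value and is therefore statistically independent of $\alpha$; hence $\mathcal{B}$ can sample $sk_1 \leftarrow \{0,1\}^{\sigma}$ entirely on its own, without knowing $\alpha$. $\mathcal{B}$ then forwards $\mathcal{A}$'s two equal-length challenge plaintexts $m_0, m_1$ to the FastPai challenger, relays the returned challenge ciphertext $c = \texttt{Enc}(pk, m_b)$ back to $\mathcal{A}$, and finally outputs whatever bit $b'$ that $\mathcal{A}$ outputs.

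The next step is to argue view indistinguishability: the triple $(pk, sk_1, c)$ that $\mathcal{A}$ sees in this simulation is distributed exactly as in the real FastPaiTD game, because $pk$ comes from a genuine \texttt{keygen}, the share $sk_1$ has the same uniform distribution in both worlds and is independent of the challenge bit $b$, and $c$ is a correctly formed encryption of $m_b$. It follows that $\prob{b'=b}$ is identical in the two games, so the FastPai advantage of $\mathcal{B}$ equals the FastPaiTD advantage of $\mathcal{A}$; non-negligibility of the latter then forces non-negligibility of the former, which is the desired contradiction.

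The step I expect to be the main obstacle is the case in which the exposed share is $sk_2$ rather than $sk_1$. Unlike $sk_1$, the value $sk_2 = ((2\alpha)^{-1} \bmod N)\cdot (2\alpha) - sk_1 + \eta \cdot 2\alpha N$ depends on $\alpha$, so $\mathcal{B}$ cannot compute it from $pk$ alone and cannot reproduce its exact distribution without the secret. To close this gap I would argue that a single share is useless for decryption --- it reveals nothing about $b$ beyond what $pk$ already fixes --- and that $\mathcal{B}$ can therefore still simulate $\mathcal{A}$'s view, either by appealing to the non-collusion assumption of the threat model (so that only the one share actually held by the corrupted server must ever be simulated) or by reducing the recovery of $\alpha$ from $sk_2$ to the same factoring-type hardness that underlies FastPai. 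Making this independence claim precise, and bounding any resulting statistical or computational loss by a negligible quantity, is the delicate part of the proof.
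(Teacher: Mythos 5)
Your reduction skeleton is the same as the paper's: a relay adversary that plays the FastPaiTD challenger toward $\mathcal{A}$ while forwarding $m_0,m_1$ and the challenge ciphertext to/from a FastPai challenger, so that $\mathcal{A}$'s guess becomes the reduction's guess. The paper's proof is exactly this simulator-based relay, and your handling of the $sk_1$ branch (sample it from its true distribution $\{0,1\}^{\sigma}$, which is independent of $\alpha$, giving a perfect simulation) is if anything cleaner than what the paper writes.

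The genuine gap is the one you yourself flag: the case where the exposed share is the $\alpha$-dependent one. Neither of your two proposed escapes closes it. Appealing to non-collusion does not help, because the semantic-security game for $\zeta$ explicitly hands the adversary \emph{one} of $sk_1, sk_2$; non-collusion only rules out the adversary holding both, so the $sk_2$ case must still be simulated. Reducing to a factoring-type assumption would silently strengthen the hypothesis beyond ``FastPai is semantically secure,'' which is all the theorem assumes. The paper closes the gap with a purely \emph{statistical} simulation argument: the simulator samples the revealed share uniformly from a range computable from the public key alone (it uses $[0, N(N-1)]$), notes that the real share lies in $[0, 2\alpha N]$, and bounds the statistical distance between the simulated and real share distributions by $\frac{2\alpha}{N-1}$, which is negligible since $\alpha$ has far fewer bits than $N$. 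The advantage of the reduction is then at least $\epsilon - \frac{2\alpha}{N-1}$ rather than exactly $\epsilon$. This ``sample the share from a public, slightly-too-large range and pay a negligible statistical loss'' step is the missing idea; without it (or an equivalent), your proof covers only half of the cases the definition requires.
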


\begin{proof}
    The semantic security of FastPai has been proven in \cite{ma2021optimized}.
    As same as \cite{liu2016privacy}, 
    we assume a probabilistic polynomial-time adversary $\mathcal{A}$ who breaks the semantic security of FastPaiTD with an advantage at most $\epsilon$.
    We also construct a simulator $\mathcal{S}$ with the same time complexity as $\mathcal{A}$.
    Then, $\mathcal{S}$, $\mathcal{A}$ and the challenger perform the following operations.
    \begin{itemize}
        \item 
        The challenger obtains the public key $pk=(N,h)$ of FastPai.
        \item
        $\mathcal{S}$ randomly chooses $sk_1$, where $sk_1 \in [0, N(N-1)]$.
        \item
        $\mathcal{A}$ receives $pk$ and $sk_1$ from $\mathcal{S}$,
        then randomly chooses two plaintexts $m_0$ and $m_1$ with same bit-length, and sends $m_0$ and $m_1$ to $\mathcal{S}$.
        \item
        After receiving $m_0$ and $m_1$ from $\mathcal{A}$, $\mathcal{S}$ sends them to the challenger of FastPai.
        \item
        The challenger randomly chooses a bit $b$, and encrypts $m_b$ into a ciphertext $c$ by calling \texttt{Enc}. Subsequently, the challenger sends $c$ to 
        $\mathcal{S}$.
        \item
        $\mathcal{S}$ sends $c$ to $\mathcal{A}$.
        \item
        $\mathcal{A}$ finally outputs a bit $b'$ as the guess of $\mathcal{S}$ and sends it to $\mathcal{S}$.
    \end{itemize}

    From the view of $\mathcal{A}$, excepting $sk_1$, the distributions of $pk$ and challenger's ciphertexts are as same as in the real semantic security experiment. 
    According to \ref{FastPaiTD}, the real $sk_1 \in [0,2\alpha N]$. Therefore, given $X \in [0,2\alpha N]$ and  $Y \in [0, N(N-1)]$, we can calculate that $X$ and $Y$ have at most $\frac{2\alpha}{N-1}$ statistical distance. Hence, $\mathcal{S}$ breaks the semantic security of FastPaiTD with advantage at least $\epsilon - \frac{2\alpha}{N-1}$. 
\end{proof}

The offline and online mechanism consists of two parts. The first part involves computing the encryption of random numbers and some constants. The second part involves constructing a pre-computation table to speed up \texttt{Enc}. We now prove that the offline and online mechanism is secure, meaning that it does not reveal any information about the plaintext.

\begin{theorem}
    The offline and online mechanism in SOCI\textsuperscript{+} does not leak any information when performing secure outsourced computations.
\end{theorem}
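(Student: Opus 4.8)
The plan is to reduce the claim to two facts already in hand: the semantic security of FastPaiTD established above, and the observation that the offline/online split changes only \emph{when} quantities are computed, never \emph{what} is computed. First I would note that the plaintext underlying every entry of a tuple and every entry of the pre-computation table is fixed by the protocol specification and is independent of the private inputs $x$ and $y$; the random scalars $r_1,\dots,r_4$ are drawn and stored locally by $S_0$ and never cross the channel, so they fall outside the adversary's cross-server view. Consequently the adversary's view of a SOCI\textsuperscript{+} execution using the mechanism is just a reordering of the view it would obtain if the same encryptions were produced online, and it suffices to exhibit a simulator that reconstructs every mechanism-supplied value from public data.

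I would then treat the two components separately. For the pre-computation table the argument is immediate: each entry is $(a^{2^{ib}})^{j}$ with the fixed base $a = h^{N} \bmod N^{2}$, a deterministic function of the public key $pk=(N,h)$ alone, so anyone holding $pk$ can rebuild the table and it reveals nothing about any plaintext. For the tuple ciphertexts $\llbracket r_1\rrbracket,\llbracket r_2\rrbracket,\llbracket -r_1 r_2\rrbracket,\llbracket r_3+r_4\rrbracket,\llbracket r_4\rrbracket,\llbracket 0\rrbracket,\llbracket 1\rrbracket$ I would invoke the semantic security of FastPaiTD: each ciphertext distribution is computationally independent of its plaintext, so a simulator may replace every entry by an encryption of $0$ without the adversary noticing.

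The step I expect to be the crux is the \emph{refresh} operation, since the tuple entries are reused across many invocations and a naive reuse would expose correlations between successive executions. Here I would prove that forming $\llbracket m\rrbracket\cdot\llbracket 0\rrbracket$ yields, by the additive homomorphism, a ciphertext that still decrypts to $m$ but whose encryption randomiser is combined with the independent randomiser of the fresh $\llbracket 0\rrbracket$. Under the assumption that underlies FastPai's semantic security this re-randomised ciphertext is computationally indistinguishable from a newly sampled $\llbracket m\rrbracket$, so each refreshed use is distributed like an independent fresh encryption and carries no inter-invocation information. A hybrid argument over the polynomially many refresh steps lifts this single-step indistinguishability to the entire execution.

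Finally I would assemble the simulation: a simulator holding only $pk$ (and one partial key, exactly as in the semantic-security game) regenerates the table deterministically and replaces every tuple and every refreshed ciphertext by fresh encryptions of $0$, feeding the result to $\mathcal{A}$. By the hybrid the simulated transcript is indistinguishable from the real one, whence the mechanism leaks nothing beyond what the underlying protocols already disclose, and the total distinguishing advantage is bounded by a negligible quantity times the number of ciphertexts refreshed. The main obstacle throughout is making the informal phrase ``does not leak any information'' precise as this simulation-based indistinguishability claim and discharging the re-randomisation step rigorously.
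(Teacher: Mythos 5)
Your proposal is correct and rests on the same two-part decomposition the paper uses: (i) the pre-computation table is a deterministic function of the public key alone and hence reveals nothing, and (ii) the tuple ciphertexts are safe to reuse because the refresh step $\llbracket m\rrbracket\cdot\llbracket 0\rrbracket$ produces a new ciphertext of the same plaintext. Where you differ is in the level of formalization. The paper's proof is a two-paragraph informal argument: it asserts that after refreshing ``it appears as if $S_0$ and $S_1$ encrypt a message each time'' and stops there, never defining what ``does not leak any information'' means, never invoking the semantic security of FastPaiTD for the tuple entries, and never addressing the fact that entries are reused across polynomially many invocations. You make all three of these precise: you cast the claim as simulation-based indistinguishability, you reduce the freshness of each tuple entry to semantic security (replacing entries by encryptions of $0$ in the ideal world), and you handle reuse with a hybrid argument over refresh steps, correctly flagging that the re-randomization step --- showing $\llbracket m\rrbracket\cdot\llbracket 0\rrbracket$ is indistinguishable from a fresh $\llbracket m\rrbracket$ even though the combined randomizer $h^{r+r'}$ is not distributed identically to a fresh one --- is the crux that must be discharged under FastPai's underlying assumption. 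Your version buys a concrete, negligible bound on the distinguishing advantage (negligible times the number of refreshes) at the cost of more machinery; the paper's version is shorter but leaves the reuse-across-invocations question, which is exactly where an attack would have to live, entirely implicit.
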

\begin{proof}
    The security of the first part is proven as follow. 
    Each time $S_0$ and $S_1$ extract a ciphertext $\llbracket m \rrbracket$ from $tuple_{S_0}$ and $tuple_{S_1}$, respectively, they immediately adopt the $\llbracket 0 \rrbracket$ in their tuple to compute $\llbracket m' \rrbracket = 
    \llbracket m \rrbracket \cdot \llbracket 0 \rrbracket$. Subsequently, they replace $\llbracket m \rrbracket$ in their tuple with $\llbracket m' \rrbracket$.
    Although $\llbracket m' \rrbracket$ and $\llbracket m \rrbracket$ are the encrypted values of the same number, $\llbracket m' \rrbracket$ is not identical as $\llbracket m \rrbracket$. Therefore, after refreshing a ciphertext, it appears as if $S_0$ and $S_1$ encrypt a message each time. 
    Consequently, the first part of the offline and online mechanism does not disclose any plaintext information.
    
    In the second part, the pre-computation table is constructed from the public key, allowing anyone to create it. 
    Its sole function is to accelerate the \texttt{Enc}. 
    Therefore, the second part of the offline and online mechanism does not disclose any plaintext information. 
\end{proof}

SOCI adopts the simulation paradigm \cite{micali1987play}, also known as the real/ideal model, to prove the security of its \texttt{SMUL} protocol. 
Therefore, we employ the same method to prove the security of \texttt{SMUL} in SOCI\textsuperscript{+}.
Similar to SOCI, SOCI\textsuperscript{+} assumes that $S_0$ and $S_1$ are semi-honest and non-colluding, which means that $S_0$ and $S_1$ may act as adversaries. We use $\mathcal{A}_{\mathcal{S}_0} $ and $\mathcal{A}_{\mathcal{S}_1} $ to denote $S_0$ and $S_1$ as polynomial-time adversaries, respectively.
We now adopt the same method in SOCI to prove the security of \texttt{SMUL} in SOCI\textsuperscript{+}.

\begin{figure*}[!ht]
  \centering
  \subfloat[Comparison of Enc in Different Schemes]{\includegraphics[width=0.31\textwidth]{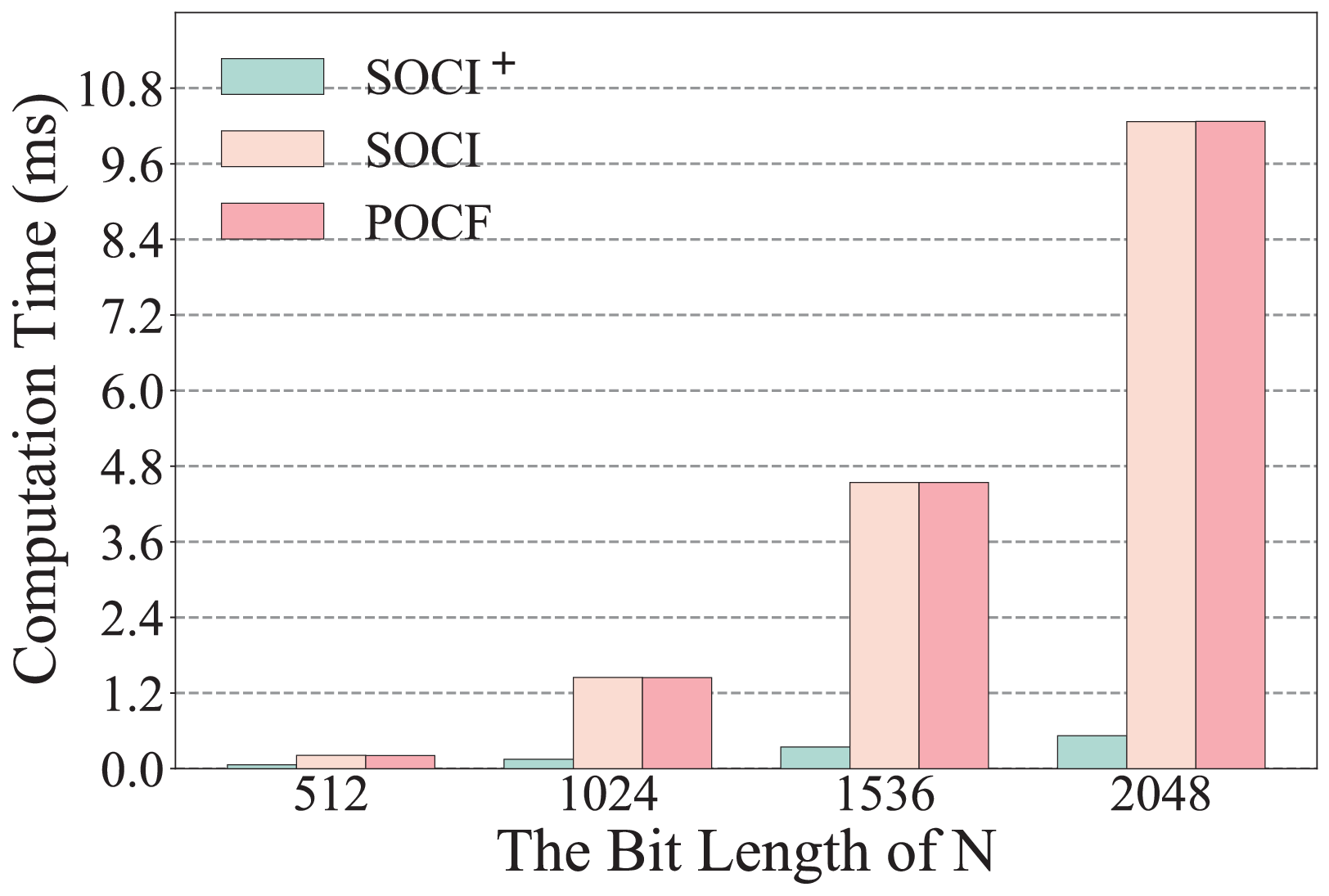}
  \label{fig:Enc}
  }\hfill
  \subfloat[Comparison of Dec in Different Schemes]{\includegraphics[width=0.31\textwidth]{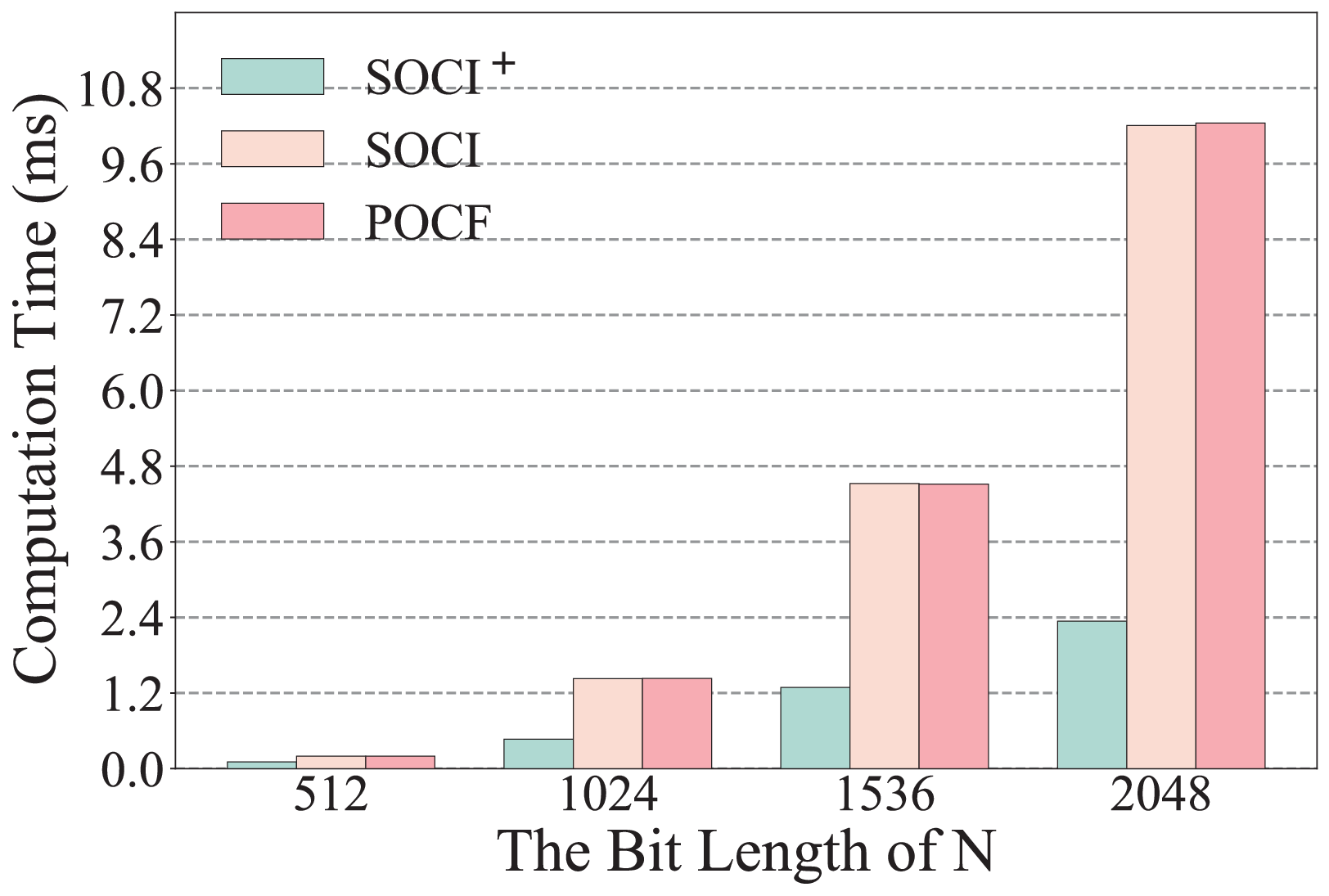}
  \label{fig:Dec}
  }\hfill
  \subfloat[Comparison of PDec (using $sk_2$) in Different Schemes]{\includegraphics[width=0.31\textwidth]{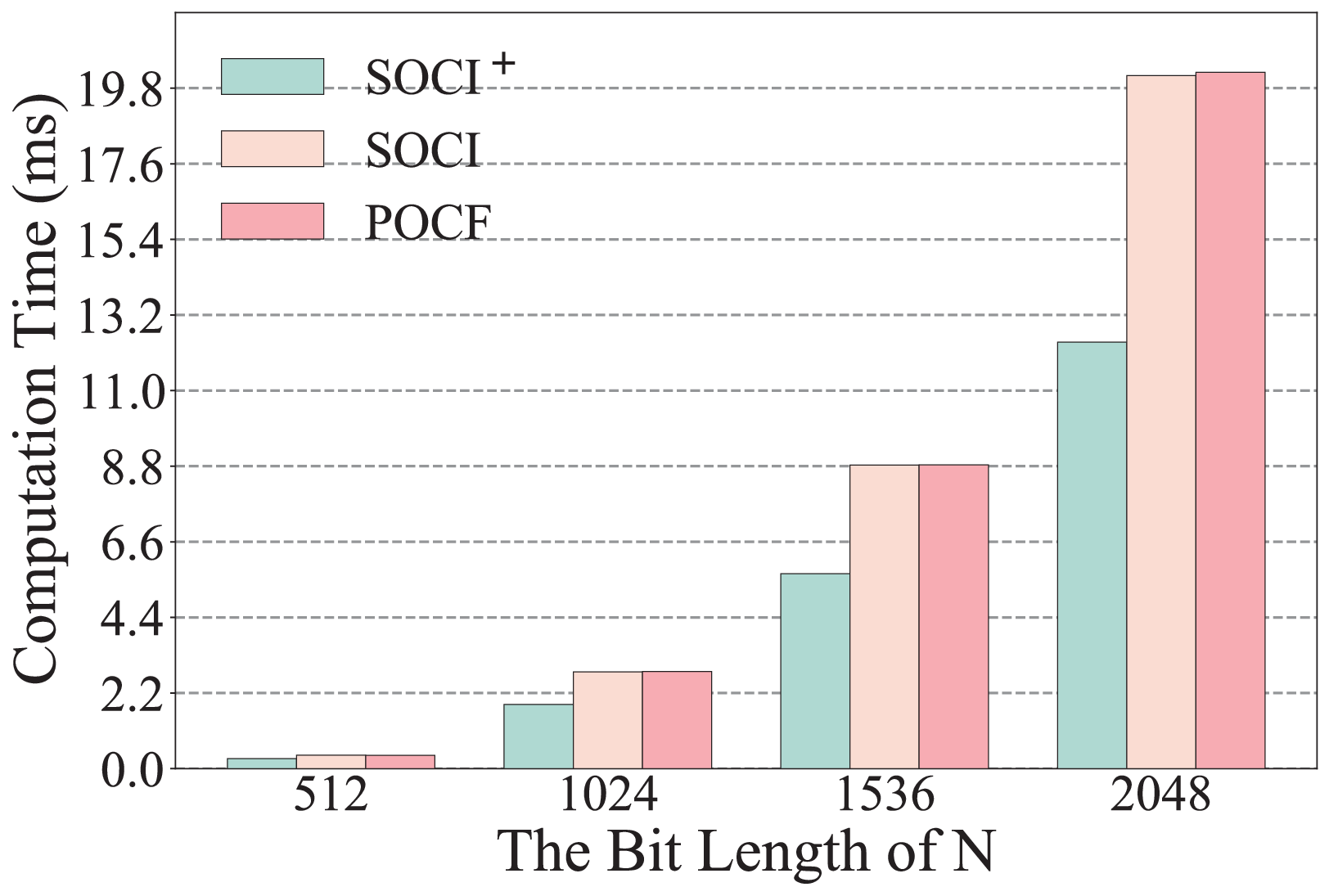}
  \label{fig:PDec}
  }
  \caption{Comparison of Different Threshold Paillier Cryptosystems with a Varying Bit-Length of $N$}
  \label{Comparison of Different}
\end{figure*}

\begin{theorem}
Given ciphertexts $\llbracket x \rrbracket $ and $\llbracket y \rrbracket $, where $x,y\in [-2^l,2^l]$, in the case of semi-honest attackers $\mathcal{A}_{\mathcal{S}_0} $ and $\mathcal{A}_{\mathcal{S}_1} $, the proposed \texttt{SMUL} protocol in SOCI\textsuperscript{+} is able to compute $\llbracket x\cdot y \rrbracket $ securely.
\end{theorem}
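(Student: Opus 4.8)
The plan is to use the simulation paradigm (real/ideal model) in the semi-honest setting: for each corrupted server I construct a probabilistic polynomial-time simulator that, given only that server's input and its prescribed output, produces a transcript (the simulated view) that is computationally indistinguishable from the server's real view in an honest run of \texttt{SMUL}. Since $S_0$ and $S_1$ are assumed non-colluding, it suffices to treat the two adversaries $\mathcal{A}_{\mathcal{S}_0}$ and $\mathcal{A}_{\mathcal{S}_1}$ independently. Correctness of the recovered output $\llbracket x\cdot y \rrbracket$ is already guaranteed by Theorem \ref{thm_SMUL}, so the argument is confined entirely to view indistinguishability.

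For $\mathcal{A}_{\mathcal{S}_0}$, the real view consists of the inputs $\llbracket x \rrbracket$ and $\llbracket y \rrbracket$, the key share $sk_1$, the tuple items $r_1, r_2, \llbracket r_1 \rrbracket, \llbracket r_2 \rrbracket, \llbracket -r_1 r_2 \rrbracket$ that $S_0$ generates locally, and the single incoming message $\llbracket (x+r_1)\cdot(y+r_2) \rrbracket$ from $S_1$. The simulator reproduces every locally generated item exactly and replaces the incoming ciphertext by a fresh $\texttt{Enc}(pk,0)$. Because $S_0$ holds only the partial key $sk_1$ and cannot decrypt, the substituted ciphertext is indistinguishable from the genuine one by the semantic security of FastPaiTD proved above; a one-step hybrid over this single ciphertext closes this direction.

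For $\mathcal{A}_{\mathcal{S}_1}$, which is the harder case, the real view consists of $sk_2$ together with the incoming pair $(C,C_1)$, from which $S_1$ computes $C_2 = \texttt{PDec}(sk_2,C)$ and the plaintext $m = L(x+r_1)+y+r_2 = \texttt{TDec}(C_1,C_2)$, and hence the separated values $x+r_1$ and $y+r_2$. The simulator samples fresh masks $\tilde r_1, \tilde r_2 \leftarrow \{0,1\}^\sigma$, uses them directly in place of the masked summands, forms $\tilde m = L\tilde r_1 + \tilde r_2$, and sets $\tilde C = \texttt{Enc}(pk,\tilde m)$. To produce a consistent $\tilde C_1$ without knowing $sk_1$, it exploits the threshold relation $C_1\cdot C_2 \equiv 1 + mN \pmod{N^2}$: it computes $\tilde C_2 = \texttt{PDec}(sk_2,\tilde C)$ and sets $\tilde C_1 = (1+\tilde m N)\cdot \tilde C_2^{-1} \bmod N^2$, which matches the distribution of a genuine partial decryption relative to $sk_1$.

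The main obstacle is precisely the final indistinguishability step for $S_1$: showing that the masked values $x+r_1$ and $y+r_2$ revealed through threshold decryption leak nothing about $x$ and $y$. Because $r_1,r_2$ are uniform $\sigma$-bit masks with $\sigma = 128 \gg l$ while $x,y \in [-2^l,2^l]$, the distributions of $x+r_1$ and $y+r_2$ are within statistical distance at most $2^{\,l-\sigma+1}$ of the independently sampled $\tilde r_1, \tilde r_2$; moreover the constant choice $L \ge 2^{\sigma+2}$ ensures $y+r_2 < L$, so that the $\lfloor\,\cdot\,/L\rfloor$ and $\bmod\ L$ operations cleanly separate the two summands in both the real and simulated executions. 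Combining this statistical hiding of the decrypted masked values with the consistent reconstruction of $(\tilde C,\tilde C_1)$ yields that the simulated and real views of $S_1$ are indistinguishable. Having simulated both $\mathcal{A}_{\mathcal{S}_0}$ and $\mathcal{A}_{\mathcal{S}_1}$, I conclude that \texttt{SMUL} securely computes $\llbracket x\cdot y \rrbracket$ against semi-honest, non-colluding adversaries.
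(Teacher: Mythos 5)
Your proposal follows essentially the same route as the paper: a real/ideal simulation argument with independent simulators for the two non-colluding servers, reducing $\mathcal{A}_{\mathcal{S}_0}$'s view to the semantic security of FastPaiTD and $\mathcal{A}_{\mathcal{S}_1}$'s view to the statistical hiding of $x+r_1$ and $y+r_2$ by the $\sigma$-bit masks (the paper invokes SOCI's one-time-key lemma for this last step). If anything, your simulator for $S_1$ is slightly more careful than the paper's, since you reconstruct $\tilde{C_1}$ from the threshold identity $C_1\cdot C_2 \equiv 1+mN \pmod{N^2}$ so that the simulated partial decryption is exactly consistent with the $sk_2$ the adversary holds, whereas the paper simply generates it from a freshly sampled fake key share.
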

\begin{proof}
    To simulate $S_0$ and $S_1$, we construct independent simulators $\mathcal{S}_{\mathcal{S}_0}$ and $\mathcal{S}_{\mathcal{S}_1} $, respectively. 

    $\mathcal{S}_{\mathcal{S}_0}$ simulates the view of $\mathcal{A}_{\mathcal{S}_0}$ as follows.
    \begin{itemize}
        \item 
        $\mathcal{S}_{\mathcal{S}_0}$ takes $\llbracket x \rrbracket $, $\llbracket y \rrbracket $, $\llbracket (x+r_1)\cdot (y+r_2) \rrbracket $ as input, and randomly chooses $\tilde{x}$ and $\tilde{y}$, where $\tilde{x},\tilde{y}\in [-2^l,2^l]$. 
        Besides, $\mathcal{S}_{\mathcal{S}_0}$ also randomly chooses $\tilde{r_1}$, $\tilde{r_2}$, and $\tilde{sk_1}$, where  $\tilde{r_1},\tilde{r_2}, \tilde{sk_1} \leftarrow {\{0,1\}}^{\sigma}$.  Subsequently, $\mathcal{S}_{\mathcal{S}_0}$ randomly chooses $\Tilde{L}$ with ($\sigma$+2) bits. 
        \item
        $\mathcal{S}_{\mathcal{S}_0}$ obtains $\llbracket \tilde{x} \rrbracket $, $\llbracket \tilde{y} \rrbracket $, $\tilde{X}$, $\tilde{Y}$, $\llbracket -\tilde{r_2} \tilde{x} \rrbracket $, $\llbracket -\tilde{r_1} \tilde{y} \rrbracket $ and $\llbracket -\tilde{r_1} \tilde{r_2} \rrbracket $  by calling \texttt{Enc} to encrypt $\tilde{x}$, $\tilde{y}$, $\tilde{x}+\tilde{r_1}$, $\tilde{y}+\tilde{r_2}$, $-\tilde{r_2} \tilde{x}$, $-\tilde{r_1} \tilde{y}$ and $-\tilde{r_1} \tilde{r_2}$, respectively.
        Subsequently, $\mathcal{S}_{\mathcal{S}_0}$ computes $\tilde{C} = {\tilde{X}}^{\tilde{L}} \cdot \tilde{Y}$, and gets $\tilde{C_1} \leftarrow \texttt{PDec}(\tilde{sk_1},\tilde{C})$. 
        \item
        $\mathcal{S}_{\mathcal{S}_0}$ computes $\llbracket \tilde{x} \cdot \tilde{y} \rrbracket  =  \llbracket (x+r_1)\cdot (y+r_2) \rrbracket  \cdot \llbracket -\tilde{r_2} \tilde{x} \rrbracket  \cdot \llbracket -\tilde{r_1} \tilde{y} \rrbracket  \cdot \llbracket -\tilde{r_1} \tilde{r_2} \rrbracket $.
        \item
        Finally, $\mathcal{S}_{\mathcal{S}_0}$ outputs the simulation of $\mathcal{A}_{\mathcal{S}_0}$'s entire view, consisting of $\llbracket \tilde{x} \rrbracket $, $\llbracket \tilde{y} \rrbracket $, $\tilde{X}$, $\tilde{Y}$, $\tilde{C}$, $\tilde{C_1}$, $\llbracket -\tilde{r_2} \tilde{x} \rrbracket $, $\llbracket -\tilde{r_1} \tilde{y} \rrbracket $, $\llbracket -\tilde{r_1} \tilde{r_2} \rrbracket $ and $\llbracket \tilde{x} \cdot \tilde{y} \rrbracket $.
    \end{itemize}
    
We conclude that $\mathcal{S}_{\mathcal{S}_0}$'s view in the ideal world and $\mathcal{A}_{\mathcal{S}_0} $'s view in the real word are computationally indistinguishable since the Paillier cryptosystem in \cite{ma2021optimized} is semantically secure.
 
$\mathcal{S}_{\mathcal{S}_1} $ simulates the view of $\mathcal{A}_{\mathcal{S}_1}$ as follows.
    \begin{itemize}
        \item 
        $\mathcal{S}_{\mathcal{S}_1} $ takes ${\llbracket x+r_1 \rrbracket }^L \cdot {\llbracket y+r_2\rrbracket }$, $({\llbracket x+r_1 \rrbracket }^L \cdot {\llbracket y+r_2\rrbracket })^{sk_1}$ as input, and randomly chooses $\tilde{x}$ and $\tilde{y}$, where $\tilde{x},\tilde{y}\in [-2^l,2^l]$. Subsequently, $\mathcal{S}_{\mathcal{S}_1} $ also randomly chooses $\tilde{r_1}$, $\tilde{r_2}$, $\tilde{sk_1}$ and $\tilde{sk_2}$,
        where  $\tilde{r_1},\tilde{r_2}, \tilde{sk_1}, \tilde{sk_2}  \leftarrow {\{0,1\}}^{\sigma}$. 
        Next, $\mathcal{S}_{\mathcal{S}_1} $ randomly chooses
        $\Tilde{L}$ with ($\sigma$+2) bits. 
        \item
        $\mathcal{S}_{\mathcal{S}_1} $ obtains $\tilde{X}$, $\tilde{Y}$ and $\llbracket (\tilde{x}+\tilde{r_1})\cdot (\tilde{y}+\tilde{r_2})\rrbracket $ by calling \texttt{Enc} to encrypt $\tilde{x}+\tilde{r_1}$, $\tilde{y}+\tilde{r_2}$ and $(\tilde{x}+\tilde{r_1})\cdot (\tilde{y}+\tilde{r_2})$. 
        Moreover, $\mathcal{S}_{\mathcal{S}_1} $ computes $\tilde{C} = {\tilde{X}}^{\tilde{L}} \cdot \tilde{Y}$, $\tilde{C_1} \leftarrow \texttt{PDec}(\tilde{sk_1},\tilde{C})$ and $\tilde{C_2} \leftarrow \texttt{PDec}(\tilde{sk_2},\tilde{C})$.
         \item
        Finally, $\mathcal{S}_{\mathcal{S}_1} $ outputs the simulation of $\mathcal{A}_{\mathcal{S}_1}$'s entire view, consisting of $\tilde{C}$, $\tilde{C_1}$, $\tilde{C_2}$, $\tilde{x}+\tilde{r_1}$, $\tilde{y}+\tilde{r_2}$, $(\tilde{x}+\tilde{r_1}) \cdot (\tilde{y}+\tilde{r_2})$ and $\llbracket (\tilde{x}+\tilde{r_1}) \cdot (\tilde{y}+\tilde{r_2})\rrbracket $.
    \end{itemize}
    
     We conclude that $\mathcal{S}_{\mathcal{S}_1} $'s view in the ideal world 
     and $\mathcal{A}_{\mathcal{S}_1}$'s view in the real word are computationally indistinguishable, since the Paillier cryptosystem is semantically secure and SOCI\cite{zhao2022soci} has proven that the one-time key encryption scheme $x+r$ is able to securely hide $x$.
\end{proof}

\begin{theorem}
    Given ciphertexts $\llbracket x \rrbracket $ and $\llbracket y \rrbracket $, where $x,y\in [-2^l,2^l]$, in the case of semi-honest attackers $\mathcal{A}_{\mathcal{S}_0} $ and $\mathcal{A}_{\mathcal{S}_1} $, the proposed \texttt{SCMP} protocol in SOCI\textsuperscript{+} is able to compare $x$ and $y$ securely.
\end{theorem}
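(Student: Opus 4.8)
The plan is to reuse the real/ideal simulation paradigm already applied to \texttt{SMUL}, constructing two independent simulators $\mathcal{S}_{\mathcal{S}_0}$ and $\mathcal{S}_{\mathcal{S}_1}$ for the semi-honest adversaries $\mathcal{A}_{\mathcal{S}_0}$ and $\mathcal{A}_{\mathcal{S}_1}$, and showing that each simulated view is indistinguishable from the corresponding real-world view. Because $S_0$ and $S_1$ are assumed non-colluding, it suffices to treat the two views separately; I will assume the offline-phase tuples are correctly formed and the refresh operation is secure, as already established earlier in the excerpt.

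First I would handle $\mathcal{S}_{\mathcal{S}_0}$. The real view of $\mathcal{A}_{\mathcal{S}_0}$ consists of $\llbracket x \rrbracket$, $\llbracket y \rrbracket$, the locally sampled masks $r_1, r_2$ together with their ciphertexts $\llbracket r_1+r_2 \rrbracket$ and $\llbracket r_2 \rrbracket$, the random bit $\pi$, the computed $D$ with its partial decryption $D_1$, the received $\llbracket \mu_0 \rrbracket$, and the final output $\llbracket \mu \rrbracket$. The simulator draws fresh $\tilde{r}_1, \tilde{r}_2$ from the prescribed ranges, a fresh $\tilde{\pi} \in \{0,1\}$ and a fresh $\tilde{sk}_1$, encrypts dummy values $\tilde{x}, \tilde{y} \in [-2^l, 2^l]$, follows the protocol to obtain $\tilde{D}, \tilde{D}_1$, and replaces $\llbracket \mu_0 \rrbracket$ by a fresh encryption of a random bit. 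Indistinguishability here reduces directly to the semantic security of FastPai \cite{ma2021optimized}: every element of the view is either a freshly sampled random number whose distribution is identical in both worlds, or a ciphertext whose dummy plaintext is computationally hidden.

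The more delicate case is $\mathcal{S}_{\mathcal{S}_1}$, since $\mathcal{A}_{\mathcal{S}_1}$ additionally observes the \emph{plaintext} $d \leftarrow \texttt{TDec}(D_1, D_2)$. Here I would invoke the correctness analysis of Theorem \ref{thm_SCMP}: depending on $\pi$, the value $d$ equals either $r_1(x-y+1)+r_2$ or $r_1(y-x)+r_2$, and the constraints $r_1 \leftarrow \{0,1\}^{\sigma} \setminus \{0\}$, $r_2 \le \tfrac{N}{2}$, $r_1+r_2 > \tfrac{N}{2}$ force $0 < d < N$. The simulator samples its own $\tilde{r}_1, \tilde{r}_2, \tilde{\pi}$ and dummy $\tilde{x}, \tilde{y}$, computes $\tilde{d}$ by the same formula, and derives $\tilde{D}, \tilde{D}_1, \tilde{D}_2$ accordingly. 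Together with the semantic security of these ciphertexts, computational indistinguishability of the two views then hinges on $\tilde{d}$ being statistically close to the real $d$.

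The hard part will be precisely this statistical argument for $d$, because it is the only unencrypted quantity released to $S_1$. The key observation is that the random bit $\pi$, unknown to $S_1$, decouples the threshold test $d > \tfrac{N}{2}$ from the true comparison outcome $\mu$, so from $S_1$'s viewpoint the derived bit $\mu_0$ is essentially unbiased and independent of $x, y$; meanwhile the multiplicative mask $r_1$ together with the additive mask $r_2$ spreads $d$ across its admissible range so that its distribution is, up to negligible statistical distance, governed only by the public constraints rather than by $x-y$. I would make this quantitative by bounding the statistical distance between the real and simulated $d$, in the same spirit as the one-time-pad masking argument SOCI uses to hide $x+r$, and thereby conclude that \texttt{SCMP} compares $x$ and $y$ securely.
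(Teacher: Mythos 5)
Your strategy diverges sharply from the paper's: the paper disposes of this theorem in two sentences, by citing the security proof of \texttt{SCMP} in SOCI \cite{zhao2022soci} together with the already-established security of FastPaiTD and the offline/online mechanism, whereas you attempt a self-contained real/ideal simulation. Your treatment of $\mathcal{A}_{\mathcal{S}_0}$ is unobjectionable (that view consists of ciphertexts plus locally sampled randomness, so semantic security of FastPai suffices), and you correctly isolate the one genuinely delicate object: the plaintext $d$ that $S_1$ recovers via \texttt{TDec} and that the simulator $\mathcal{S}_{\mathcal{S}_1}$ must reproduce without knowing $x$ or $y$.

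The gap sits exactly in the step you flag as the hard part. The claim that $r_1$ and $r_2$ spread $d$ over its admissible range so that its distribution is, up to negligible statistical distance, independent of $x-y$ is false for the masks as specified. Write $\Delta = x-y+1$ (case $\pi=0$) or $\Delta = y-x$ (case $\pi=1$). Since $r_2 \in (\frac{N}{2}-r_1, \frac{N}{2}]$, one has $d = \frac{N}{2} + r_1\Delta - v$ with $v \in [0,r_1)$, so $|d-\frac{N}{2}| < r_1(|\Delta|+1)$ and, crucially, the magnitude of $d-\frac{N}{2}$ scales linearly with $|\Delta|$: for $\Delta=1$ it lies in $(0,2^{\sigma})$ with certainty, while for $\Delta=2$ it lies in $(r_1,2r_1]$ and exceeds $2^{\sigma}$ with constant probability over the choice of $r_1$. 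A distinguisher that merely thresholds $|d-\frac{N}{2}|$ therefore separates the real view from any simulated view built from an independently sampled dummy $\tilde{x},\tilde{y}$ with non-negligible advantage; since $S_1$ receives neither input nor output in the ideal world, the simulator has no way to align $\tilde{\Delta}$ with the real $\Delta$. The random bit $\pi$ hides the sign of $x-y$ (which side of $\frac{N}{2}$ the value $d$ falls on) but not its magnitude. So the quantitative statistical-distance bound your argument hinges on cannot be established as stated; to complete the proof you would either have to weaken the security claim to one that tolerates leakage of the order of magnitude of $|x-y|$, or retreat, as the paper does, to invoking SOCI's proof of its own \texttt{SCMP} rather than proving the masking lemma directly.
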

\begin{proof}
    The proposed \texttt{SCMP} in SOCI\textsuperscript{+} roots in the \texttt{SCMP} in SOCI \cite{zhao2022soci}.
    Since SOCI\cite{zhao2022soci} has proven the security of its \texttt{SCMP} and our building blocks (i.e., FastPaiTD and the offline and online mechanism) are secure, the proposed \texttt{SCMP} protocol is able to compare $x$ and $y$ in a secure manner.
\end{proof}

\begin{theorem}
    Given ciphertext $\llbracket x \rrbracket $, where $x\in [-2^l,2^l]$, in the case of semi-honest attackers $\mathcal{A}_{\mathcal{S}_0} $ and $\mathcal{A}_{\mathcal{S}_1} $, the proposed \texttt{SSBA} protocol in SOCI\textsuperscript{+} is able to obtain $\llbracket s_{x}\rrbracket $ and $\llbracket x^{*}\rrbracket$ securely.
\end{theorem}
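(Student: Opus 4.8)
The plan is to prove security by invoking the sequential (modular) composition theorem for two-party computation against semi-honest adversaries, rather than building monolithic simulators from scratch. The key observation is that \texttt{SSBA} is merely a composition of two already-secure sub-protocols interleaved with local homomorphic operations: it first invokes $\llbracket s_{x} \rrbracket \leftarrow \texttt{SCMP}(\llbracket x \rrbracket, \llbracket 0 \rrbracket)$, then $S_0$ locally computes $\llbracket 1-2s_{x} \rrbracket = \llbracket 1 \rrbracket \cdot \llbracket s_{x} \rrbracket^{N-2}$, and finally invokes $\llbracket x^{*} \rrbracket \leftarrow \texttt{SMUL}(\llbracket 1-2s_{x} \rrbracket, \llbracket x \rrbracket)$. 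Because \texttt{SCMP} and \texttt{SMUL} have each been shown secure against $\mathcal{A}_{\mathcal{S}_0}$ and $\mathcal{A}_{\mathcal{S}_1}$ in the preceding theorems, it suffices to argue that the glue between these calls leaks nothing beyond what the sub-protocols already leak.

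First I would record that the only information exchanged during \texttt{SSBA}, beyond the two sub-protocol transcripts, is the ciphertext $\llbracket 0 \rrbracket$ fed into \texttt{SCMP} (extracted from $tuple_{S_0}$ and refreshed) and the locally derived ciphertext $\llbracket 1-2s_{x} \rrbracket$. Both are manipulated by $S_0$ alone through public additive and scalar-multiplication homomorphism, with no decryption, partial decryption, or extra message sent to $S_1$ between the two calls. Hence every intermediate value visible to either party is either a fresh ciphertext, whose distribution is reproducible under the semantic security of FastPai, or precisely a sub-protocol transcript already covered by the corresponding sub-simulator.

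Next I would construct the two simulators $\mathcal{S}_{\mathcal{S}_0}$ and $\mathcal{S}_{\mathcal{S}_1}$ by concatenation: each simulator runs the \texttt{SCMP} simulator on input $(\llbracket x \rrbracket, \llbracket 0 \rrbracket)$, then synthesizes a handle for $\llbracket 1-2s_{x} \rrbracket$ from the simulated output ciphertext via the same public homomorphic operations, and finally feeds this as the first argument to the \texttt{SMUL} simulator. The composition theorem then guarantees that stitching computationally indistinguishable sub-views together with locally computable, leakage-free operations yields a view that is itself computationally indistinguishable from the real execution, so the outputs of $\mathcal{S}_{\mathcal{S}_0}$ and $\mathcal{S}_{\mathcal{S}_1}$ match the real-world views of $\mathcal{A}_{\mathcal{S}_0}$ and $\mathcal{A}_{\mathcal{S}_1}$.

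The main obstacle I anticipate is the interface between the two sub-protocols: one must verify that the ciphertext $\llbracket s_{x} \rrbracket$ output by \texttt{SCMP} and reused to form \texttt{SMUL}'s input does not correlate the two transcripts in a manner the composition argument forbids. Since $\llbracket s_{x} \rrbracket$ is held solely by $S_0$ and is never decrypted before being re-masked inside \texttt{SMUL}, the adversary $\mathcal{A}_{\mathcal{S}_1}$ never learns $s_{x}$ or $x^{*}$ in the clear, and the intermediate ciphertext may be replaced by an encryption of an arbitrary bit without detection. Confirming this independence formally---that feeding one secure protocol's ciphertext output into the next secure protocol preserves simulatability---is the crux, and it follows from semantic security together with the fact that only $S_0$ ever holds the plaintext-bearing handle.
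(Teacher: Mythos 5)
Your proposal is correct and follows essentially the same route as the paper: the paper's own proof simply observes that \texttt{SSBA} is built by calling \texttt{SCMP} and \texttt{SMUL}, and concludes security from the security of those sub-protocols and the underlying building blocks. Your version is considerably more careful than the paper's one-line argument---explicitly invoking sequential composition, constructing the simulators by concatenation, and checking that the local homomorphic glue between the two calls leaks nothing---but the underlying idea is identical.
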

\begin{proof}
    The proposed \texttt{SSBA} is constructed by calling \texttt{SCMP} and \texttt{SMUL}.
    Since these protocols and the building blocks (i.e., FastPaiTD and the offline and online mechanism) are secure, the proposed \texttt{SSBA} protocol is able to obtain $\llbracket s_{x}\rrbracket $ and $\llbracket x^{*}\rrbracket$ in a secure manner.
\end{proof}

\begin{theorem}
     Given ciphertexts $\llbracket x \rrbracket $ and $\llbracket y \rrbracket $, where $x\in [0,2^l]$ and $y\in (0,2^l]$, in the case of semi-honest attackers $\mathcal{A}_{\mathcal{S}_0} $ and $\mathcal{A}_{\mathcal{S}_1} $, the proposed \texttt{SDIV} protocol in SOCI\textsuperscript{+} is able to obtain $\llbracket q \rrbracket $ and $\llbracket e \rrbracket$ securely.
\end{theorem}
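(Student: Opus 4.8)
The plan is to prove security by a sequential composition argument, leveraging the fact that \texttt{SDIV} is assembled entirely from invocations of the already-proven-secure \texttt{SCMP} and \texttt{SMUL} protocols together with local homomorphic manipulations performed by $S_0$ alone. First I would observe that the control flow of \texttt{SDIV} is fixed and input-independent: the loop always executes exactly $l+1$ iterations indexed by $i \in \{l, l-1, \ldots, 0\}$, a count determined solely by the public parameter $l$ and never by the secret magnitudes of $x$ or $y$. Hence the sequence of subprotocol calls and message exchanges is identical for all admissible inputs, so no information leaks through the execution pattern itself.

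Next I would catalogue every operation in the online phase and classify each as either a call to a secure subprotocol or a purely local ciphertext manipulation. The computations $\llbracket c \rrbracket = \llbracket y \rrbracket^{2^i}$, $\llbracket {\mu}' \rrbracket = \llbracket 1 \rrbracket \cdot \llbracket \mu \rrbracket^{N-1}$, $\llbracket q \rrbracket = \llbracket q \rrbracket \cdot \llbracket {\mu}' \rrbracket^{2^i}$, and $\llbracket x \rrbracket = \llbracket x \rrbracket \cdot \llbracket m \rrbracket^{N-1}$ are all performed by $S_0$ using public exponents (powers of two and $N-1$) and semantically secure ciphertexts, with no interaction and no decryption; by the semantic security of FastPaiTD and the already-established security of the offline and online refreshing mechanism, these steps reveal nothing. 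The only interactive steps are the per-iteration \texttt{SCMP} and \texttt{SMUL} invocations, for which simulators $\mathcal{S}_{\mathcal{S}_0}$ and $\mathcal{S}_{\mathcal{S}_1}$ have been constructed in the preceding theorems.

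I would then construct the \texttt{SDIV} simulators for $\mathcal{A}_{\mathcal{S}_0}$ and $\mathcal{A}_{\mathcal{S}_1}$ by stitching together the subprotocol simulators iteration by iteration: at each round the simulator invokes the \texttt{SCMP} simulator and then the \texttt{SMUL} simulator on freshly sampled dummy inputs, and fills in the local ciphertext operations with encryptions of random plaintexts, refreshed as in the online mechanism. Because every intermediate quantity the servers observe is either the output of a secure subprotocol or a fresh semantically secure ciphertext, the simulated view is computationally indistinguishable from the real view, and the conclusion follows from the modular composition theorem for semi-honest security.

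The main obstacle I anticipate is arguing composition rigorously rather than by appeal: I must verify that the ciphertexts passed as inputs from one subprotocol call to the next, namely the running remainder $\llbracket x \rrbracket$, the running quotient $\llbracket q \rrbracket$, and the derived $\llbracket c \rrbracket$ and $\llbracket {\mu}' \rrbracket$, are never decrypted or partially decrypted outside the masked internal steps of \texttt{SCMP} and \texttt{SMUL}, so that their plaintexts remain hidden across iterations. Once this non-leakage of inter-round state is established, a hybrid argument over the $l+1$ iterations closes the proof.
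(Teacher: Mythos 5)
Your proposal is correct and follows the same route as the paper: security of \texttt{SDIV} by composition of the already-proven-secure \texttt{SCMP} and \texttt{SMUL} subprotocols together with local homomorphic operations on ciphertexts held by $S_0$. The paper's own proof is a two-sentence appeal to this composition, whereas your version supplies the substance that appeal presupposes --- the input-independent control flow over the $l+1$ iterations, the classification of the local steps as non-interactive manipulations protected by semantic security, the stitched-together simulators, and the closing hybrid argument --- so it is strictly more detailed than what the paper actually prints.
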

\begin{proof}
    The proposed \texttt{SDIV} is constructed by calling \texttt{SCMP} and \texttt{SMUL}.
    Since these protocols and the building blocks (i.e., FastPaiTD and the offline and online mechanism) are secure, the proposed \texttt{SDIV} protocol is able to obtain $\llbracket q\rrbracket $ and $\llbracket e \rrbracket$ in a secure manner.
\end{proof}

\begin{table*}[!ht]
\setlength\tabcolsep{3.1pt}
\centering
\caption{Comparison of Basically Cryptographic Operations and Storage Costs assuming the Bit-Length of N
is 2048 (112-Bit Security)}
\begin{threeparttable}
\begin{tabular}{@{}cccccccccccccc@{}}
\toprule
Scheme                   &Keygen                & Enc                & Dec & PDec ($sk_1$) \tnote{1} & PDec ($sk_2$) \tnote{1} & TDec   & Addition    &  Scalar-mul\tnote{2} & Subtraction   & PK\tnote{3} & SK\tnote{3}          & Ciphertext\tnote{3} \\ \midrule
SOCI\textsuperscript{+}  & \textbf{148.036} ms  & \textbf{0.522} ms  & \textbf{2.340} ms & \textbf{0.704} ms  & \textbf{12.412} ms & \textbf{0.007} ms & \textbf{0.006} ms & \textbf{0.066} ms  & \textbf{0.058} ms & 0.500 KB           & \textbf{0.055} KB            & \textbf{0.500} KB           \\
SOCI                     &1036.621 ms           & 10.269 ms          & 10.207 ms & 0.705 ms & 20.168 ms & 0.007 ms & 0.006 ms & 0.066 ms   & 0.058 ms     & \textbf{0.250} KB  & 0.250 KB                     & 0.500 KB           \\
POCF                     &1043.951 ms           & 10.273 ms          & 10.246 ms & 0.706 ms & 20.264 ms  & 0.007 ms & 0.006 ms & 0.066 ms   & 0.058 ms     & 0.250 KB           & 0.250 KB                     & 0.500 KB           \\ \bottomrule
\end{tabular}
\begin{tablenotes}
    \footnotesize
    \item[1] PDec ($sk_1$) and PDec ($sk_2$) stand for performing PDec with $sk_1$ and $sk_2$, respectively.
    \item[2] Scalar-mul stands for scalar-multiplication.
    \item[3] PK, SK and Ciphertext stand for size of public key, size of private key and size of ciphertext, respectively.
\end{tablenotes}
\end{threeparttable}
\label{basic_operation_and_storage}
\end{table*}

\section{Experimental Evaluation}\label{Section_7}
SOCI\textsuperscript{+} has protocols similar to the privacy preserving integer calculation protocols in POCF \cite{liu2016privacy}. 
In the rest of this paper, for simplicity, we denote the privacy preserving integer calculation protocols in POCF as POCF.
To evaluate the computation and communication costs, we implement SOCI\textsuperscript{+} (which is open source\footnote{https://github.com/W-Q-Deng/SOCI-plus}), SOCI, and POCF using gmpy2-2.1.0a1 in Python 3.6.8 on two identical servers (CPU: AMD EPYC 7402 24-Core Processor; Memory: 128 GB).
In our experiments, we set $l=32$ and $\sigma = 128$. Specially, we set $l=10$ when evaluating the $\texttt{SDIV}$ protocol.
When constructing a pre-computation table to speed up $\texttt{Enc}$ in SOCI\textsuperscript{+}, we set $b = 5$, and the parameter $len$ is equal to the bit-length of $sk$.
It should be noted that POCF fails to support $\texttt{SSBA}$ and $\texttt{SDIV}$. 
Therefore, we adopt the system architecture of POCF to implement $\texttt{SSBA}$ and $\texttt{SDIV}$ proposed by \cite{liu2016efficient}, and regard them as components of POCF.
We repeat all experiments for 500 times with a single thread and take the average as experimental results.
In the rest of this paper, we adopt $|N|$ to denote the bit length of $N$. When presenting the experimental results in the form of table, we highlight all the best results in bold.

\begin{table*}[!ht]
\centering
\caption{Comparison of Computation Costs and Communication Costs assuming 112-Bit Security}
\begin{tabular}{@{}cccc|ccc|ccc@{}}
\toprule
\multirow{2}{*}{Algorithm} & \multicolumn{3}{c|}{Computation Costs}               & \multicolumn{3}{c|}{Communication Costs}          & \multicolumn{3}{c}{Running Time}                      \\ \cmidrule(l){2-10} 
                           & SOCI\textsuperscript{+} & SOCI        & POCF        & SOCI\textsuperscript{+} & SOCI      & POCF       & SOCI\textsuperscript{+} & SOCI         & POCF         \\ \midrule
\texttt{SMUL}                       & \textbf{15.698} ms      & 84.098 ms   & 92.104 ms   & \textbf{1.498} KB       & 2.498 KB  & 2.498 KB   & \textbf{15.821} ms     & 84.303 ms   & 92.309 ms   \\
\texttt{SCMP}                       & \textbf{19.037} ms      & 52.342 ms   & 53.015 ms   & \textbf{1.498} KB       & 1.499 KB  & 1.499 KB   & \textbf{19.160} ms     & 52.465 ms   & 53.138 ms   \\
\texttt{SSBA}                       & \textbf{34.773} ms      & 157.054 ms  & 155.460 ms  & \textbf{2.997} KB       & 3.996 KB  & 3.997 KB   & \textbf{35.019} ms     & 157.381 ms   & 155.787 ms   \\
\texttt{SDIV}                       & \textbf{382.624} ms     & 1524.189 ms & 8524.647 ms & \textbf{32.965} KB      & 43.959 KB & 244.314 KB & \textbf{385.324} ms    & 1527.790 ms & 8544.661 ms \\ \bottomrule
\end{tabular}
\label{table_protocol_comparison}
\end{table*}

\begin{table}[!ht]
\setlength\tabcolsep{3.1pt}
\centering
\caption{Theoretical Comparison of Communication Costs}
\begin{tabular}{@{}ccccc@{}}
\toprule
Scheme & \texttt{SMUL}      & \texttt{SCMP}      & \texttt{SSBA}      & \texttt{SDIV}                    \\ \midrule
SOCI\textsuperscript{+}  & \textbf{3$|N^2|$} bits & \textbf{3$|N^2|$} bits & \textbf{6$|N^2|$} bits & \textbf{6($l+1$)$|N^2|$} bits        \\
SOCI   & 5$|N^2|$ bits & 3$|N^2|$ bits & 8$|N^2|$ bits & 8($l+1$)$|N^2|$ bits        \\
POCF   & 5$|N^2|$ bits & 3$|N^2|$ bits & 8$|N^2|$ bits & ($3l^2+13l+59$)$|N^2|$ bits \\ \bottomrule
\end{tabular}
\label{table-Theoretical Comparison of Communication Cost}
\end{table}

\subsection{Performance of Different Threshold Paillier Cryptosystem}
In this subsection, we evaluate the performance of different threshold Paillier cryptosystems, which form the foundation of SOCI\textsuperscript{+}, SOCI and POCF. 
    
In our experiments, the size of private key in SOCI\textsuperscript{+} is 448 bits when $|N|=2048$, thus the size of private key in SOCI\textsuperscript{+} is about 0.055 KB.
A smaller private key in SOCI\textsuperscript{+} leads to faster \texttt{PDec}.
Figs. \ref{Comparison of Different}\subref{fig:Enc} and \ref{Comparison of Different}\subref{fig:Dec} compare the computation costs of \texttt{Enc} and \texttt{Dec} with different bit-length of N among SOCI\textsuperscript{+}, SOCI and POCF. The results show that SOCI\textsuperscript{+} has fastest encryption and decryption. 

In SOCI\textsuperscript{+} and SOCI, we can compute $M_1 \leftarrow \texttt{PDec}(c,sk_1)$ and $M_2 \leftarrow \texttt{PDec}(c,sk_2)$ to partially decrypt a ciphertext c, respectively. After obtaining $M_1$ and $M_2$, the corresponding plaintext $m$ can be obtained by computing $\texttt{TDec}(M_1,M_2)$. In \cite{liu2016privacy}, POCF has the operations of $\texttt{PDec1}$ and $\texttt{PDec2}$, where $\texttt{PDec1}$ is equivalent to $\texttt{PDec}(c,sk_1)$, and $\texttt{PDec2}$ integrates $\texttt{PDec}(c,sk_2)$ and $\texttt{TDec}(M_1,M_2)$.
For convenience, when describing POCF, we adopt \texttt{PDec} and \texttt{TDec} instead of \texttt{PDec1} and \texttt{PDec2}. For SOCI\textsuperscript{+}, SOCI, and POCF, $sk_1$ is set to be the same number with $\sigma$ bits. 
For SOCI and POCF, we set $sk_2 = \lambda \cdot ({\lambda}^{-1}\!\!\! \mod  N) - sk_1$, where $\lambda$ is the private key of SOCI and POCF, and we set $sk_2 = ( (2\alpha)^{-1}\!\!\!  \mod N) \cdot (2\alpha) - sk_1$ for SOCI\textsuperscript{+}.
Table \ref{basic_operation_and_storage} presents the computation costs comparison of \texttt{PDec} and \texttt{TDec} among SOCI\textsuperscript{+}, SOCI and POCF. The computation costs of \texttt{PDec} (using $sk_1$) and \texttt{TDec} are almost the same in all schemes, but SOCI\textsuperscript{+} achives best performance in \texttt{PDec} (using $sk_2$). 
Fig. \ref{Comparison of Different}\subref{fig:PDec} presents an intuitive comparison of $\texttt{PDec}$ (using $sk_2$), demonstrating that SOCI\textsuperscript{+} outperforms the other two schemes and improves the computation costs by approximately $1.6$ times compared to SOCI
 when $|N| = 2048$.

\subsection{Evaluations for Secure Outsourced Computation Protocols}
In this subsection, we compare the computation costs, communication costs, and running time of \texttt{SMUL}, \texttt{SCMP}, \texttt{SSBA} and \texttt{SDIV} to evaluate their performance. We define the running time as the sum of computation time and communication time, and assuming a bandwidth with 100 Mbps.

Table \ref{table_protocol_comparison} presents the comparison of computation costs, communication costs, and running time among SOCI\textsuperscript{+}, SOCI and POCF. 
The experimental results demonstrate that SOCI\textsuperscript{+} outperforms the other two schemes. %Specifically, computation costs of SMUL, SCMP, SSBA and SDIV in SOCI+ are about 18.7\%, 36.4\%, 22.1\% and 25.1\% of SOCI’s.
Specifically, experimental results indicate that SOCI\textsuperscript{+} improves the computation costs by $2.7 - 5.4$ times compared to SOCI. 
Figs. \ref{Computation Cost of Different Schemes}\subref{fig:local_SMUL}, \ref{Computation Cost of Different Schemes}\subref{fig:local_SCMP}, \ref{Computation Cost of Different Schemes}\subref{fig:local_SSBA} and \ref{Computation Cost of Different Schemes}\subref{fig:local_SDIV} present the computation costs comparison of \texttt{SMUL}, \texttt{SCMP}, \texttt{SSBA} and \texttt{SDIV}, respectively.
The results indicate that SOCI\textsuperscript{+} outperforms both SOCI and POCF in terms of computation costs, and the advantage of SOCI\textsuperscript{+} increase with $|N|$.

Table \ref{table_protocol_comparison} demonstrates that SOCI\textsuperscript{+} generally reduces communication costs by approximately $25\% - 40\%$ compared to SOCI, except for \texttt{SCMP}.
The experimental results for communication costs of \texttt{SMUL}, \texttt{SCMP}, \texttt{SSBA} and \texttt{SDIV} are presented at Figs. \ref{Communication Cost of Different Schemes}\subref{fig:cost_SMUL}, \ref{Communication Cost of Different Schemes}\subref{fig:cost_SCMP}, \ref{Communication Cost of Different Schemes}\subref{fig:cost_SSBA} and \ref{Communication Cost of Different Schemes}\subref{fig:cost_SDIV}, respectively. 
While the three shcemes has almost the same communication costs for \texttt{SCMP}, SOCI\textsuperscript{+} exhibits significant advantage in other protocols when $N$ is large.
To better understand the differences in communication costs among the three schemes, we present a theoretical analysis of communication costs for SOCI\textsuperscript{+}, SOCI and POCF in Table \ref{table-Theoretical Comparison of Communication Cost}. 
The experimental results align with theoretical analysis of communication costs.

The running time of the proposed protocols is affected by computation power and bandwidth.
As previously mentioned, the running time is the sum of computation time and communication time, and we assume the bandwidth is 100 Mbps.
The experimental results for running time are presented in the right-hand side of Table \ref{table_protocol_comparison}.
The results indicate that SOCI\textsuperscript{+} improves $2.7 - 5.3$ times in terms of running time compared to SOCI.
Figs. \ref{Runing Time Cost of Different Schemes}\subref{fig:running_SMUL}, \ref{Runing Time Cost of Different Schemes}\subref{fig:running_SCMP}, \ref{Runing Time Cost of Different Schemes}\subref{fig:running_SSBA} and \ref{Runing Time Cost of Different Schemes}\subref{fig:running_SDIV} present the experimental results for running time with a varying $|N|$.
The results show that SOCI\textsuperscript{+} outperforms both SOCI and POCF in terms of running time with different $|N|$.

\begin{figure*}[!ht]
  \centering
  \subfloat[Computation Costs of SMUL in Different Schemes]{
  \includegraphics[width=0.23\textwidth]{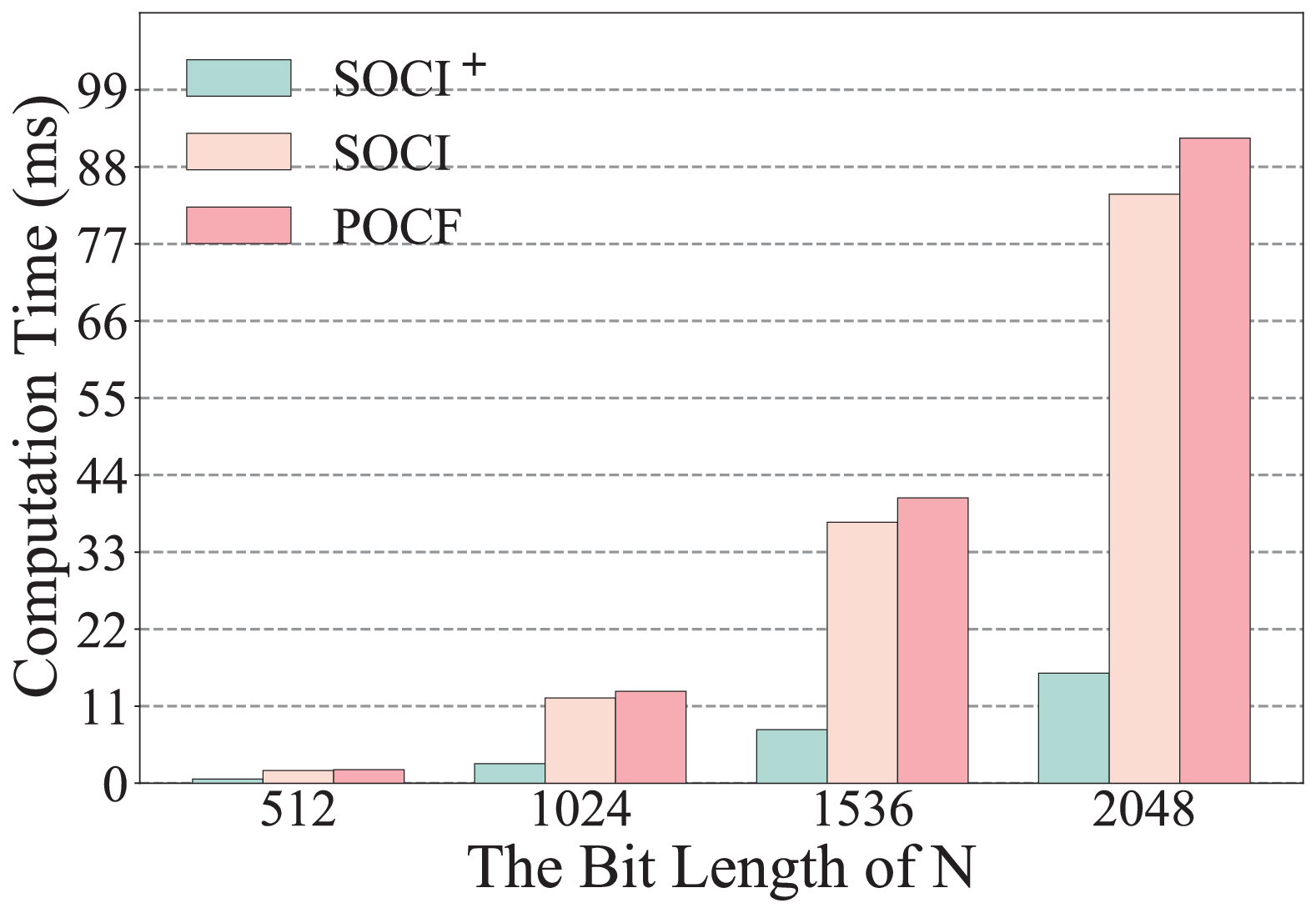}
  \label{fig:local_SMUL}
  }\hfill
  \subfloat[Computation Costs of SCMP in Different Schemes]{
  \includegraphics[width=0.23\textwidth]{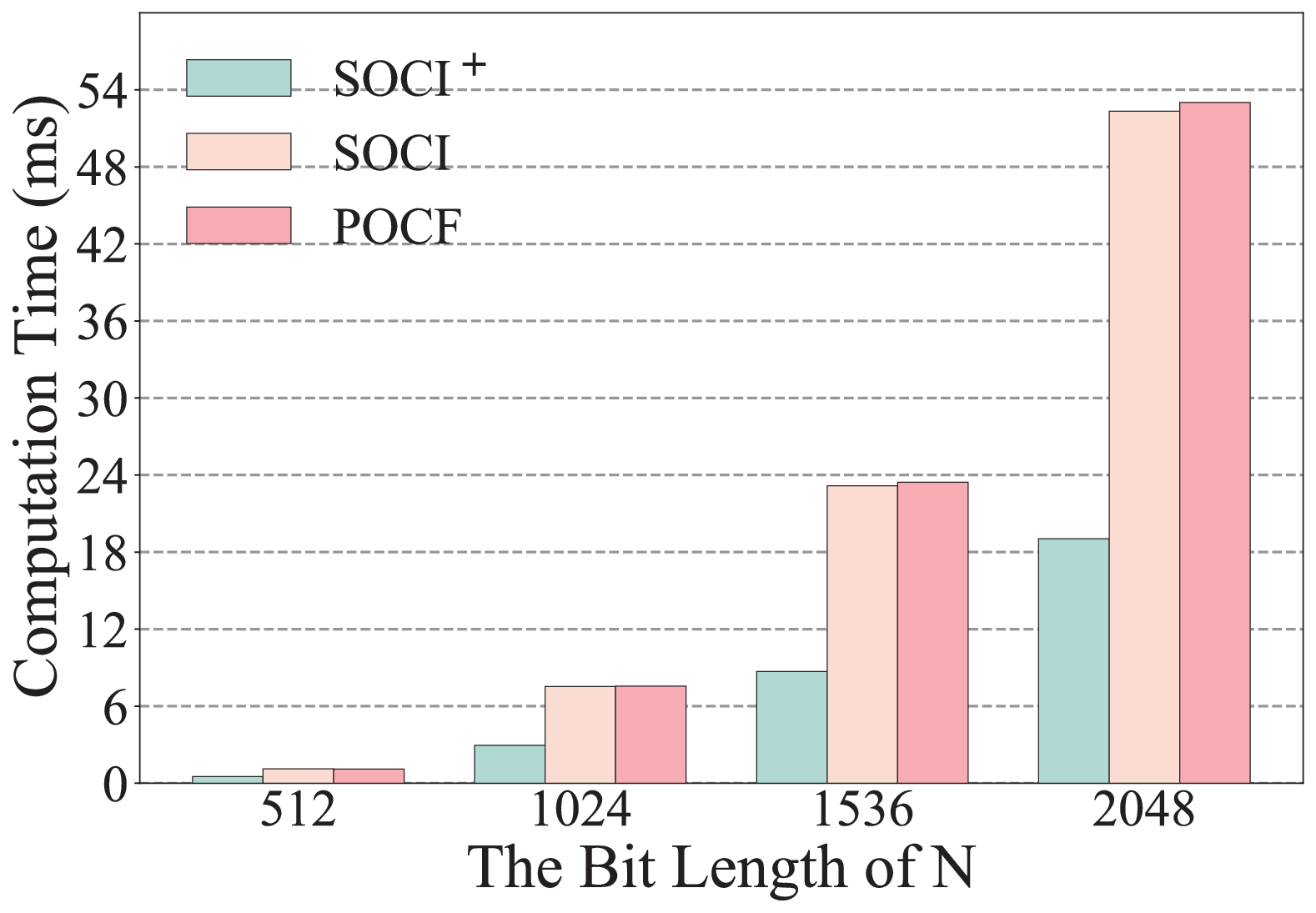}
  \label{fig:local_SCMP}
  }\hfill
  \subfloat[Computation Costs of SSBA in Different Schemes]{
  \includegraphics[width=0.23\textwidth]{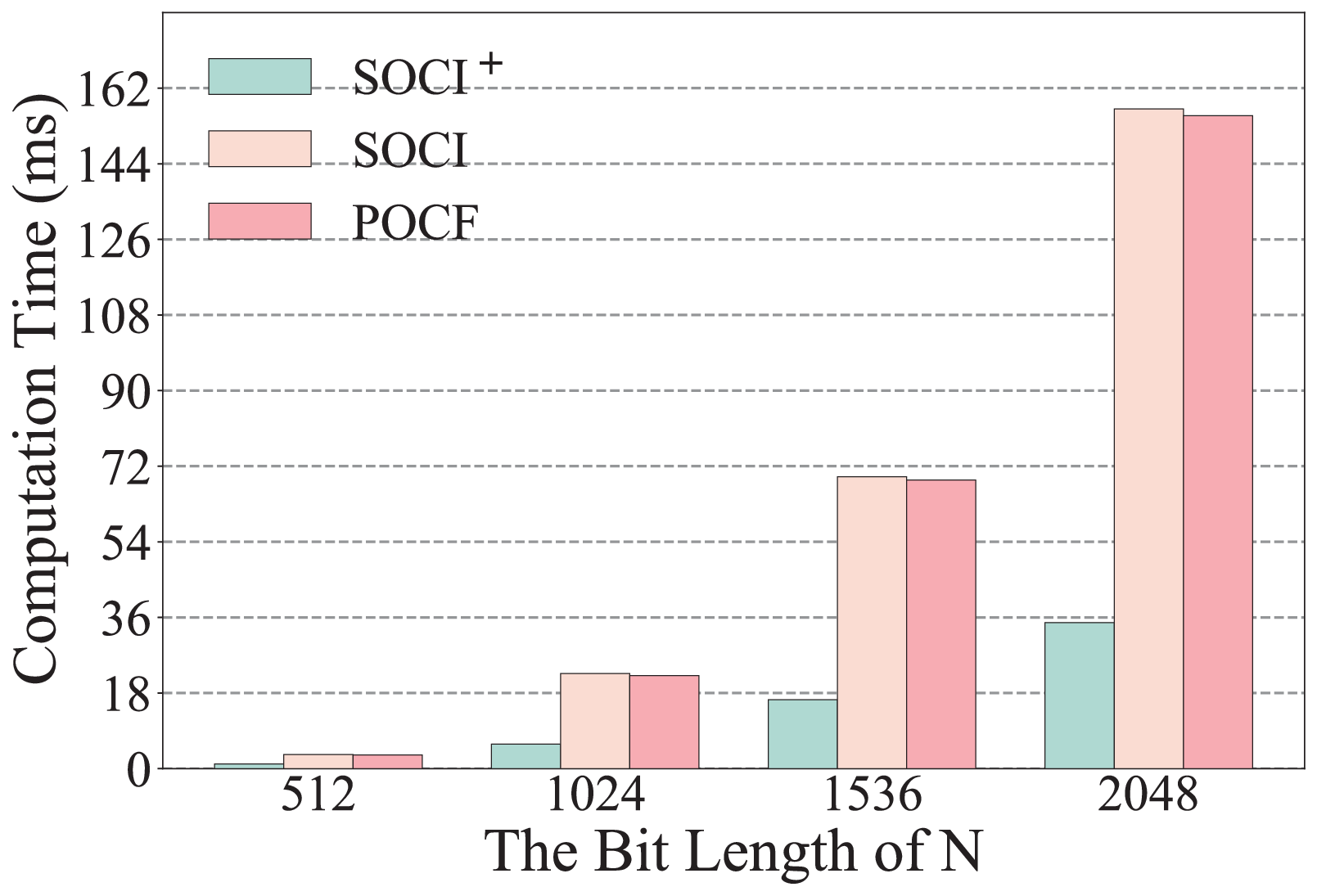}
  \label{fig:local_SSBA}
  }\hfill
  \subfloat[Computation Costs of SDIV in Different Schemes]{
  \includegraphics[width=0.23\textwidth]{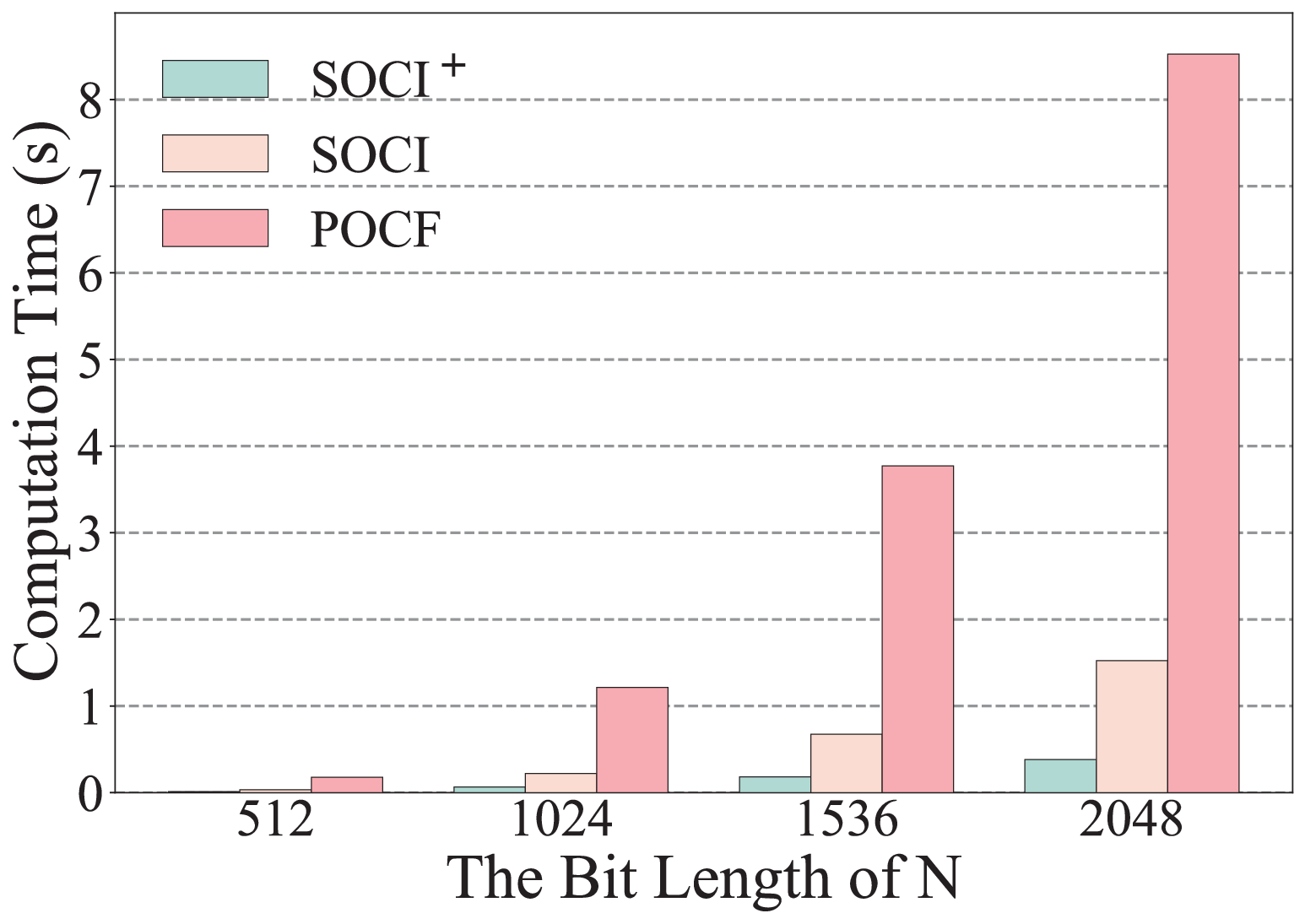}
  \label{fig:local_SDIV}
  }
  \caption{Computation Costs Comparison of Different Schemes with a Varying Bit-Length of $N$}
  \label{Computation Cost of Different Schemes}
\end{figure*}

\begin{figure*}[!ht]
  \centering
  \subfloat[Communication Costs of SMUL in Different Schemes]{
  \includegraphics[width=0.23\textwidth]{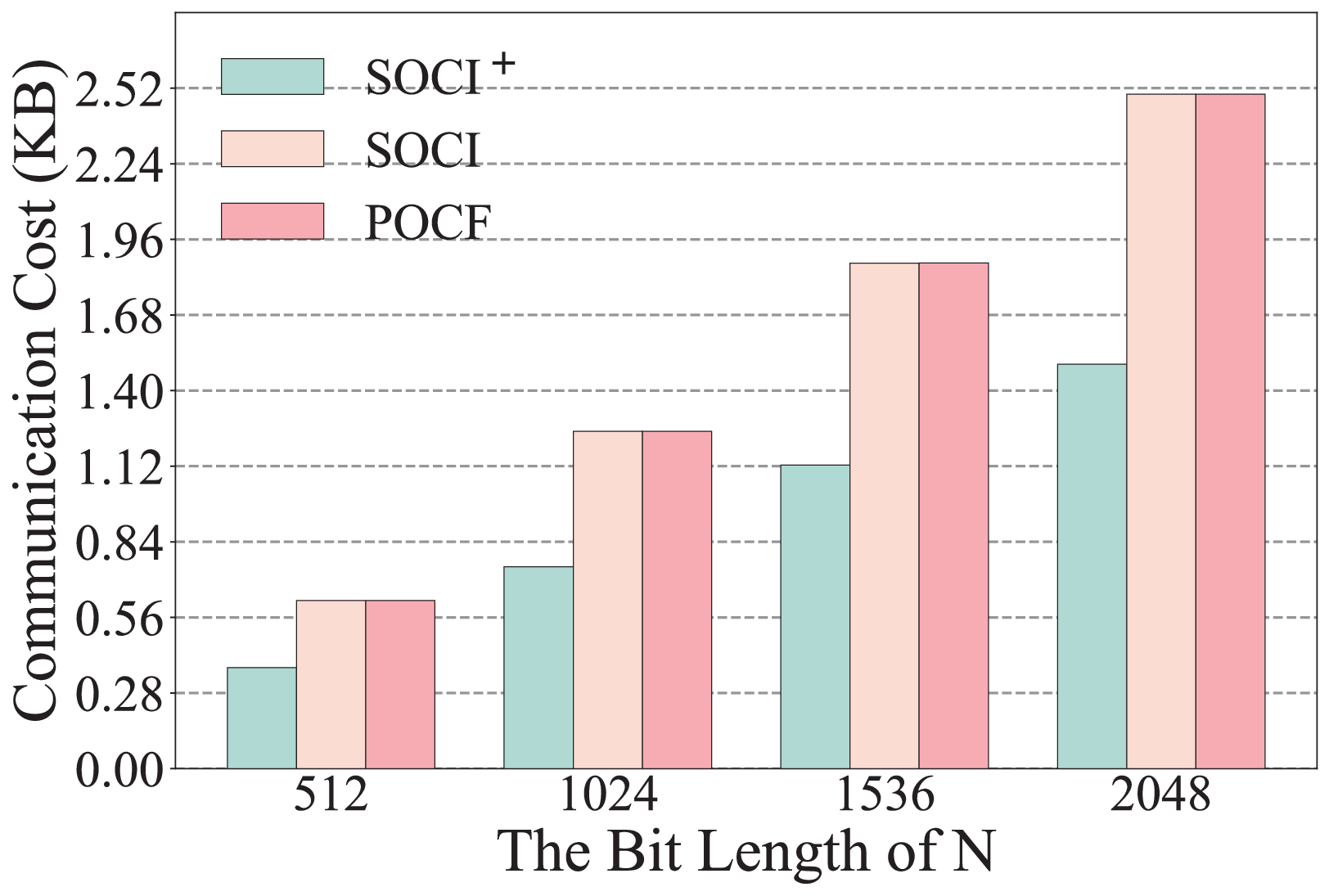}
  \label{fig:cost_SMUL}
  }\hfill
  \subfloat[Communication Costs of SCMP in Different Schemes]{
  \includegraphics[width=0.23\textwidth]{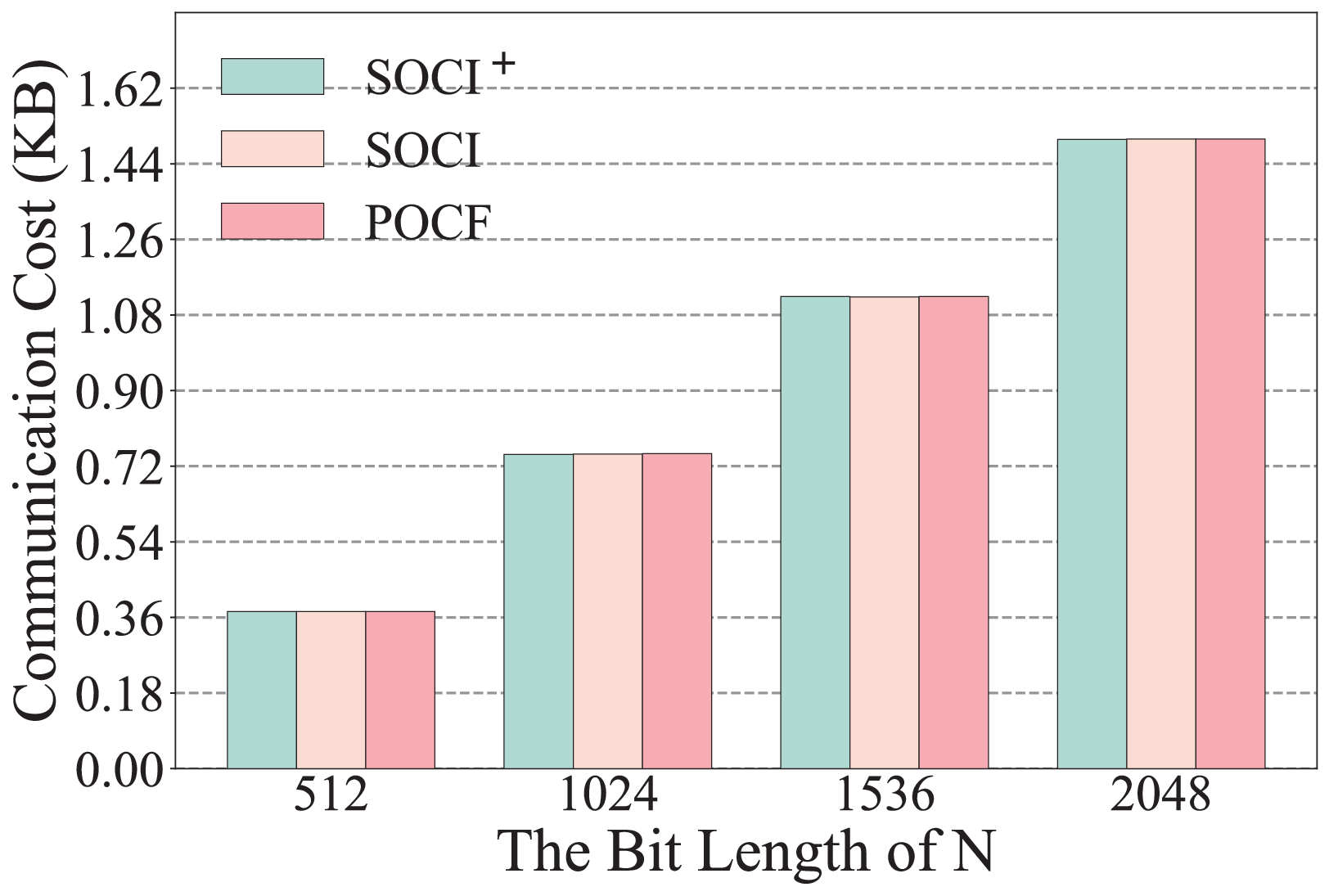}
  \label{fig:cost_SCMP}
  }\hfill
  \subfloat[Communication Costs of SSBA in Different Schemes]{
  \includegraphics[width=0.23\textwidth]{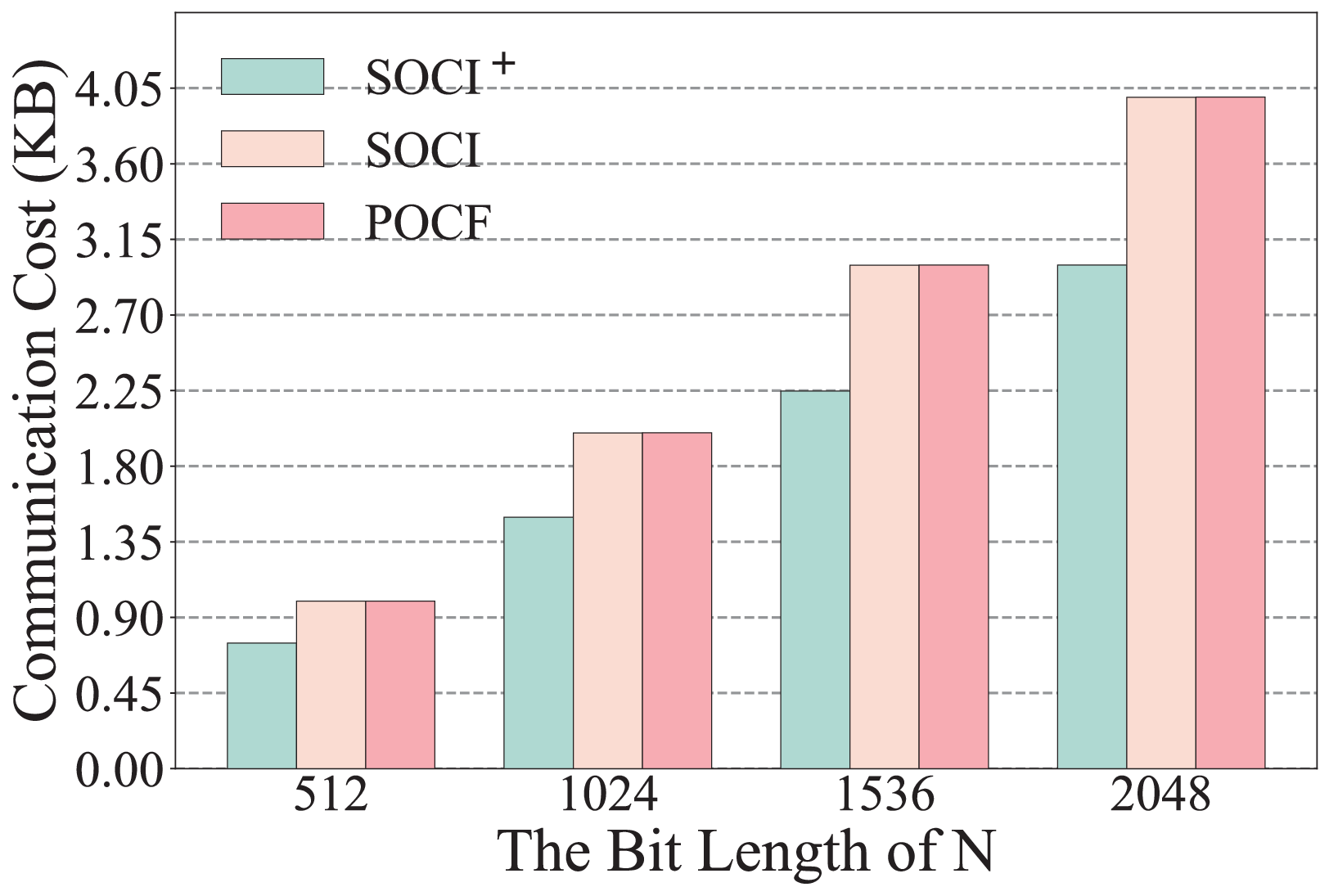}
  \label{fig:cost_SSBA}
  }\hfill
  \subfloat[Communication Costs of SDIV in Different Schemes]{
  \includegraphics[width=0.23\textwidth]{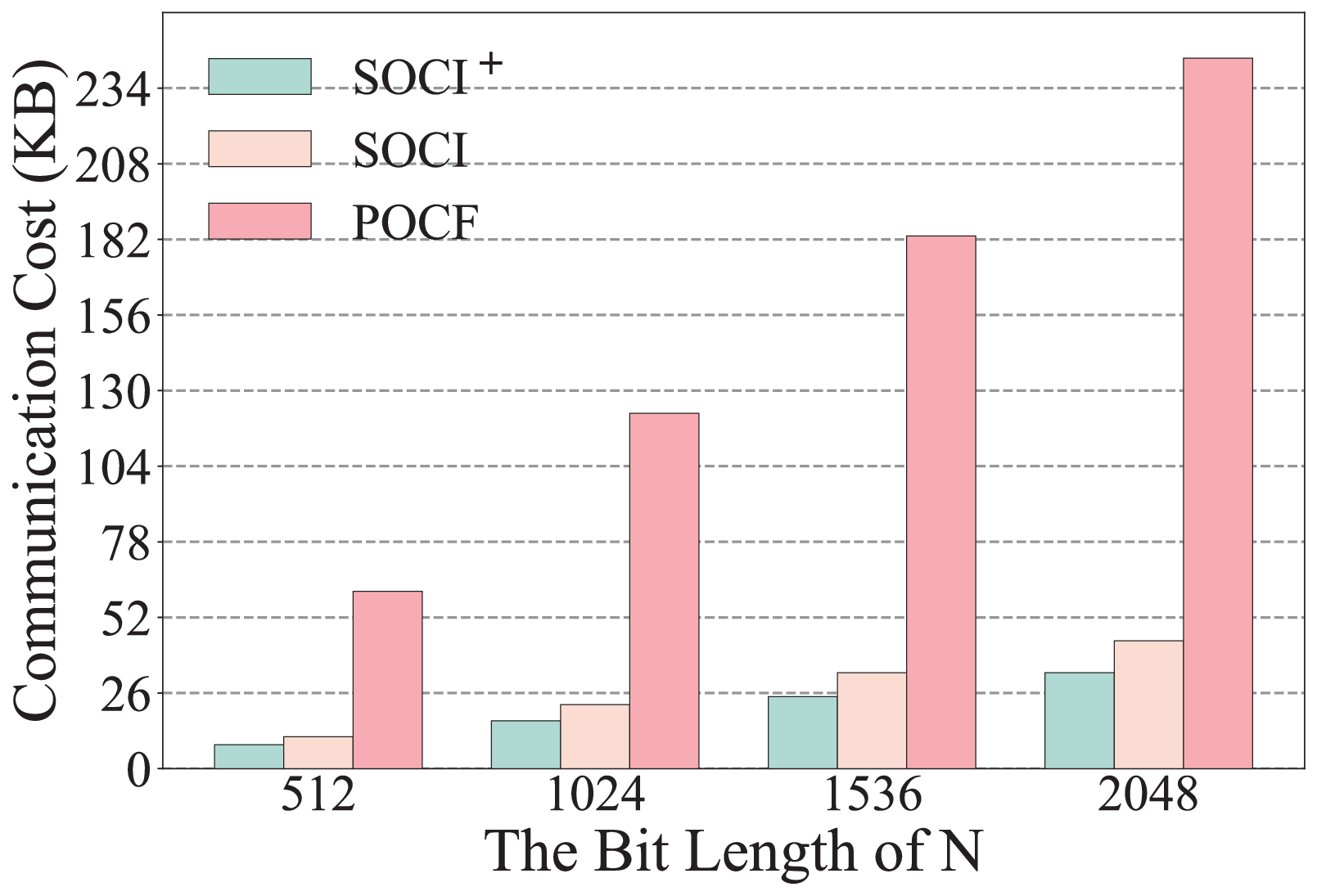}
  \label{fig:cost_SDIV}
  }
  \caption{Communication Costs Comparison of Different Schemes with a Varying Bit-Length of $N$}
  \label{Communication Cost of Different Schemes}
\end{figure*}

\begin{figure*}[!ht]
  \centering
  \subfloat[Running Time of SMUL in Different Schemes]{
  \includegraphics[width=0.23\textwidth]{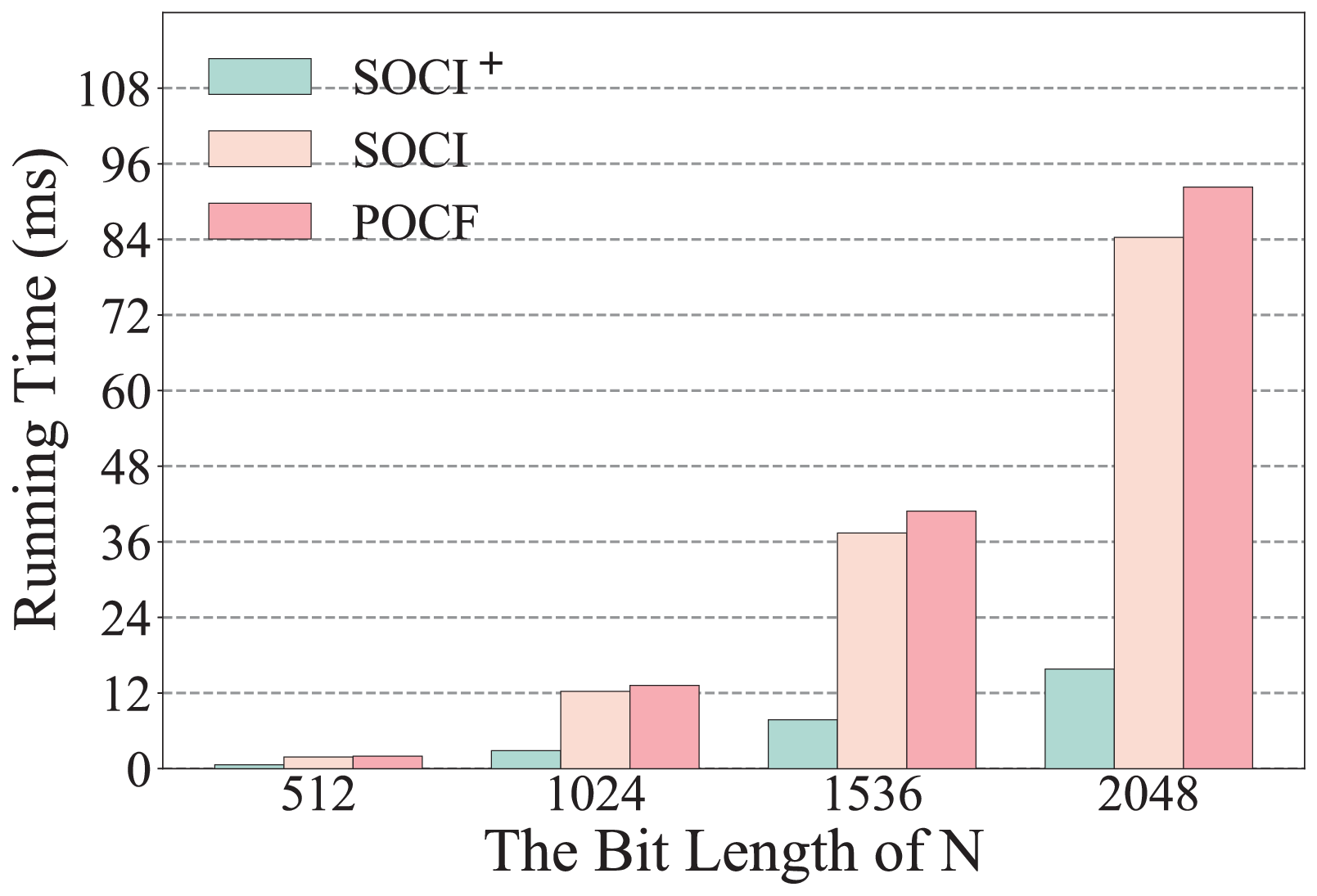}
  \label{fig:running_SMUL}
  }\hfill
  \subfloat[Running Time of SCMP in Different Schemes]{
  \includegraphics[width=0.23\textwidth]{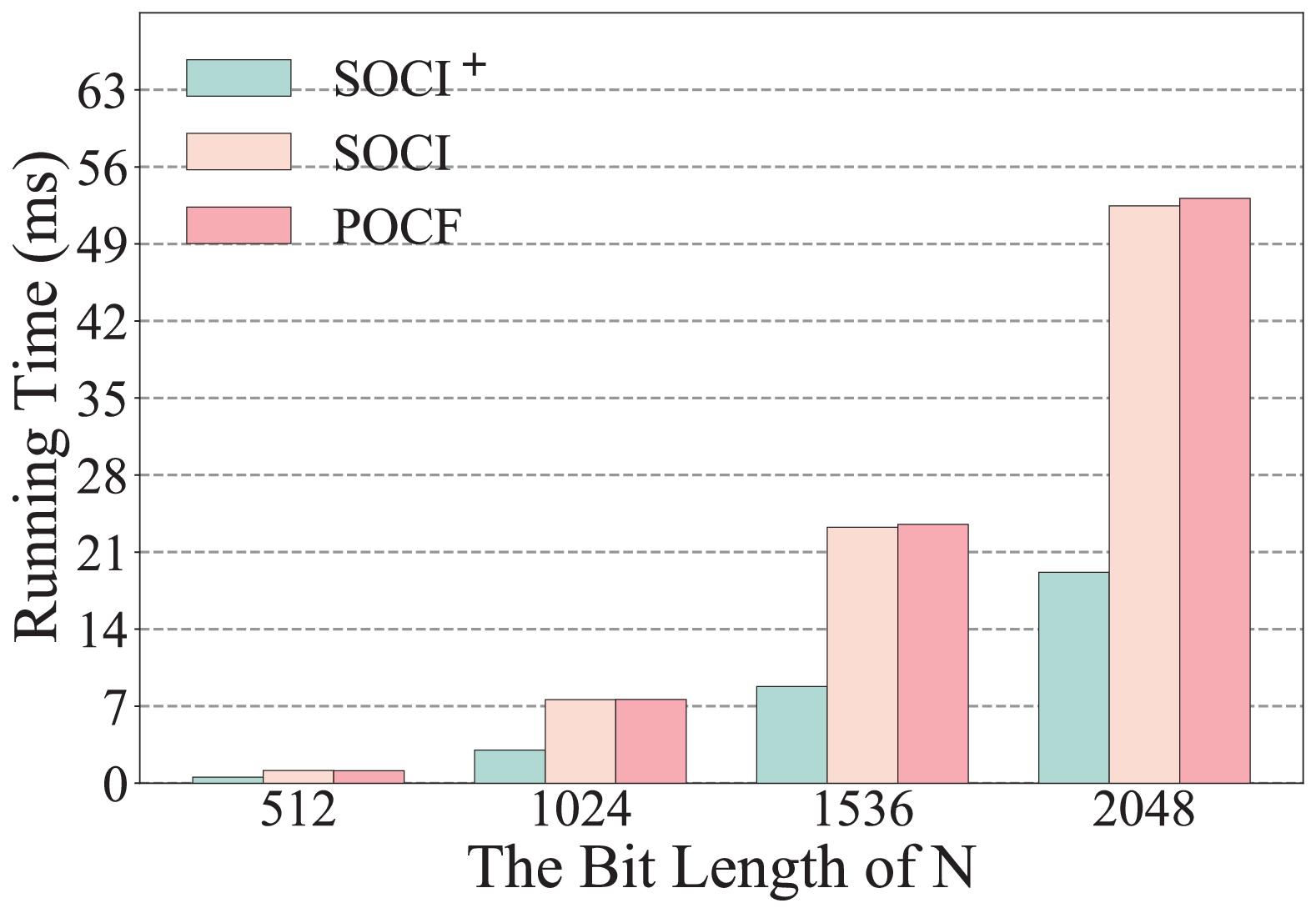}
  \label{fig:running_SCMP}
  }\hfill
  \subfloat[Running Time of SSBA in Different Schemes]{
  \includegraphics[width=0.23\textwidth]{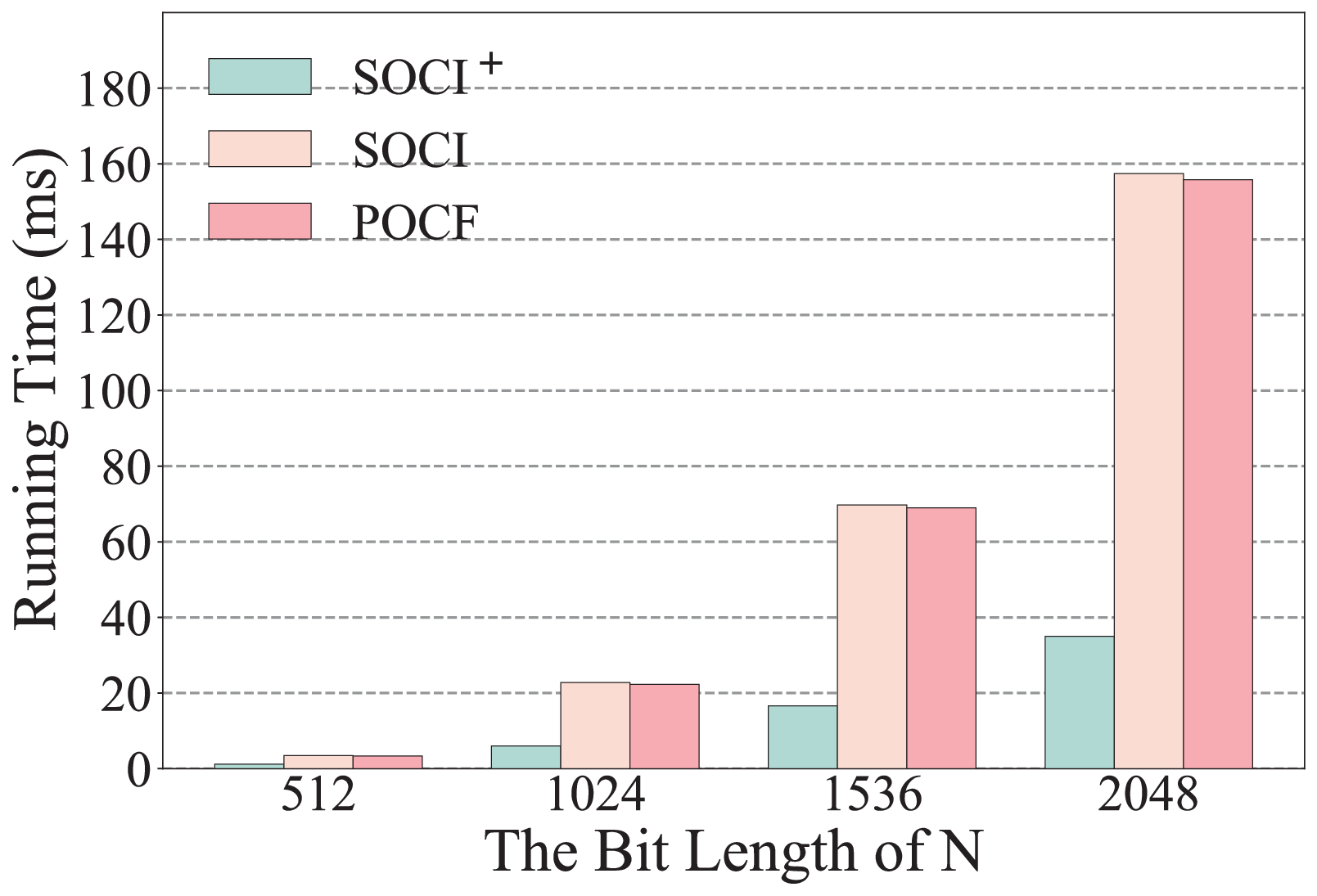}
  \label{fig:running_SSBA}
  }\hfill
  \subfloat[Running Time of SDIV in Different Schemes]{
  \includegraphics[width=0.23\textwidth]{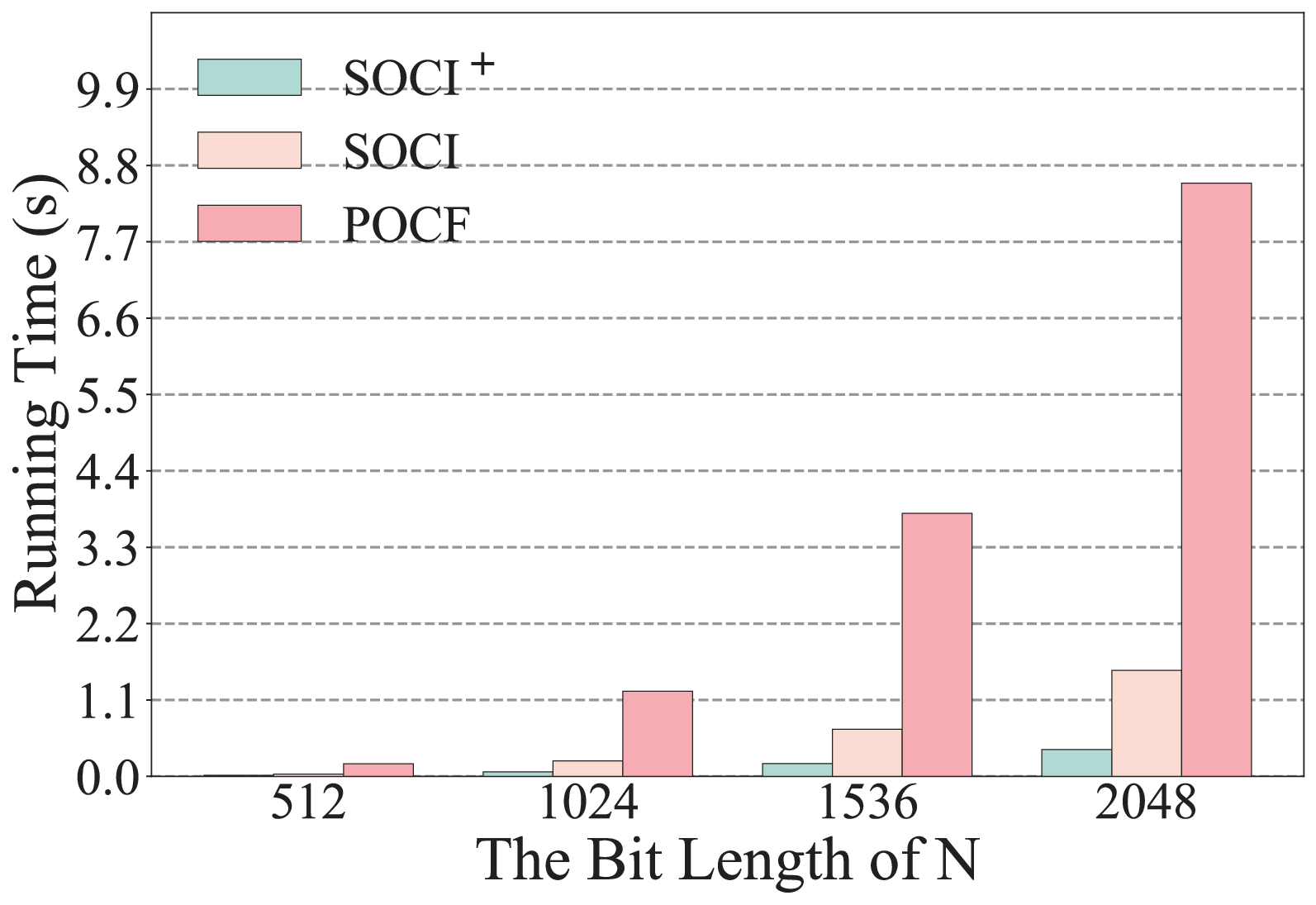}
  \label{fig:running_SDIV}
  }
  \caption{Running Time Comparison of Different Schemes with a Varying Bit-Length of $N$, assuming the Bandwidth is 100 Mbps}
  \label{Runing Time Cost of Different Schemes}
\end{figure*}

\section{Conclusion}\label{Section_8}
In this paper, we proposed SOCI\textsuperscript{+}, an enhanced toolkit for secure outsourced computation on integers. 
Specifically, we designed a novel $(2,2)$-threshold Paillier cryptosystem (FastPaiTD) falling in the twin-server architecture based on the scheme of Ma \textit{et al.} \cite{ma2021optimized} (FastPai).
Additionally, we proposed an offline and online mechanism for SOCI\textsuperscript{+}. Our FastPaiTD and offline and online mechanism significantly improve the performance of secure outsourced computation protocols. 
SOCI\textsuperscript{+} strictly outperforms the state-of-the-art in terms of computation costs and communication costs and is correct and secure. 
In the future work, we will upgrade SOCI\textsuperscript{+} to support floating point arithmetic and more types of secure outsourced computations.

% \section*{Acknowledgments}
% The authors thank the editor in chief, associate editor and anonymous reviewers for their constructive comments and suggestions.

\bibliographystyle{IEEEtran}
\normalem
\bibliography{soci_plus}

% \begin{IEEEbiography}[{\includegraphics[width=1in,height=1.25in,clip,keepaspectratio]{Author.jpg}}]{AuthorName}
% aa
% \end{IEEEbiography}

\begin{IEEEbiography}[{\includegraphics[width=1in,height=1.25in,clip,keepaspectratio]{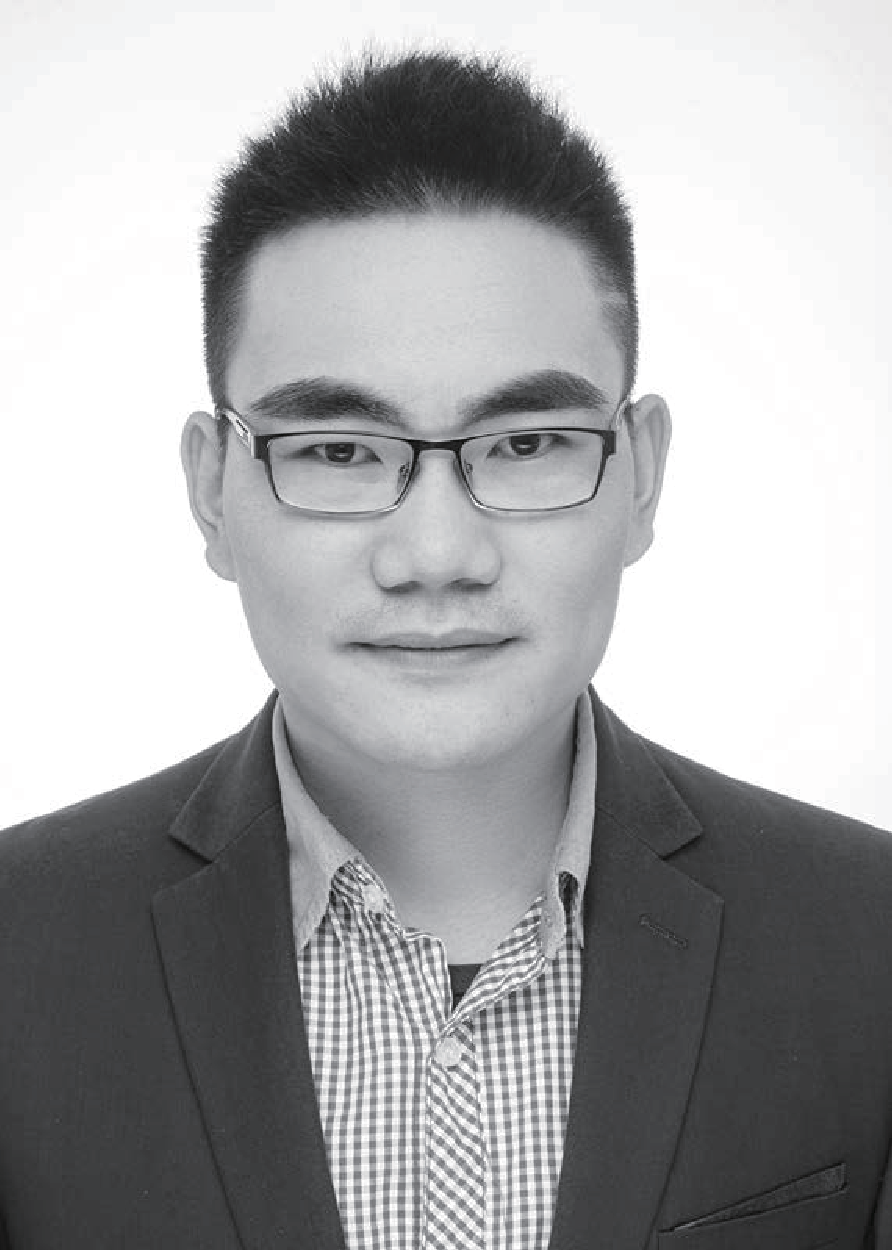}}]{Bowen Zhao}
(Member, IEEE) received the Ph.D. degree in cyberspace security from South China University of Technology, Guangzhou, China, in 2020. 

He was a Research Scientist with the School of Computing and Information Systems, Singapore Management University, from 2020 to 2021. He is currently an Associate Professor with Guangzhou Institute of Technology, Xidian University, Guangzhou. His current research interests include privacy-preserving computation and learning and privacy-preserving crowdsensing.
\end{IEEEbiography}

\begin{IEEEbiography}[{\includegraphics[width=1in,height=1.25in,clip,keepaspectratio]{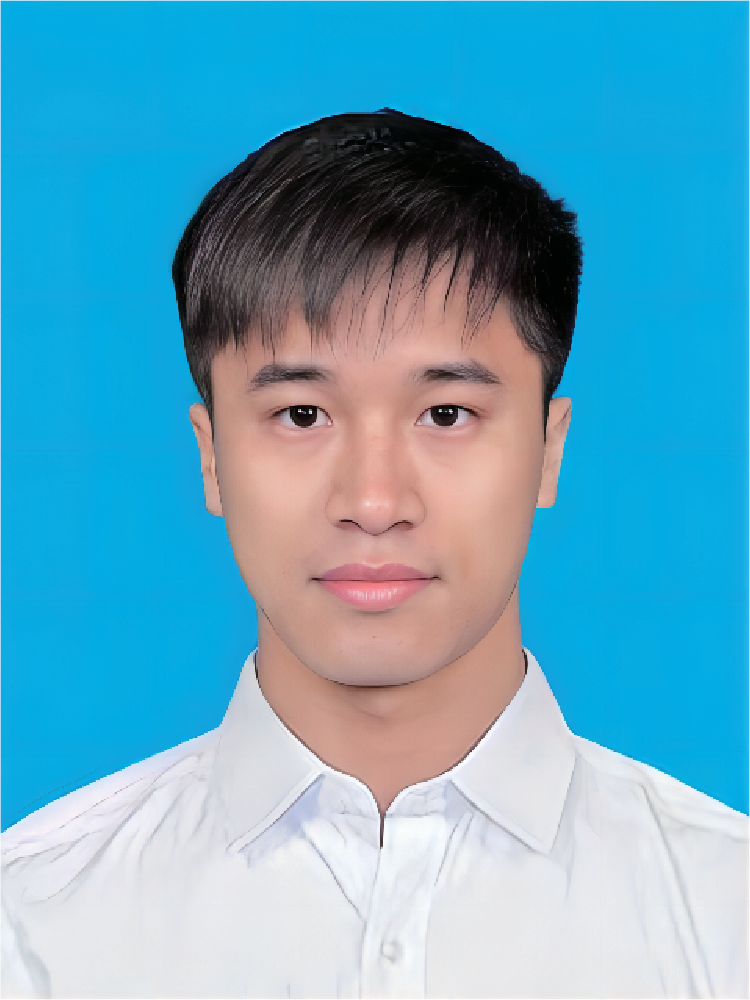}}]{Weiquan Deng}
received the B.S. degree in Information Security from Guangxi University, Nanning, China, in 2023. 

He is currently working toward the M.S. degree in Guangzhou Institute of Technology, Xidian University, Guangzhou. His research interest is privacy-preserving computation.
\end{IEEEbiography}

\begin{IEEEbiography}[{\includegraphics[width=1in,height=1.25in,clip,keepaspectratio]{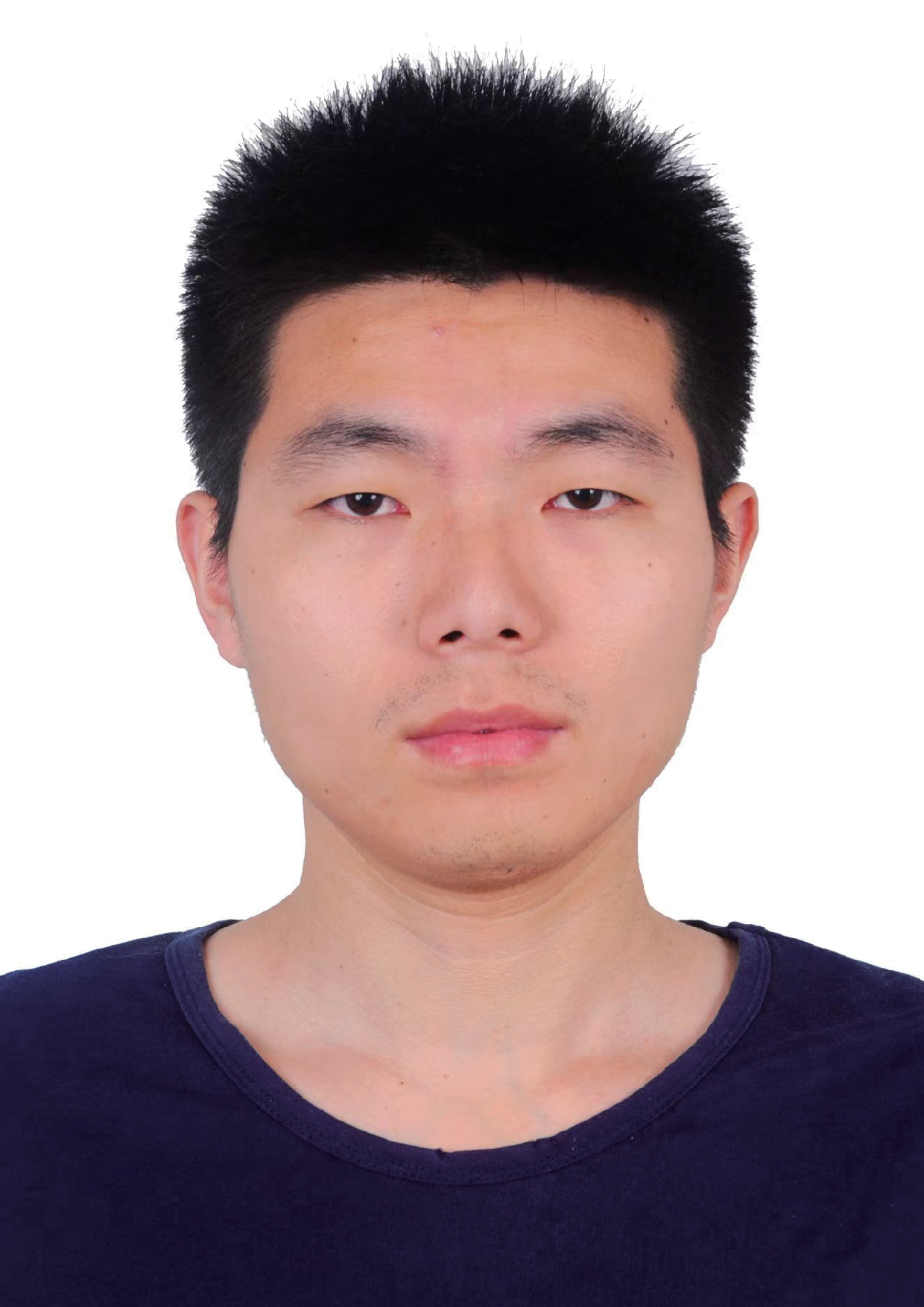}}]{Xiaoguo Li}
received his Ph.D. degree in computer science from Chongqing University, China, in 2019. 

He worked at Hong Kong Baptist University as a Postdoctoral Research Fellow from 2019-2021. He is currently a Research Fellow at Singapre Management University, Singapore. His current research interests include trusted computing, secure computation, and public-key cryptography.
\end{IEEEbiography}

\begin{IEEEbiography}[{\includegraphics[width=1in,height=1.25in,clip,keepaspectratio]{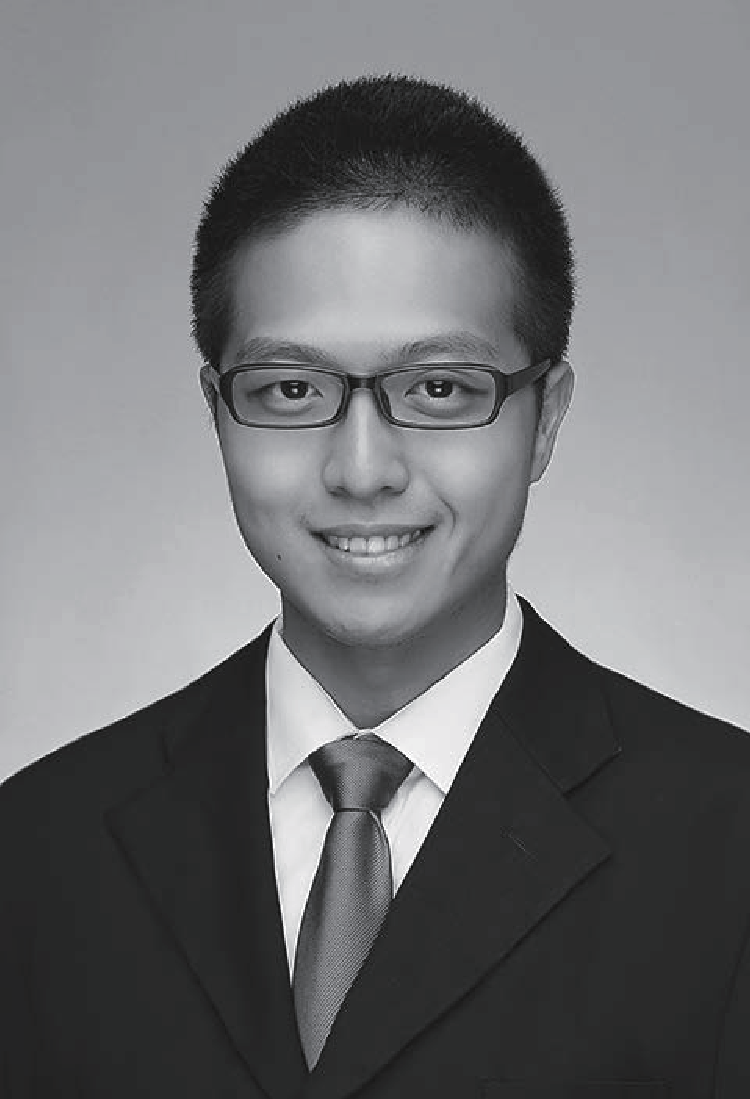}}]{Ximeng Liu}
    (Senior Member, IEEE) received the B.Sc. degree in electronic engineering and the Ph.D. degree in cryptography from Xidian University, Xi’an, China, in 2010 and 2015, respectively. 
    
    He is currently a Full Professor with the College of Computer Science and Data Science, Fuzhou University. He was a Research Fellow with Peng Cheng Laboratory, Shenzhen, China. He has published more than 200 papers on the topics of cloud security and Big Data security including papers in IEEE TOC, IEEE TII, IEEE TDSC, IEEE TSC, IEEE IoT Journal, etc. His research interests include cloud security, applied cryptography and Big Data security. 
    
    Dr. Liu received “Minjiang Scholars” Distinguished Professor Award, “Qishan Scholars” at Fuzhou University and ACM SIGSAC China Rising Star Award (2018).
\end{IEEEbiography}

\begin{IEEEbiography}[{\includegraphics[width=1in,height=1.25in,clip,keepaspectratio]{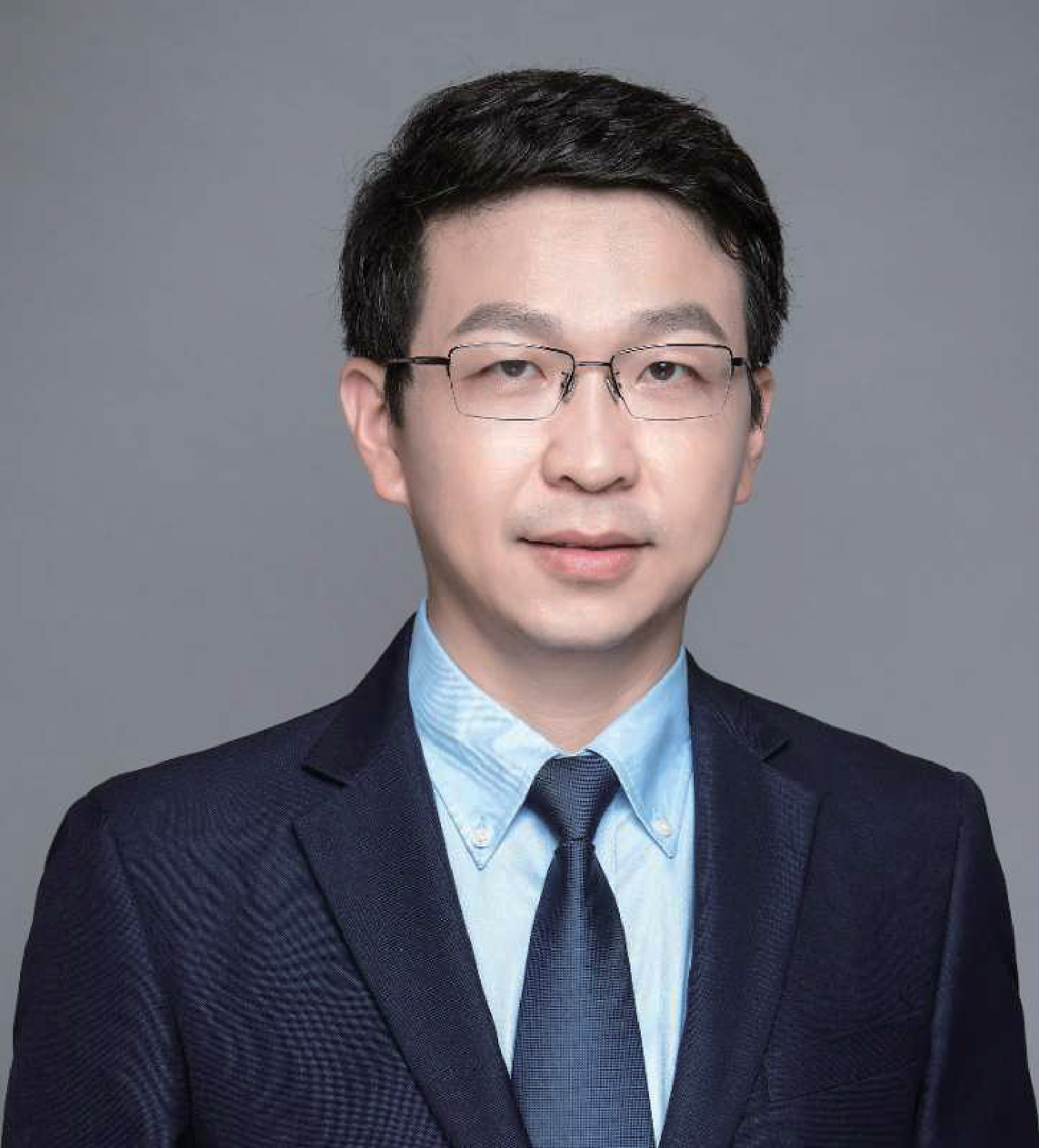}}]{Qingqi Pei}
    (Senior Member, IEEE) received the B.S.,
M.S., and Ph.D. degrees in computer science and
cryptography from Xidian University, Xi’an, China,
in 1998, 2005, and 2008, respectively.

He is currently a Professor and Member of the State
Key Laboratory of Integrated Services Networks, Xidian University. His research interests include digital
content protection and wireless network and security.

Dr. Pei is a Professional Member of the Association
for Computing Machinery and a Senior Member of
the Chinese Institute of Electronics and the China
Computer Federation.
\end{IEEEbiography}

\begin{IEEEbiography}[{\includegraphics[width=1in,height=1.25in,clip,keepaspectratio]{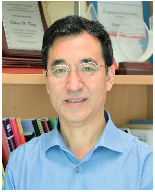}}]{Robert H. Deng}
(Fellow, IEEE) received the Ph.D. degree from the Illinois Institute of Technology, Chicago, IL, USA, in 1985. 

He is AXA Chair Professor of Cybersecurity, Director of the Secure Mobile Centre, School of Computing and Information Systems, Singapore Management University, Singapore. His research interests include applied cryptography, data security and privacy, and network security.
    
\end{IEEEbiography}

\end{document}